\newtheorem{theorem}{Theorem}[section]
\newtheorem{lemma}[theorem]{Lemma}
\newtheorem{proposition}[theorem]{Proposition}
\theoremstyle{definition}
\theoremstyle{remark}
\newtheorem{assumption}[theorem]{Assumption}
\newtheorem{remark}[theorem]{Remark}
\newtheorem{example}[theorem]{Example}
\numberwithin{equation}{section}
\newcommand{\conv}{\mathrm{conv}}
\newcommand{\cl}{\mathrm{cl}}
\newcommand{\ud}{\,\mathrm{d}}
\newcommand{\e}{\mathrm{e}}
\newcommand{\Exp}{\mathrm{Exp}}
\begin{document}
%
%
%
%
\title[Randomly terminating income]{Duality for optimal consumption
  with randomly terminating income}

\author[Ashley Davey]{Ashley Davey}

\address{Ashley Davey \\
Department of Mathematics \\
Imperial College London \\
London SW7 2BZ \\
UK}

\email{ashley.davey18@imperial.ac.uk}

\author[Michael Monoyios]{Michael Monoyios}

\address{Michael Monoyios \\
Mathematical Institute \\ 
University of Oxford \\
Radcliffe Observatory Quarter \\
Woodstock Road\\ 
Oxford OX2 6GG \\
UK}

\email{monoyios@maths.ox.ac.uk}

\author[Harry Zheng]{Harry Zheng}

\address{Harry Zheng \\
Department of Mathematics \\
Imperial College London \\
London SW7 2BZ \\
UK}

\email{h.zheng@imperial.ac.uk}

\date{\today}

\maketitle



{\centering\footnotesize Dedicated to the memory of Mark
H. A. Davis \par}

\begin{abstract}

  We establish a rigorous duality theory, under No Unbounded Profit
  with Bounded Risk, for an infinite horizon problem of optimal
  consumption in the presence of an income stream that can terminate
  randomly at an exponentially distributed time, independent of the
  asset prices. We thus close a duality gap encountered by \cite{dv09}
  in a version of this problem in a Black-Scholes market. Many of the
  classical tenets of duality theory hold, with the notable exception
  that marginal utility at zero initial wealth is \textit{finite}. We
  use as dual variables a class of supermartingale deflators such that
  deflated wealth plus cumulative deflated consumption in excess of
  income is a supermartingale. We show that the space of discounted
  local martingale deflators is dense in our dual domain, so that the
  dual problem can also be expressed as an infimum over the discounted
  local martingale deflators. We characterise the optimal wealth
  process, showing that optimal deflated wealth is a potential
  decaying to zero, while deflated wealth plus cumulative deflated
  consumption over income is a uniformly integrable martingale at the
  optimum. We apply the analysis to the \cite{dv09} example and give a
  numerical solution.

\end{abstract}

\maketitle



\noindent \textit{MSC 2010 subject classifications:} {93E20, 91G80,
  91G10, 49M29, 49J55, 49K45}

\vspace{0.2cm}

\noindent \textit{Keywords and phrases:} {Duality, utility from
consumption, portfolio optimisation, supermartingale deflator,
terminating income, HJB equation}


\section{Introduction}
\label{sec:intro}

In this paper we establish a duality theory for an infinite horizon
optimal consumption problem in which the agent also receives an income
stream which terminates at a random exponentially distributed time,
independent of the filtration governing the evolution of the asset
prices. Such a problem was considered by \cite{dv09} in a
Black-Scholes market. Here, we consider a general semimartingale
incomplete market, under the minimal no-arbitrage assumption of no
unbounded profit with bounded risk (NUPBR). We thus assume only the
existence of a suitable class of deflators, and use no arguments
involving equivalent local martingale measures (ELMMs). This is
natural under NUPBR, and also desirable in a perpetual model, since
ELMMs do not typically exist over infinite horizons.

In \cite{dv09}, despite the apparent simplicity of the modification of
the classical Merton problem with deterministic (or indeed no) income,
the problem proved a remarkably intractable one to solve and
understand. \cite{dv09} implemented a differential equation-based dual
approach, made an ansatz that the dual control was deterministic, and
used differential equation heuristics to argue that the derivative of
the value function at zero initial wealth was infinite. Ultimately,
though, they encountered a duality gap, in that their derived value
function could not solve the Hamilton-Jacobi-Bellman (HJB) equation,
indicating that the dual optimiser must indeed be stochastic and
state-dependent in some way. What is more, their numerical solutions
indicated that the value function derivative at zero initial wealth
was \textit{finite}. The intuition behind this last feature is clear:
even though the income can terminate very soon after time zero, it
cannot terminate immediately, except in the limiting case that the
intensity of the exponential time approaches infinity. Thus, the agent
is bound to receive some income, so is not infinitely penalised for
having zero initial capital, and marginal utility of wealth is finite
at this point.

The above issues make optimal consumption with randomly terminating
income an open and interesting problem. We provide a dual
characterisation of the problem in a general set-up, and close the
duality gap. We find that most of the usual tenets of duality theory
hold, with the exception that the marginal utility at zero initial
wealth is indeed finite. We also characterise the optimal wealth
process, and show that the optimal deflated wealth process is a
potential, decaying almost surely to zero, while deflated wealth plus
cumulative deflated consumption over income is a uniformly integrable
martingale at the optimum. Our results apply also to the finite
horizon version of the problem, both with and without a terminal
wealth objective, and we describe the minor adjustments needed to do
this in Section \ref{subsec:finiteh}.

Our analysis is based on characterising the dual domain by a
fundamental supermartingale property, using a class of deflators
(\textit{consumption deflators}) such that deflated wealth plus
cumulative deflated consumption in excess of income is a
supermartingale. The supermartingale property yields a budget
constraint as a necessary condition (Lemma \ref{lem:bc}) for
admissible consumption plans. Motivated by the budget constraint, we
assume a financing condition (Assumption \ref{ass:financing}) that
characterisies admissibility, so that the budget constraint is also a
sufficient condition for admissibility. This ensures that the primal
domain is closed in an appropriate topology.
We show that our dual domain is closed and coincides with the closure
(in an appropriate topology) of the discounted local martingale
deflators (LMDs). But we are able to avoid having to explicitly invoke
such a closure in defining our dual domain, which yields a stronger
duality statement. This is discussed in Remark \ref{rem:cdvslmd}.

The mathematical contribution of the paper is to show how the seminal
approach to convex duality for utility maximisation, as inspired by
\cite{ks99,ks03} and recently adapted to the no-income infinite
horizon consumption problem by \cite{mmc20}, can be adapted to a
problem with random endowment. The significant adaptations required by
the presence of the income are that, first, the bipolar theorem of
\cite{bs99} is unavailable, becuase the primal variable appearing in
the budget constraint is the difference between the consumption and
income rates, so is not guaranteed to be non-negative, precluding an
application of the bipolar theorem. Second, we do not enlarge the
original dual domain of deflators to encompass all processes dominated
by the deflators. This enlargement is typically carried out to reach
the bipolar of the original dual domain. Not only is this not of use
here, since the bipolar theorem is inapplicable, but the the presence
of the random endowment renders the dual objective non-monotone in the
dual variables, so the enlargement could result in the dual minimiser
not coinciding with the minimiser in the enlarged domain. This is
discussed in Remark \ref{rem:solid}.

The upshot of these features is that the important bipolarity result
\cite[Proposition 3.1]{ks99} that underpins classical duality has to
be replaced by a suitable analogue. This is provided by Proposition
\ref{prop:propertiesCD}, giving the key properties of the primal and
dual domains in the abstract version of the problem. In particular,
Proposition \ref{prop:propertiesCD} states that the dual domain is
closed, and bounded in $L^{1}(\mu)$ (with $\mu$ the measure in the
abstract formulation of the problem). In the terminal wealth problem
of \cite{ks99,ks03}, the dual domain being bounded in
$L^{1}(\mathbb{P})$ follows simply from the fact that the unit wealth
process is admissible. Here, the argument is more subtle, and requires
a strictly positive interest rate, which is natural in a perpetual
consumption problem with income.

With Proposition \ref{prop:propertiesCD} in place, some of the
classical steps to a duality theorem can be brought to bear on the
problem. We prove an abstract duality theorem (Theorem \ref{thm:adt}),
from which the concrete duality (Theorem \ref{thm:cwrtid}) follows,
including a characterisation of the optimal wealth process, as well as
the property of finite marginal utility at zero wealth. Note that
other papers on utility from consumption with random endowment are not
able to cover our model or to produce the corresponding results. In
some papers, such as \cite{kz03} (or \cite{csw01}, with a terminal
wealth objective), the models are finite horizon, under the No Free
Lunch with Vanishing Risk (NFLVR) no-arbitrage assumption, so heavily
reliant on ELMMs and the dual space is required to incorporate a
singular part, in the form of finitely additive measures, so a unique
dual optimiser of the form we obtain is elusive. In other papers, such
as \cite{hk04,most17,mostsirbu20}, finitely additive measures are
avoided by expanding the dimension of the value function to
incorporate an additional variable describing the number of units of
the random endowment. These works are again over a finite horizon
under NFLVR, so also reliant on ELMMs, and the expansion of dimension
of the value function renders them unable to yield the sharp
differentiability results we obtain for the value functions, only
super-and sub-differential results.  Some early papers on optimal
investment and consumption with random endowment, usually in a
Brownian filtration, adopt a HJB equation perspective, such as
\cite{dz93} and \cite{hp93}, and do not cover our general model.

Finally, we analyse  the \cite{dv09} example, exploring the
ramifications of duality, and numerically solving the problem. By
integrating over the distribution of the random income termination
time, we transform the pre-income termination problem to a stochastic
control problem with perpetual income in which utility is derived from
consumption and inter-temporal wealth, and numerically solve the HJB
equation for the resulting problem.



The remainder of the paper is structured as follows. In Section
\ref{sec:probform} we formulate the primal problem, budget constraint
and the dual problem and outline the \cite{dv09} example. In Section
\ref{sec:tmd} we state the main duality theorem (Theorem
\ref{thm:cwrtid}) and also Theorem \ref{thm:dense}, that the infima
over consumption deflators and discounted local martingale deflators
coincide. In Section \ref{sec:tad} we state key properties of the
abstract primal and dual domains (Proposition
\ref{prop:propertiesCD}), the abstract duality theorem (Theorem
\ref{thm:adt}) and the result (Proposition \ref{prop:dense}) which
underlies Theorem \ref{thm:dense}. The proofs of these results are
given in Section \ref{sec:pdts}. In Section \ref{sec:dve} we examine
the \cite{dv09} example. Section \ref{sec:conclusion} concludes.

\section{Problem formulation}
\label{sec:probform}

\subsection{The financial market}
\label{subsec:market}

We consider an infinite horizon investment-consumption problem in a
semimartingale incomplete market, and in the presence of an income
stream which terminates at an exponentially distributed time that is
independent of the filtration governing the asset prices. We have a
complete stochastic basis
$(\Omega,\mathcal{F},\mathbb{F}:=(\mathcal{F}_{t})_{t\geq 0},
\mathbb{P})$, with the filtration $\mathbb{F}$ satisfying the usual
hypotheses of right-continuity and augmentation with $\mathbb{P}$-null
sets of $\mathcal{F}$. The filtration $\mathbb{F}$ is given by
\begin{equation*}
\mathbb{F} := \mathbb{G}\vee\mathbb{F}^{N},  
\end{equation*}
where $\mathbb{G}$ is the ($\mathbb{P}$-augmentation) of the
filtration governing the evolution of a $d$-dimensional non-negative
semimartingale stock price vector $S=(S^{1},\ldots,S^{d})$ and
$\mathbb{F}^{N}$ is a filtration independent of $\mathbb{G}$, and is
the augmentation of the filtration generated by the c\`adl\`ag process
$N$, given by
\begin{equation}
N_{t} := \mathbbm{1}_{\{t<\tau\}}, \quad t\geq 0,   
\label{eq:N}
\end{equation}
where $\tau\sim\mathrm{Exp}(\eta)$ is an exponentially distributed
time with parameter $\eta\geq 0$, independent of $\mathbb{G}$. The
interest rate is a strictly positive $\mathbb{G}$-adapted process
$r=(r_{t})_{t\geq 0}$ satisfying
$\int_{0}^{t}r_{s}\ud s<\infty,\,t\geq 0$. We assume that the interest
rate is bounded below by a positive constant,
\begin{equation}
r_{t} \geq \underline{r} > 0, \quad \forall \, t\geq 0.
\label{eq:rbound} 
\end{equation}

An agent with initial capital $x\in\mathbb R_{+}$ trades the stocks
plus cash, consumes wealth at a non-negative adapted rate
$c=(c_{t})_{t\geq 0}$, and receives income at some non-negative
bounded adapted rate $f=(f_{t})_{t\geq 0}$. The consumption rate is
assumed to almost surely satisfy the minimal integrability condition
$\int_{0}^{t}c_{s}\ud s<\infty,\,t\geq 0$.

The income stream is specified as follows. With $a=(a_{t})_{t\geq 0}$
a non-negative $\mathbb{G}$-adapted process, bounded above by some
constant $\overline{a}>0$, we have a \textit{randomly terminating
  income stream}, with $f$ given by
\begin{equation}
f_{t} := a_{t}N_{t} = a_{t}\mathbbm{1}_{\{t<\tau\}}, \quad 0\leq
a_{t}\leq \overline{a}<\infty, \quad t\geq 0.
\label{eq:fbound}
\end{equation}
Thus, the stochastic income stream pays at the bounded
$\mathbb{G}$-adapted rate $a$ up to the random time $\tau$, at which
point it abruptly terminates. The random termination of the income
generates an additional source of market incompleteness above and
beyond any inherent incompleteness in the original market in the
absence of the income stream.

The agent's trading strategy $H=(H^{1},\ldots,H^{d})$ is a predictable
$S$-integrable process for the number of shares of each stock
held. The agent's wealth process, $X$, follows
\begin{equation}
\ud X_{t} = H_{t}\ud S_{t} + r_{t}\left(X_{t} - H_{t}S_{t}\right)\ud t
- c_{t}\ud t + f_{t}\ud t, \quad X_{0}=x>0, 
\label{eq:wealth}
\end{equation}
where for brevity we write
$H_{t}\ud S_{t}\equiv\sum_{i=1}^{d}H^{i}_{t}\ud S^{i}_{t}$ and
$H_{t}S_{t}\equiv\sum_{i=1}^{d}H^{i}_{t}S^{i}_{t}$.

For any process $P$, let
$\widetilde{P}:=\exp\left(-\int_{0}^{\cdot}r_{s}\ud s\right)P$ denote
its discounted incarnation. In terms of discounted quantities, the
wealth process has decomposition
\begin{equation}
\widetilde{X}=\widetilde{X}^{0}+F-C,  
\label{eq:wdecomp}
\end{equation}
where
\begin{equation}
\widetilde{X}^{0} := x + (H\cdot\widetilde{S})
\label{eq:sfw}  
\end{equation}
is the discounted wealth process of a self-financing portfolio
corresponding to strategy $H$, with
$(H\cdot\widetilde{S})\equiv\int_{0}^{\cdot}H_{s}\ud\widetilde{S}_{s}$
denoting the stochastic integral and
$C:=\int_{0}^{\cdot}\widetilde{c}_{s}\ud s,\,F
:=\int_{0}^{\cdot}\widetilde{f}_{s}\ud s$ denoting the non-decreasing
cumulative discounted consumption and income processes.

As is known from \cite{dv09}, because the income stream terminates
randomly, the agent is not able to follow the classical program of
borrowing against the present value of future income (so allowing
wealth to become negative) and using the optimal no-income strategy
with an initial wealth enlarged by the present value of future
income.
We shall therefore assume solvency at all
times, so $X\geq 0$ almost surely in \eqref{eq:wealth}. In this case,
for a given $x>0$, we call the pair $(H,c)$ (or $(X,c)$) an
\textit{$x$-admissible investment-consumption strategy}. Denote the
$x$-admissible investment-consumption strategies by $\mathcal{H}(x)$:
\begin{equation}
\mathcal{H}(x) := \left\{(H,c):\widetilde{X}:=x+(H\cdot\widetilde{S}) +
\int_{0}^{\cdot}(\widetilde{f}_{s} - \widetilde{c}_{s})\ud s\geq
0,\,\mbox{a.s}\right\}, \quad x>0. 
\label{eq:Hx}
\end{equation}
With an abuse of terminology we shall sometimes refer to the
wealth-consumption pair $(X,c)$ as an admissible
investment-consumption pair, and we shall sometimes write
$(X,c)\in\mathcal{H}(x)$ in place of $(H,c)\in\mathcal{H}(x)$. For
$x=1$, we write $\mathcal{H}\equiv\mathcal{H}(1)$.

If, for a consumption process $c$ we can find a predictable
$S$-integrable process $H$ such that $(H,c)\in\mathcal{H}(x)$ is an
$x$-admissible investment-consumption strategy, then we say that $c$
is an \textit{$x$-admissible consumption process} or, briefly, an
admissible consumption plan. Denote the set of $x$-admissible
consumption plans by $\mathcal{A}(x)$:
\begin{equation}
\mathcal{A}(x) := \left\{c\geq 0:\exists\, H \, \mbox{such
that} \, \widetilde{X}:=x+(H\cdot\widetilde{S}) +
\int_{0}^{\cdot}(\widetilde{f}_{s} - \widetilde{c}_{s})\ud s\geq
0,\,\mbox{a.s}\right\}, \quad x>0. 
\label{eq:Ax}
\end{equation}
For $x=1$ we write $\mathcal{A}\equiv\mathcal{A}(1)$.  It is easy to
verify that $\mathcal{A}(x)$ is a convex set.

For $c\equiv f\equiv 0$, the wealth process is that of a
self-financing portfolio, with discounted wealth process
$\widetilde{X}^{0}$ as in \eqref{eq:sfw}. Define $\mathcal{X}(x)$ as
the set of almost surely non-negative self-financing wealth processes
with initial value $x>0$:
\begin{equation*}
\mathcal{X}(x) := \left\{X^{0}: \widetilde{X}^{0} = x +
(H\cdot\widetilde{S})\geq 0,\,\mbox{a.s.}\right\}, \quad x>0. 
\end{equation*}
We write $\mathcal{X}\equiv\mathcal{X}(1)$, with
$\mathcal{X}(x)=x\mathcal{X}$ for $x>0$, and we note that
$\mathcal{X}$ is a convex set.

Given the wealth decomposition in \eqref{eq:wdecomp}, an equivalent
characterisation of the admissible consumption plans is that there
exists a self-financing wealth process such that its discounted
version plus cumulative discounted income dominates cumulative
discounted consumption.

\subsection{The primal problem}
\label{subsec:primal}

Let $U:\mathbb{R}_{+}\to\mathbb{R}$ be a utility function, strictly
concave, strictly increasing, continuously differentiable on
$\mathbb{R}_{+}$ and satisfying the Inada conditions
\begin{equation}
\lim_{x\downarrow 0}U^{\prime}(x) = +\infty, \quad
\lim_{x\to\infty}U^{\prime}(x) = 0.
\label{eq:inada}
\end{equation}

Let $\delta>0$ be an impatience factor for consumption. The
optimisation problem we study is to maximise expected utility from
consumption over the infinite horizon in the presence of the
terminating income stream. The primal value function
$u:\mathbb R_{+}\to\mathbb R$ is defined by
\begin{equation}
u(x) =  \sup_{c\in\mathcal{A}(x)}\mathbb{E}\left[\int_{0}^{\infty}
U(c_{t})\ud\kappa_{t}\right], \quad x>0,
\label{eq:pvf} 
\end{equation}
where $\kappa:\mathbb{R}_{+}\to\mathbb{R}_{+}$ is discounted Lebesgue
measure, given by
\begin{equation}
\kappa_{0}=0, \quad \ud\kappa_{t}=\e^{-\delta t}\ud t.
\label{eq:kappa}  
\end{equation}
For later use, define the positive process
$\zeta=(\zeta_{t})_{t\geq 0}$ as the reciprocal of
$(\ud\kappa_{t}/\ud t)_{t\geq 0}$:
\begin{equation}
\zeta_{t} := \left(\frac{\ud\kappa_{t}}{\ud t}\right)^{-1} =
\exp\left(\delta t\right), \quad t\geq 0.
\label{eq:alpha}
\end{equation}

Our goal is to develop a rigorous dual characterisation of the problem
in \eqref{eq:pvf}.

\subsection{Deflators and the budget constraint}
\label{subsec:dbc}

With the one-jump process $N$ in \eqref{eq:N} we associate the
non-negative c\`adl\`ag $(\mathbb{P},\mathbb{F})$-martingale $M$,
defined by
\begin{equation}
M_{t} := N_{t} + \eta\int_{0}^{t}N_{s}\ud s, \quad t\geq 0.
\label{eq:M}
\end{equation}

Let $Z^{\mathbb{G}}$ be a local martingale deflator for the asset
market in the absence of the income stream. Denote the set of such
deflators by $\mathcal{Z}^{\mathbb{G}}$. Each
$Z^{\mathbb{G}}\in \mathcal{Z}^{\mathbb{G}}$ is a positive local
martingale with unit initial value such that deflated discounted
self-financing wealth $Z^{\mathbb{G}} \widetilde{X}^{0}$ is a local
martingale, for each $X^{0}\in\mathcal{X}$.

Now incorporate the income stream, and let $\mathcal{Z}$ denote the
set of local martingale deflators for the enlarged market, such that
deflated discounted self-financing wealth is a local martingale. The
set $\mathcal{Z}$ is then composed of positive local martingales $Z$
given by
\begin{equation}
Z :=  Z^{\mathbb{G}}\mathcal{E}(-\gamma\cdot M),
\label{eq:Z}  
\end{equation}
where $\mathcal{E}(\cdot)$ denotes the stochastic exponential, for
c\`agl\`ad adapted processes $\gamma$ satisfying
$\gamma>-1,\,\gamma<+\infty$ almost surely. We note that as long as
$\gamma<+\infty$, $\Gamma:=\mathcal{E}(-\gamma\cdot M)$ is a
martingale.

The multiplicity of processes $Z\in\mathcal{Z}$ is the manifestation
of the market incompleteness induced both by the inherent incompleteness
of the asset market in the absence of the income, and also by the
presence of the randomly terminating income. The latter source of
incompleteness means that there is a multiplicity of integrands
$\gamma$ in \eqref{eq:Z}, as well as many processes
$Z^{\mathbb{G}}\in \mathcal{Z}^{\mathbb{G}}$.

For each $Z\in\mathcal{Z}$ and $X^{0}\in\mathcal{X}$, the process
$\widetilde{X}^{0}Z$ is a local martingale (and also a
super-martingale), so the (convex) set $\mathcal{Z}$ is defined by
\begin{equation}
\mathcal{Z} := \left\{Z>0,\,\mbox{c\`adl\`ag},\, Z_{0}=1:
\mbox{$\widetilde{X}^{0}Z$ is a local martingale, for all
$X^{0}\in\mathcal{X}$}\right\}. 
\label{eq:mcZ}
\end{equation}

For each $Z\in\mathcal{Z}$, we may define an associated
supermartingale deflator as the discounted martingale deflator
$\widetilde{Z}:=\exp\left(-\int_{0}^{\cdot}r_{s}\ud
  s\right)Z,\,Z\in\mathcal{Z}$, and we denote the set of such
supermartingales with initial value $y>0$ by
$\widetilde{\mathcal{Y}}(y)$:
\begin{equation}
\widetilde{\mathcal{Y}}(y) := \left\lbrace Y:
\mbox{$Y=y\widetilde{Z}=y\exp\left(-\int_{0}^{\cdot}r_{s}\ud
s\right)Z$, for $Z\in\mathcal{Z}$}\right\rbrace, \quad y>0. 
\label{eq:dlmd}
\end{equation}
We write $\widetilde{\mathcal{Y}}\equiv\widetilde{\mathcal{Y}}(1)$,
with $\widetilde{\mathcal{Y}}(y)=y\widetilde{\mathcal{Y}}$ for $y>0$,
and $\widetilde{\mathcal{Y}}$ is convex, inheriting this property from
$\mathcal{Z}$.

Define a further set $\mathcal{Y}^{0}(y)$ of supermartingale deflators
with initial value $y>0$ by
\begin{equation}
\mathcal{Y}^{0}(y) := \left\{Y>0,\,\mbox{c\`adl\`ag},\, Y_{0}=y:
\mbox{$X^{0}Y$ is a supermartingale, for all
$X^{0}\in\mathcal{X}$}\right\}, \quad y>0.
\label{eq:smd}
\end{equation}
As before, we write $\mathcal{Y}^{0}\equiv\mathcal{Y}^{0}(1)$, with
$\mathcal{Y}^{0}(y)=y\mathcal{Y}^{0}$ for $y>0$. Clearly, the set
$\mathcal{Y}^{0}$ is convex, and it includes all the processes in the
set $\widetilde{\mathcal{Y}}$ of discounted local martingale
deflators, but may include other processes. Since $X^{0}\equiv 1$ lies
in $\mathcal{X}$ (one may choose to hold initial wealth without
investing in stocks or the cash account), each $Y\in\mathcal{Y}^{0}$
is a supermartingale. We shall refer to $\mathcal{Y}^{0}$ as the set
of supermartingale deflators, or as the set of wealth deflators. The
processes $Y\in\mathcal{Y}^{0}$ correspond to the classical deflators
defined by \cite{ks99,ks03} in their seminal treatment of the terminal
wealth utility maximisation problem.  We thus have the inclusion
\begin{equation}
\mathcal{Y}^{0} \supseteq \widetilde{\mathcal{Y}}.
\label{eq:inclusion1}  
\end{equation}

The no-arbitrage assumption implicit in our model,
\begin{equation}
\mathcal{Y}^{0}(y) \neq \emptyset,
\label{eq:noarb}  
\end{equation}
is tantamount to the NUPBR
condition (see \cite{kk07}). We shall work only with deflators, and
will not require any arguments involving ELMMs. There are good reasons to do this. One is
that ELMMs will typically not exist over the infinite horizon, even if local martingale deflators are martingales, because these
martingales will  not be uniformly integrable when considered
over an infinite horizon (this is true even in the Black-Scholes
model, see  \cite[Section
1.7]{ks98}). It is now well accepted, since the work of
\cite{kk07} (and was implicit in the seminal work of \cite{klsx91},
where ELMMs were not used) that the key ingredient for well-posed
utility maximisation problems is the existence of a suitable class of
deflators which act on primal variables to create supermartingales,
and our approach is in this spirit. 

\subsubsection{Consumption deflators}
\label{subsubsec:consdefl}

We define the space which will form the dual domain for the
consumption problem \eqref{eq:pvf}.

For $x>0$, let $(X,c)\in\mathcal{H}(x)$ be an admissible
wealth-consumption pair, as defined in \eqref{eq:Hx}. Then
$c\in\mathcal{A}(x)$ is an admissible consumption process, as defined
in \eqref{eq:Ax}. The dual domain for the consumption problem
\eqref{eq:pvf} is defined by
\begin{equation}
\mathcal{Y}(y) := \left\{Y>0,\, \mbox{c\`adl\`ag},\, Y_{0}=y:
\mbox{$XY+\int_{0}(c_{s}-f_{s})Y_{s}\ud s$
is a supermartingale, $\forall\,c\in\mathcal{A}(x)$}\right\}.
\label{eq:mcY}  
\end{equation}
As usual, we write $\mathcal{Y}\equiv\mathcal{Y}(1)$ and we have
$\mathcal{Y}(y)=y\mathcal{Y}$ for $y>0$. The set $\mathcal{Y}$ is
easily seen to be convex. We shall refer to the processes in the dual
domain $\mathcal{Y}(y)$ as \textit{consumption deflators} (or, simply,
as deflators, when no confusion arises) to distinguish them from the
corresponding wealth deflators $Y\in\mathcal{Y}^{0}(y)$ as defined in
\eqref{eq:smd}, when the consumption and income processes are absent
(or are equal, so cancel out).

In \eqref{eq:mcY}, note that the wealth process $X$ is the one on the
left-hand-side of \eqref{eq:wealth}, so incorporating consumption and
income. Since $(X,c)\equiv(1,f)$ is an admissible
consumption-investment pair, each $Y\in\mathcal{Y}(y)$ is a
supermartingale. In particular, since $c\equiv f$ is an admissible
consumption plan, and noting the decomposition in \eqref{eq:wdecomp},
the resulting wealth process $X=X^{0}$ is then self-financing. In this
case we have that $XY=X^{0}Y$ is a supermartingale, so that
$Y\in\mathcal{Y}(y)$ is also a wealth deflator as defined in
\eqref{eq:smd}. In other words, we have the inclusion
\begin{equation}
\mathcal{Y} \subseteq \mathcal{Y}^{0}.
\label{eq:inclusion}
\end{equation}
The dual domain $\mathcal{Y}(y),\,y>0$ for our utility maximisation
problem \eqref{eq:pvf} from inter-temporal consumption in the presence
of the income stream is thus a specialisation of the one used by
\cite{ks99,ks03} for the terminal wealth problem.\footnote{As pointed
  out by an anonymous referee, it may even be the case that
  $\mathcal{Y}=\mathcal{Y}^{0}$. We have not been able to establish or
  refute this claim, which seems to be an interesting topic for a
  future research note.}

\subsubsection{The budget constraint}
\label{subsubsec:bc}

The key to identifying the dual problem to \eqref{eq:pvf} is a
suitable budget constraint involving admissible consumption plans and
some class of deflators. We show that such a budget constraint with
consumption deflators, due to the defining supermnartingale property
in \eqref{eq:mcY}. We later show that the same supermartingale
property also holds with discounted local martingale deflators
$Y\in\widetilde{\mathcal{Y}}$, so that the set of consumption
deflators includes the set of discounted local martingale deflators.

In what follows we assume that
$\mathbb{E}\left[\int_{0}^{\infty}f_{t}Y_{t}\ud
  t\right]<\infty,\,\forall\,Y\in\mathcal{Y}$, that is, cumulative
deflated income is integrable, and we note that this condition is
indeed true, as we establish it later in Lemma \ref{lem:Dbounded}.

\begin{lemma}[Budget constraint]
\label{lem:bc}

For $x,y>0$, let $c\in\mathcal{A}(x)$ and $Y\in\mathcal{Y}(y)$. We then
have the budget constraint
\begin{equation}
\mathbb{E}\left[\int_{0}^{\infty}(c_{t}-f_{t})Y_{t}\ud t\right]
\leq xy, \quad x,y>0.  
\label{eq:bc}
\end{equation}  

\end{lemma}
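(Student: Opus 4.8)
The plan is to read the inequality straight off the defining supermartingale property of the dual domain \eqref{eq:mcY}, evaluate it at a finite horizon, discard a non-negative term, and then let the horizon tend to infinity.

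First I would fix $x,y>0$, pick a trading strategy $H$ with $(H,c)\in\mathcal{H}(x)$ for the given $c\in\mathcal{A}(x)$ (such $H$ exists by the definition \eqref{eq:Ax} of $\mathcal{A}(x)$), and denote by $X$ the associated wealth process of \eqref{eq:wealth}, so that $X_{0}=x$ and $X\geq 0$. Take $Y\in\mathcal{Y}(y)$. By the definition \eqref{eq:mcY} of $\mathcal{Y}(y)$, the process
\[
L_{t} := X_{t}Y_{t} + \int_{0}^{t}(c_{s}-f_{s})Y_{s}\ud s, \qquad t\geq 0,
\]
is a c\`adl\`ag supermartingale, and $L_{0}=X_{0}Y_{0}=xy$ since $X_{0}$ and $Y_{0}$ are deterministic. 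Hence $\E[L_{T}]\leq L_{0}=xy$ for every finite $T$. Since $X\geq 0$ and $Y>0$, the term $X_{T}Y_{T}$ is non-negative, so dropping it gives
\[
\E\!\left[\int_{0}^{T}(c_{s}-f_{s})Y_{s}\ud s\right]\leq xy, \qquad \text{for all } T\geq 0.
\]
To see that this expectation is well defined and the step legitimate, I would note that $0\leq\int_{0}^{T}f_{s}Y_{s}\ud s\leq\int_{0}^{\infty}f_{s}Y_{s}\ud s$ has finite expectation by the standing integrability hypothesis on deflated cumulative income, which is established independently in Lemma \ref{lem:Dbounded}; hence the negative part of $\int_{0}^{T}(c_{s}-f_{s})Y_{s}\ud s$ is integrable, while its positive part is dominated by $L_{T}^{+}$.

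To finish I would let $T\uparrow\infty$. Since $(c_{s}-f_{s})Y_{s}$ is not sign-definite, a direct appeal to monotone convergence is unavailable, so I would split off the non-negative integrand $c_{s}Y_{s}$: the last display together with $\E[\int_{0}^{\infty}f_{s}Y_{s}\ud s]<\infty$ yields $\E[\int_{0}^{T}c_{s}Y_{s}\ud s]\leq xy+\E[\int_{0}^{\infty}f_{s}Y_{s}\ud s]$ uniformly in $T$, and monotone convergence then gives $\E[\int_{0}^{\infty}c_{s}Y_{s}\ud s]<\infty$. With $\E[\int_{0}^{\infty}c_{s}Y_{s}\ud s]$ and $\E[\int_{0}^{\infty}f_{s}Y_{s}\ud s]$ both finite, I can recombine the two pieces and pass the limit through the expectation,
\[
\E\!\left[\int_{0}^{\infty}(c_{s}-f_{s})Y_{s}\ud s\right]=\lim_{T\to\infty}\E\!\left[\int_{0}^{T}(c_{s}-f_{s})Y_{s}\ud s\right]\leq xy,
\]
which is \eqref{eq:bc}.

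The one genuinely delicate step is this passage to the infinite horizon: one needs the integrability of deflated cumulative income secured beforehand so that the sign-indefinite integrand can be handled by splitting and (monotone/dominated) convergence. Everything else follows immediately from the supermartingale property \eqref{eq:mcY} and the non-negativity of admissible wealth; notably, no martingale or local-martingale property, and no equivalent local martingale measure, enters the argument.
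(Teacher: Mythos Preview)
Your argument is correct and matches the paper's proof essentially line for line: use the defining supermartingale property in \eqref{eq:mcY}, drop the non-negative term $X_{T}Y_{T}$, split the integrand into the non-negative pieces $c_{s}Y_{s}$ and $f_{s}Y_{s}$, and apply monotone convergence (with the integrability of deflated cumulative income supplied by Lemma \ref{lem:Dbounded}) to pass to the infinite horizon. The only cosmetic difference is that the paper performs the split before writing the finite-horizon inequality, whereas you split afterwards; your more explicit justification of the limit passage is a welcome elaboration of the same idea.
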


\begin{proof}

The supermartingale property in \eqref{eq:mcY} and the
non-negativity of $XY$ imply that 
\begin{equation*}
\mathbb{E}\left[\int_{0}^{t}c_{s}Y_{s}\ud s\right] \leq xy +
\mathbb{E}\left[\int_{0}^{t}f_{s}Y_{s}\ud s\right], \quad t\geq 0.
\end{equation*}
Letting $t\uparrow\infty$, using monotone convergence and
re-arranging, we obtain \eqref{eq:bc}.
  
\end{proof}

The set of consumption deflators includes the set of local martingale
deflators, since the supermartingale property in \eqref{eq:mcY} in
fact holds with discounted local martingale deflators, as we now show,
provided we assume that
$\mathbb{E}\left[\int_{0}^{\infty}f_{t}Y_{t}\ud
t\right]<\infty,\,\forall\,Y\in\widetilde{\mathcal{Y}}$.

\begin{lemma}[Budget constraint with discounted local martingale
deflators]
\label{lem:bcdlmd}

Let $x,y>0$. For any $c\in\mathcal{A}(x)$ and
$Y\in\widetilde{\mathcal{Y}}(y)$ we have the budget constraint
\begin{equation}
\mathbb{E}\left[\int_{0}^{\infty}(c_{t}-f_{t})Y_{t}\ud t\right]
\leq xy, \quad x,y>0.  
\label{eq:bcdlmd}
\end{equation}  

\end{lemma}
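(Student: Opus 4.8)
The plan is to establish the inclusion $\widetilde{\mathcal{Y}}(y)\subseteq\mathcal{Y}(y)$: once every discounted local martingale deflator is known to be a consumption deflator, the budget constraint \eqref{eq:bcdlmd} follows verbatim from the proof of Lemma \ref{lem:bc}, using the defining supermartingale property \eqref{eq:mcY}, monotone convergence, and the finiteness of $\mathbb{E}[\int_{0}^{\infty}f_{t}Y_{t}\ud t]$ for $Y\in\widetilde{\mathcal{Y}}$ (the standing assumption recorded above the statement, which is verified later in Lemma \ref{lem:Dbounded}). Hence the substance of the argument is to verify, for $Y=y\widetilde{Z}=y\exp(-\int_{0}^{\cdot}r_{s}\ud s)Z$ with $Z\in\mathcal{Z}$ and any admissible pair $(X,c)\in\mathcal{H}(x)$, that $XY+\int_{0}^{\cdot}(c_{s}-f_{s})Y_{s}\ud s$ is a supermartingale.

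First I would pass to discounted quantities. Using $X\widetilde{Z}=\widetilde{X}Z$, $(c_{s}-f_{s})Y_{s}=y(\widetilde{c}_{s}-\widetilde{f}_{s})Z_{s}$, and the decomposition \eqref{eq:wdecomp}, one gets $XY+\int_{0}^{\cdot}(c_{s}-f_{s})Y_{s}\ud s=yL$, where $L:=\widetilde{X}Z+\int_{0}^{\cdot}(\widetilde{c}_{s}-\widetilde{f}_{s})Z_{s}\ud s$. Writing $A:=\int_{0}^{\cdot}(\widetilde{c}_{s}-\widetilde{f}_{s})\ud s$, a continuous adapted finite-variation process, so that $\widetilde{X}=\widetilde{X}^{0}-A$ by \eqref{eq:sfw}, integration by parts (the bracket term vanishing) gives $L=\widetilde{X}^{0}Z-\int_{0}^{\cdot}A_{s-}\ud Z_{s}$. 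Here $\widetilde{X}^{0}Z$ is a local martingale: although $\widetilde{X}^{0}$ need not be non-negative, the bounds \eqref{eq:rbound} and \eqref{eq:fbound} give $\widetilde{X}^{0}\geq-\overline{a}/\underline{r}$, so that $\widetilde{X}^{0}+\overline{a}/\underline{r}\in\mathcal{X}(x+\overline{a}/\underline{r})$ and $(\widetilde{X}^{0}+\overline{a}/\underline{r})Z$ is a local martingale by the definition of $\mathcal{Z}$; subtracting the local martingale $(\overline{a}/\underline{r})Z$ yields the claim. Since $A_{-}$ is locally bounded and predictable, $\int_{0}^{\cdot}A_{s-}\ud Z_{s}$ is also a local martingale, so $L$ is a local martingale.

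The hard part is that $L$ need not be bounded below — neither $\widetilde{X}^{0}$ nor $\int_{0}^{\cdot}(\widetilde{c}_{s}-\widetilde{f}_{s})Z_{s}\ud s$ has a definite sign — so the elementary "non-negative local martingale is a supermartingale" argument does not apply to $L$ directly. To get around this I would introduce the uniformly integrable martingale $M_{t}:=\mathbb{E}[\int_{0}^{\infty}\widetilde{f}_{s}Z_{s}\ud s\mid\mathcal{F}_{t}]$, finite by the standing integrability hypothesis. Since $\int_{t}^{\infty}\widetilde{f}_{s}Z_{s}\ud s\geq0$ we have $M_{t}\geq\int_{0}^{t}\widetilde{f}_{s}Z_{s}\ud s$, whence $L_{t}+M_{t}\geq\widetilde{X}_{t}Z_{t}+\int_{0}^{t}\widetilde{c}_{s}Z_{s}\ud s\geq0$ by admissibility and positivity of $Z$. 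Thus $L+M$ is a non-negative local martingale, hence a supermartingale, and therefore $L=(L+M)-M$ is a supermartingale; equivalently $XY+\int_{0}^{\cdot}(c_{s}-f_{s})Y_{s}\ud s=yL$ is a supermartingale, which is exactly \eqref{eq:mcY}, so $Y\in\mathcal{Y}(y)$. Reading off the supermartingale inequality for $L$ at $L_{0}=x$ gives $\mathbb{E}[\int_{0}^{t}\widetilde{c}_{s}Z_{s}\ud s]\leq x+\mathbb{E}[\int_{0}^{t}\widetilde{f}_{s}Z_{s}\ud s]$, and letting $t\uparrow\infty$ by monotone convergence (the income term having a finite limit) and multiplying through by $y$ yields \eqref{eq:bcdlmd}.
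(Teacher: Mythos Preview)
Your proof is correct and follows the same route as the paper: both identify $XY+\int_{0}^{\cdot}(c_{s}-f_{s})Y_{s}\ud s$ as a local martingale via the product rule (the paper's \eqref{eq:XY} is your identity $L=\widetilde{X}^{0}Z-\int_{0}^{\cdot}A_{s-}\ud Z_{s}$), then upgrade to a supermartingale---your device of adding the uniformly integrable martingale $M$ to obtain non-negativity is precisely the mechanism behind the paper's Fatou argument with an integrable lower bound. You are in fact more careful on one point: the paper asserts without comment that $\widetilde{X}^{0}Z$ is a local martingale, whereas you note that the self-financing component $\widetilde{X}^{0}$ from the decomposition \eqref{eq:wdecomp} need not be non-negative (hence not a priori in $\mathcal{X}$) and supply the shift by $\overline{a}/\underline{r}$ to justify it.
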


\begin{proof}

For some fixed $y>0$, let
$Y=y\widetilde{Z}\in\widetilde{\mathcal{Y}}(y)$ be a discounted local
martingale deflator, as defined in \eqref{eq:dlmd}. Recalling the
wealth dynamics \eqref{eq:wealth} and the discounted wealth
decomposition \eqref{eq:wdecomp}, the It\^o product rule applied to
$XY=y\widetilde{X}Z$ yields
\begin{equation}
XY + \int_{0}^{\cdot}(c_{s}-f_{s})Y_{s}\ud s = xy + X^{0}Y +
y\int_{0}^{\cdot}(F_{s-}-C_{s-})\ud Z_{s},
\label{eq:XY}
\end{equation}
where $X^{0}$ is the self-financing wealth process in the
decomposition \eqref{eq:wdecomp}. We observe that the right-hand-side
of \eqref{eq:XY} is a local martingale, since both
$X^{0}Y=y\widetilde{X}^{0}Z$ and the integral with respect to $Z$ are
local martingales. 

For any $t\geq 0$ the random variable $\int_{0}^{t}f_{s}Y_{s}\ud s$
is $\mathbb{P}$-integrable, since we have
\begin{equation*}
\mathbb{E}\left[\int_{0}^{t}f_{s}Y_{s}\ud s\right] \leq
\overline{a}\int_{0}^{t}\mathbb{E}[Y_{s}]\ud s <\infty, \quad t\geq 0.
\end{equation*}
Hence, the integrand on the left-hand-side of \eqref{eq:XY} is bounded
below by an integrable random variable. Since $XY$ is non-negative,
the right-hand side of \eqref{eq:XY} is a local martingale bounded
below by an integrable random variable, so the Fatou lemma gives that
\begin{equation}
\mbox{$XY + \int_{0}^{\cdot}(c_{s}-f_{s})Y_{s}\ud s$ is a
supermartingale, $\forall$ $c\in\mathcal{A}(x)$ and
$Y\in\widetilde{\mathcal{Y}}(y)$}.   
\label{eq:supermartingale}
\end{equation}
This supermartingale property then implies the budget constraint in
\eqref{eq:bcdlmd} by the same argument as in the proof of Lemma
\ref{lem:bc}.

\end{proof}

Since the supermartingale property in the defintion \eqref{eq:mcY}
also holds for discounted local martingale deflators (as in
\eqref{eq:supermartingale}), the set of consumption deflators contains
the set of discounted local martingale deflators. We thus have the
inclusion
\begin{equation}
\mathcal{Y} \supseteq \widetilde{\mathcal{Y}}.
\label{eq:inclusion2}
\end{equation}
We observe that, combining \eqref{eq:inclusion2} and
\eqref{eq:inclusion}, we have
$\mathcal{Y}^{0} \supseteq \mathcal{Y} \supseteq
\widetilde{\mathcal{Y}}$.

The budget constraint in \eqref{eq:bc} thus constitutes a necessary
condition for admissible consumption plans. We assume that it in fact
characterises admissible consumption plans, so is also a sufficient
condition, by virtue of the following \textit{financing condition}.

\begin{assumption}[Financing condition]
\label{ass:financing}

For $x,y>0$, let $(X,c)\in\mathcal{H}(x)$ be any $x$-admissible
wealth-consumption pair, so $c\in\mathcal{A}(x)$ is an admissible
consumption plan, and let $Y\in\mathcal{Y}(y)$ be any consumption
deflator. We assume that current wealth plus future income can finance
future consumption, in the sense that
\begin{equation*}
X_{t}Y_{t} + \mathbb{E}\left[\left.\int_{t}^{\infty}f_{s}Y_{s}\ud
s\right\vert\mathcal{F}_{t}\right] \geq
\mathbb{E}\left[\left.\int_{t}^{\infty}c_{s}Y_{s}\ud
s\right\vert\mathcal{F}_{t}\right], \quad t\geq 0,
\label{eq:financing}
\end{equation*}
and that if \eqref{eq:financing} holds, then $c\in\mathcal{A}(x)$ and
$(X,c)\in\mathcal{H}(x)$. 
  
\end{assumption}

\subsection{The dual problem}
\label{subsec:dual}

Let $V:\mathbb R_{+}\to\mathbb R$ denote the convex conjugate of the
utility function $U(\cdot)$, defined by
\begin{equation*}
V(y) := \sup_{x>0}[U(x)-xy], \quad y>0.  
\end{equation*}
The map $y\mapsto V(y),\,y>0$, is strictly convex, strictly
decreasing, continuously differentiable on $\mathbb{R}_{+}$,
$-V(\cdot)$ satisfies the Inada conditions, we have the bi-dual
relation $U(x) := \inf_{y>0}[V(y)+xy],\,x>0$, and
$V'(\cdot)=-I(\cdot)=-(U^{\prime})^{-1}(\cdot)$, where $I(\cdot)$
denotes the inverse of marginal utility. In particular, we have the
inequality
\begin{equation}
V(y) \geq U(x)-xy, \quad \forall\, x,y>0, \quad \mbox{with equality
iff $U^{\prime}(x)=y$}.  
\label{eq:VUbound}
\end{equation}

The dual to the primal problem \eqref{eq:pvf} is motivated in the
usual manner with the aid of the budget constraint \eqref{eq:bc}. For
any $c\in\mathcal{A}(x),\,Y\in\mathcal{Y}(y)$ we have, on recalling
the process $\zeta$ of \eqref{eq:alpha} and \eqref{eq:VUbound}, that
\begin{eqnarray}
\label{eq:vuineq}
\mathbb{E}\left[\int_{0}^{\infty}U(c_{t})\ud\kappa_{t}\right] & \leq &
 \mathbb{E}\left[\int_{0}^{\infty}\left(V(\zeta_{t}Y_{t}) +
f_{t}\zeta_{t}Y_{t}\right)\ud\kappa_{t}\right] + xy.
\end{eqnarray}
We therefore
define the dual value function by
\begin{equation}
v(y) :=
\inf_{Y\in\mathcal{Y}(y)}\mathbb{E}\left[\int_{0}^{\infty}\left(V(\zeta_{t}Y_{t})
+ f_{t}\zeta_{t}Y_{t}\right)\ud\kappa_{t}\right], \quad y>0. 
\label{eq:dvf}
\end{equation}
We shall assume throughout that the dual problem is finitely valued:
\begin{equation}
v(y) < \infty, \quad \forall \, y>0.
\label{eq:vfinite}
\end{equation}
It is well known that the condition \eqref{eq:vfinite} acts an
alternative mild feasibility condition to the reasonable asymptotic
elasticity condition of \cite{ks99} that ensures the usual tenets of a
duality theory can hold, as detailed by \cite{ks03}.

\begin{remark}[Consumption deflators versus discounted LMDs]
\label{rem:cdvslmd}

Since the budget constraint holds for both consumption deflators
$Y\in\mathcal{Y}$ and discounted local martingale deflators
$Y\in\widetilde{\mathcal{Y}}\subseteq\mathcal{Y}$, a natural question
to ask is whether the dual problem can be cast as a minimisation over
discounted local martingale deflators. It turns out that this is
possible provided one takes the \textit{closure} of
$\widetilde{\mathcal{Y}}$ with (with respect the topology of
convergence in measure $\kappa\times\mathbb{P}$) as the dual
domain. This is the content of Theorem \ref{thm:dense}, relying on the
property that $\widetilde{\mathcal{Y}}$ is dense in $\mathcal{Y}$ (see
Proposition \ref{prop:dense}).

The reason one must take the closure of $\widetilde{\mathcal{Y}}$ if
basing the dual problem on local martingale deflators is that it seems
hard to establish, in general, that $\widetilde{\mathcal{Y}}$ is a
closed set.  We shall prove that the dual domain $\mathcal{Y}$ is
closed by exploiting results on so-called Fatou convergence of
supermartingales. This is a primary reason for defining the dual
domain in terms of a fundamental supermartingale criterion. This
method of proof fails if using local martingale deflators, because the
 limiting supermartingale in the Fatou convergence
method is known only to be a supermartingale, and not necessarily a
local martingale deflator.

In our approach, we avoid the need to invoke a closure in defining the
dual domain, yielding a stronger ultimate duality statement, and in
some sense showing that we have identified the true space of dual
supermartingales. This is more in the spirit of \cite{ks99,ks03}. (As
\cite[\sc{Important Remark}, pp. 104--105]{rogers} points out, having
to invoke a closure in defining the dual domain weakens the statement
of the final result somewhat.) Finally, the fact that we show that our
dual domain actually coincides with the closure of the discounted
local martingale deflators, completes the picture on this topic in a
satisfying way.

\end{remark}

\subsection{The Davis-Vellekoop example}
\label{subsec:canonical}

A canonical example of the primal problem \eqref{eq:pvf} (and which
motivated this paper) is provided by \cite{dv09}. The underlying
market is a Black-Scholes (BS) market with a single stock $S$ with
constant market price of risk $\lambda\in\mathbb{R}$ and constant
volatility $\sigma>0$, driven by a single Brownian motion $W$. The
interest rate is a constant $r>0$, and the stock price dynamics are
given by
\begin{equation*}
\ud S_{t} = (r+\sigma\lambda)S_{t}\ud t + \sigma S_{t}\ud W_{t}.  
\end{equation*}

The income stream pays at a constant rate $a\geq 0$ until the random
time $\tau\sim\Exp(\eta)$, so $f_{t}=aN_{t},\,t\geq 0$, with $N$
defined in \eqref{eq:N} and $\tau$ independent of $W$. The underlying
market in the absence of the income is thus complete, so all the
incompleteness is generated by the random termination of the
income. The filtration $\mathbb{G}$ is the $\mathbb{P}$-augmentation of
the filtration generated by $W$, the unique martingale deflator in the
absence of the income is $Z^{\mathbb{G}}=\mathcal{E}(-\lambda W)$, so
the local martingale deflators in the market with the income are of
the form
\begin{equation*}
Z = \mathcal{E}(-\lambda W-\gamma\cdot M), \quad \gamma>-1,
\end{equation*}
with $M$ defined in \eqref{eq:M}. We take the filtration $\mathbb{F}$
to be the $\mathbb{P}$-augmentation of the filtration generated by
$(W,N)$. Over any finite horizon, $Z$ is a martingale as long as
$\gamma$ is finite. Over the infinite horizon it is well known that
such a martingale is not uniformly integrable. This is the case even
in the underlying BS market without the income stream, since the
martingale $\mathcal{E}(-\lambda W)$ is not uniformly integrable over
the infinite horizon, so ELMMs
will not exist over the infinite horizon. Thus, even in this simple
example, our approach of using deflators (as opposed to any
constructions involving ELMMs) is a very natural one.

This example is remarkable in that, despite the apparent simplicity of
the setting, the problem with randomly terminating income induces a
particularly awkward form of market incompleteness. The agent is
precluded from borrowing against the income stream, to avoid being
left insolvent if the income terminates at a time when the wealth is
negative. This raised many difficulties in \cite{dv09}. First, there
is no longer a closed form solution to the problem. Second,
\cite{dv09} encountered a duality gap, as their conjectured primal
solution, obtained by solving a deterministic control problem, could
not solve the HJB equation. Finally, there
was also an open question as to whether the marginal utility at zero
wealth is finite or not. Economic intuition indicates that it should
be finite as long as the income is strictly positive for a non-zero
initial time interval, but ODE heuristics in \cite{dv09} seemed to
suggest that the marginal utility at zero wealth was infinite, even though numerical solutions suggested the
opposite conclusion.

For all the above reasons, the problem that \cite{dv09} came up with
is a fascinating example, and we shall analyse it and give a numerical
solution in Section \ref{sec:dve}, after first establishing a rigorous
duality theory for the general semimartingale market model described
earlier. 


\section{The duality theorem}
\label{sec:tmd}

Here is the main duality result of the paper (Theorem
\ref{thm:cwrtid}), a dual characterisation of the solution to the
terminating income problem \eqref{eq:pvf}.
Note that our methods also cover the finite horizon version of the
problem, with or without a terminal wealth objective, and we give some
remarks on the adjustments needed to do this after proving the
theorem, in Section \ref{subsec:finiteh}.

\begin{theorem}[Consumption with randomly terminating income duality] 
\label{thm:cwrtid}

Define the primal value function $u(\cdot)$ by \eqref{eq:pvf} and the
dual value function $v(\cdot)$ by \eqref{eq:dvf}. Assume
\eqref{eq:inada}, \eqref{eq:noarb} and \eqref{eq:vfinite}. Define the
variable $y^{*}>0$ as the smallest value of $y>0$ at which the
derivative of the dual value function reaches zero:
\begin{equation}
y^{*} := \inf\{y>0: v^{\prime}(y)=0\}.  
\label{eq:y0}
\end{equation}

Then:

\begin{itemize}

\item[(i)] $u(\cdot)$ and $v(\cdot)$ are conjugate:
\begin{equation*}
v(y) = \sup_{x>0}[u(x)-xy], \quad u(x) =
\inf_{y\in(0,y^{*})}[v(y)+xy], \quad x>0,\, 0 < y < y^{*}.
\end{equation*}

\item[(ii)] The primal and dual optimisers
$\widehat{c}(x)\in\mathcal{A}(x)$ and
$\widehat{Y}(y)\in\mathcal{Y}(y)$ exist and are unique, so that
\begin{equation*}
u(x) =
\mathbb{E}\left[\int_{0}^{\infty}U(\widehat{c}_{t}(x))\ud\kappa_{t}\right],
\quad v(y) = \mathbb{E}\left[\int_{0}^{\infty}\left(
V(\zeta_{t}\widehat{Y}_{t}(y)) +
f_{t}\zeta_{t}\widehat{Y}_{t}(y)\right)\ud\kappa_{t}\right],
\quad x>0, \, y\in(0,y^{*}).
\end{equation*}

\item[(iii)] With $y=u^{\prime}(x)$ (equivalently,
$x=-v^{\prime}(y)$), the primal and dual optimisers are related
by
\begin{equation}
U^{\prime}(\widehat{c}_{t}(x)) = \zeta_{t}\widehat{Y}_{t}(y), \quad
\mbox{equivalently}, \quad \widehat{c}_{t}(x) =
-V^{\prime}(\zeta_{t}\widehat{Y}_{t}(y)), \quad t\geq 0,
\label{eq:pdc}
\end{equation}
and satisfy
\begin{equation}
\mathbb{E}\left[\int_{0}^{\infty}\left(\widehat{c}_{t}(x) -
f_{t}\right)\widehat{Y}_{t}(y)\ud t\right] = xy.
\label{eq:obc}
\end{equation}
Moreover, the associated optimal wealth process $\widehat{X}(x)$ is
given by
\begin{equation}
\widehat{X}_{t}(x)\widehat{Y}_{t}(y)  =
\mathbb{E}\left[\left.\int_{t}^{\infty}\left(\widehat{c}_{s}(x) -
f_{s}\right)\widehat{Y}_{s}(y)\ud s\right\vert\mathcal{F}_{t}\right],
\quad t\geq 0,
\label{eq:owp}
\end{equation}
and the process
$\widehat{X}(x)\widehat{Y}(y) +
\int_{0}^{\cdot}(\widehat{c}_{s}(x)-f_{s})\widehat{Y}_{s}(y)\ud s$ is
a uniformly integrable martingale, while
$\widehat{X}(x)\widehat{Y}(y)$ is a potential, a non-negative
supermartingale such that
$\lim_{t\to\infty}\mathbb{E}[\widehat{X}_{t}(x)\widehat{Y}_{t}(y)]=0$. Finally,
$\widehat{X}_{\infty}(x)\widehat{Y}_{\infty}(y)=0$ almost surely.

\item[(iv)] The functions $u(\cdot)$ and $-v(\cdot)$ are strictly
increasing, strictly concave and differentiable on their respective
domains, and the variable $y^{*}$ of \eqref{eq:y0} satisfies
$y^{*}<+\infty$, so that the primal Inada condition at zero is
violated:
\begin{equation*}
u^{\prime}(0) := \lim_{x\downarrow 0}u^{\prime}(x) < +\infty.
\end{equation*}
The primal Inada condition at infinity holds true, so that
\begin{equation*}
u^{\prime}(\infty) := \lim_{x\to\infty}u^{\prime}(x) = 0, \quad
-v^{\prime}(0) := \lim_{y\downarrow 0}(-v^{\prime}(y)) = +\infty.  
\end{equation*}
Moreover, the derivatives of the value functions satisfy
\begin{eqnarray*}
xu^{\prime}(x) & = & \mathbb{E}\left[\int_{0}^{\infty}
U^{\prime}(\widehat{c}_{t}(x))(\widehat{c}_{t}(x)-f_{t})\ud\kappa_{t}\right],
\quad x>0, \\
\quad yv^{\prime}(y) & = & \mathbb{E}\left[\int_{0}^{\infty}
\left(V^{\prime}(\zeta_{t}\widehat{Y}_{t}(y)) +
f_{t}\right)\widehat{Y}_{t}(y)\ud t\right], \quad y\in(0,y^{*}).
\end{eqnarray*}

\end{itemize}

\end{theorem}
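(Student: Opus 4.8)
The plan is to follow the now-classical Kramkov--Schachermayer programme, as adapted to the abstract setting by the paper's own Proposition~\ref{prop:propertiesCD} and abstract duality theorem (Theorem~\ref{thm:adt}), and then translate the abstract statements back into the concrete variables of problem~\eqref{eq:pvf}. First I would recast the primal and dual problems in the abstract form: with $\mu := \kappa\times\mathbb{P}$ on $\Omega_{\infty}:=\mathbb{R}_{+}\times\Omega$, the primal variables become the processes $g := c$ ranging over the solid-type set generated by $\mathcal{A}(x)$ via the budget constraint of Lemma~\ref{lem:bc}, and the dual variables become $h := \zeta Y$ for $Y\in\mathcal{Y}(y)$, with the modified dual objective $\mathbb{E}\!\left[\int_{0}^{\infty}(V(\zeta_{t}Y_{t})+f_{t}\zeta_{t}Y_{t})\,\mathrm{d}\kappa_{t}\right]$. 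The inequality \eqref{eq:vuineq} is precisely the weak-duality pairing $\langle g,h\rangle \le xy$, so $u(x)\le\inf_{y>0}[v(y)+xy]$ is immediate. Then I would invoke Proposition~\ref{prop:propertiesCD}: the dual domain is convex, closed under convergence in measure, and $L^{1}(\mu)$-bounded, while the primal domain has the requisite closedness from Assumption~\ref{ass:financing} (the financing condition makes the budget constraint \eqref{eq:bc} both necessary and sufficient). With the abstract bipolarity surrogate in hand, the abstract duality theorem yields existence, uniqueness and conjugacy for the abstract pair; parts~(i), (ii) and the relation \eqref{eq:pdc} of part~(iii) then follow by specialisation, exactly as in \cite{ks03}, using strict convexity of $V$ for uniqueness of $\widehat{Y}$ and strict concavity/Inada for uniqueness of $\widehat{c}$ and the pointwise first-order condition $U'(\widehat{c}_{t})=\zeta_{t}\widehat{Y}_{t}$.

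For the quantitative part of (iii) I would argue as follows. The binding budget constraint \eqref{eq:obc} comes from combining \eqref{eq:vuineq} (which is an equality at the optimum by the equality case in \eqref{eq:VUbound}) with the conjugacy $u(x)=v(y)+xy$ at $y=u'(x)$; this forces $\mathbb{E}[\int_{0}^{\infty}(\widehat c_{t}-f_{t})\widehat Y_{t}\,\mathrm{d}t]=xy$. Next, I would define the candidate optimal wealth by \eqref{eq:owp}, i.e.\ $\widehat X_{t}\widehat Y_{t}:=\mathbb{E}[\int_{t}^{\infty}(\widehat c_{s}-f_{s})\widehat Y_{s}\,\mathrm{d}s\mid\mathcal{F}_{t}]$, and verify via Assumption~\ref{ass:financing} that the resulting $\widehat X$ is genuinely admissible: the defining conditional expectation is exactly the financing inequality \eqref{eq:financing} holding with equality, so $\widehat X\ge 0$ and $(\widehat X,\widehat c)\in\mathcal{H}(x)$. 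By construction $\widehat X_{t}\widehat Y_{t}+\int_{0}^{t}(\widehat c_{s}-f_{s})\widehat Y_{s}\,\mathrm{d}s = \mathbb{E}[\int_{0}^{\infty}(\widehat c_{s}-f_{s})\widehat Y_{s}\,\mathrm{d}s\mid\mathcal{F}_{t}]$ is a closed (hence uniformly integrable) martingale, with terminal value having expectation $xy$ by \eqref{eq:obc}; subtracting the increasing-to-its-limit term $\int_{0}^{\cdot}(\widehat c_{s}-f_{s})\widehat Y_{s}\,\mathrm{d}s$ (whose expected limit is also $xy$) shows $\mathbb{E}[\widehat X_{t}\widehat Y_{t}]\to 0$, so $\widehat X\widehat Y$ is a potential; non-negativity plus vanishing expectation in the limit gives $\widehat X_{\infty}\widehat Y_{\infty}=0$ a.s. For part~(iv), differentiability and strict concavity of $u$ (resp.\ strict convexity of $-v$) transfer from the abstract theorem; the Inada-type relations $u'(\infty)=0$ and $-v'(0)=+\infty$ follow from the corresponding behaviour of $V$ and a monotone/dominated convergence argument on the explicit derivative formulas, which themselves come from differentiating the optimality relations and using the envelope theorem, as in \cite{ks03}. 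The \emph{finiteness} $u'(0)<+\infty$, equivalently $y^{*}<+\infty$, I would obtain from the fact that $v'(y)=-\mathbb{E}[\int_{0}^{\infty}(\widehat c_{t}-f_{t})\widehat Y_{t}\,\mathrm{d}t]/y\cdot\text{(something)}$—more precisely from the formula $yv'(y)=\mathbb{E}[\int_{0}^{\infty}(V'(\zeta_{t}\widehat Y_{t})+f_{t})\widehat Y_{t}\,\mathrm{d}t]$: since $-V'(\zeta_{t}\widehat Y_{t})=\widehat c_{t}(0+)\to 0$ pointwise as $y\to\infty$ while the income term $\int_{0}^{\infty}f_{t}\widehat Y_{t}\,\mathrm{d}t$ remains strictly positive (the income cannot vanish instantaneously, and here the strictly positive interest rate \eqref{eq:rbound} is what keeps $\mathbb{E}[\int_{0}^{\infty}f_{t}\widehat Y_{t}\,\mathrm{d}t]$ finite and the deflator-mass controlled, via Lemma~\ref{lem:Dbounded}), the derivative $v'$ hits zero at a finite $y$.

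The main obstacle I expect is the last point: proving $y^{*}<\infty$ rigorously, i.e.\ showing $v'(y)=0$ for some finite $y$ rather than only $v'(y)\to 0$ as $y\to\infty$. This is the single feature that departs from the classical picture, so it cannot be imported wholesale from \cite{ks99,ks03}; it requires genuinely using the structure $f_{t}=a_{t}N_{t}$ with $a$ bounded below away from zero on an initial interval (or, more robustly, the fact that $\mathbb{E}[\int_{0}^{\infty}f_{t}\widehat Y_{t}\,\mathrm{d}t]$ is uniformly bounded below over the relevant deflators). Concretely, I would bound $\mathbb{E}[\int_{0}^{\infty}f_{t}\widehat{Y}_{t}(y)\,\mathrm{d}t]$ below by a strictly positive quantity that does not shrink as $y\downarrow$ along the scaling $\widehat Y(y)=y\widehat Y(1)$ suitably normalised, and bound the $V'$-term above by something vanishing, so that $yv'(y)$ becomes strictly positive for $y$ large; since $yv'(y)$ is continuous and $v'$ is non-decreasing (as $-v$ is concave), $v'$ must already equal zero at a finite threshold. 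A secondary technical point is justifying the interchange of differentiation and expectation in the derivative formulas of (iv), for which the $L^{1}(\mu)$-boundedness of the dual domain from Proposition~\ref{prop:propertiesCD}, together with the finiteness assumption \eqref{eq:vfinite}, provides the needed domination.
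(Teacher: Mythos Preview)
Your overall approach is correct and matches the paper's: reduce parts~(i), (ii), (iv) and the relations \eqref{eq:pdc}, \eqref{eq:obc} to the abstract duality theorem (Theorem~\ref{thm:adt}) via Proposition~\ref{prop:propertiesCD}, and handle the optimal wealth process characterisation in part~(iii) separately.

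The one place where your argument differs from the paper's, and where it has a small gap, is the derivation of \eqref{eq:owp}. You propose to \emph{define} $\widehat{X}_{t}\widehat{Y}_{t}:=\mathbb{E}\bigl[\int_{t}^{\infty}(\widehat{c}_{s}-f_{s})\widehat{Y}_{s}\,\mathrm{d}s\,\big\vert\,\mathcal{F}_{t}\bigr]$ and then invoke Assumption~\ref{ass:financing} for admissibility. But the financing condition only tells you $\widehat{c}\in\mathcal{A}(x)$, i.e.\ that \emph{some} trading strategy and associated wealth process finance $\widehat{c}$; it does not identify that wealth process with the one you defined by the formula. The paper (Proposition~\ref{prop:owp}) goes in the reverse direction: since $\widehat{c}\in\mathcal{A}(x)$ is already established by the abstract theorem, there is by definition a wealth process $\widehat{X}$ with $\widehat{\Lambda}:=\widehat{X}\widehat{Y}+\int_{0}^{\cdot}(\widehat{c}_{s}-f_{s})\widehat{Y}_{s}\,\mathrm{d}s$ a supermartingale (from the defining property of $\mathcal{Y}$). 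The saturated budget constraint \eqref{eq:obc} then forces $\widehat{\Lambda}$ to be a genuine martingale (a strict supermartingale would yield strict inequality in \eqref{eq:obc}); uniform integrability, the potential property, $\widehat{X}_{\infty}\widehat{Y}_{\infty}=0$, and finally the formula \eqref{eq:owp} are then read off from the martingale condition over $[t,\infty]$.

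For $y^{*}<\infty$, your intermediate-value strategy (show $v'(y)>0$ for $y$ large, combine with $v'(0^{+})=-\infty$ and continuity of $v'$) is a reasonable alternative to the paper's monotone-convergence argument in Lemma~\ref{lem:differentiable}, but note that to make the lower bound on the income term work you must control the $y$-dependence of the normalised optimiser $\widehat{Y}(y)/y$; the paper handles this by arguing that $\widehat{h}^{y}:=\widehat{h}(y)/y$ has no explicit $y$-dependence, which makes the limit transparent.
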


The proof of Theorem \ref{thm:cwrtid} will be given in Section
\ref{sec:pdts}, and will proceed by proving an abstract version of the
theorem, which is stated in the next section.

We have used the set $\mathcal{Y}(y)$ of consumption deflators in the
definition \eqref{eq:dvf} of the dual value function, and we have the
inclusion \eqref{eq:inclusion2}.  The next theorem shows that the dual value function is an infimum over the closure of the set
$\widetilde{\mathcal{Y}}(y)$.  

\begin{theorem}
\label{thm:dense}

The dual value function \eqref{eq:dvf} also has the representation
\begin{equation*}
v(y) :=
\inf_{Y\in\cl(\widetilde{\mathcal{Y}}(y))}\mathbb{E}\left[\int_{0}^{\infty}
\left(V(\zeta_{t}Y_{t}) +
f_{t}\zeta_{t}Y_{t}\right)\ud\kappa_{t}\right], \quad y>0,  
\end{equation*}
where $\widetilde{\mathcal{Y}}(y)$ is the set of discounted local
martingale deflators defined in \eqref{eq:dlmd}, and $\cl(\cdot)$
denotes the closure with respect to convergence in measure
$\kappa\times\mathbb{P}$.

\end{theorem}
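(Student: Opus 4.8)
The plan is to derive the representation from the density of $\widetilde{\mathcal{Y}}(y)$ in $\mathcal{Y}(y)$ (Proposition \ref{prop:dense}) together with a lower-semicontinuity property of the dual functional under convergence in measure $\kappa\times\mathbb{P}$. Write $\mathcal{V}(Y):=\mathbb{E}\left[\int_{0}^{\infty}\left(V(\zeta_{t}Y_{t})+f_{t}\zeta_{t}Y_{t}\right)\ud\kappa_{t}\right]$ for the dual objective, so that $v(y)=\inf_{Y\in\mathcal{Y}(y)}\mathcal{V}(Y)$ by \eqref{eq:dvf}. Since $\widetilde{\mathcal{Y}}(y)\subseteq\mathcal{Y}(y)$ by \eqref{eq:inclusion2}, and since $\mathcal{Y}(y)$ is closed with respect to $\kappa\times\mathbb{P}$-convergence (this is asserted in Remark \ref{rem:cdvslmd} and is part of Proposition \ref{prop:propertiesCD}), we have $\cl(\widetilde{\mathcal{Y}}(y))\subseteq\mathcal{Y}(y)$, hence immediately
\[
\inf_{Y\in\cl(\widetilde{\mathcal{Y}}(y))}\mathcal{V}(Y) \;\geq\; \inf_{Y\in\mathcal{Y}(y)}\mathcal{V}(Y) \;=\; v(y).
\]
It remains to prove the reverse inequality, i.e. that the infimum over the smaller set $\cl(\widetilde{\mathcal{Y}}(y))$ is no larger than $v(y)$.

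For the reverse inequality, fix $\varepsilon>0$ and pick $Y\in\mathcal{Y}(y)$ with $\mathcal{V}(Y)\leq v(y)+\varepsilon$; if $v(y)=-\infty$ this is trivial (any approximating sequence with values tending to $-\infty$ works the same way), so assume $v(y)$ finite. By Proposition \ref{prop:dense} there is a sequence $(Y^{n})_{n}\subseteq\widetilde{\mathcal{Y}}(y)$ with $Y^{n}\to Y$ in measure $\kappa\times\mathbb{P}$. I would then show
\[
\liminf_{n\to\infty}\mathcal{V}(Y^{n}) \;\leq\; \mathcal{V}(Y)\; ,
\]
or, failing an exact inequality, that one can extract a subsequence along which $\mathcal{V}(Y^{n})\to\mathcal{V}(Y)$, or at least $\limsup_{n}\mathcal{V}(Y^{n})\leq\mathcal{V}(Y)$. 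This would give, for each $n$ large, $\inf_{\cl(\widetilde{\mathcal{Y}}(y))}\mathcal{V}\leq\mathcal{V}(Y^{n})\leq\mathcal{V}(Y)+\varepsilon\leq v(y)+2\varepsilon$, and letting $\varepsilon\downarrow 0$ closes the argument. Combining with the easy inequality above yields equality and the desired representation.

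The main obstacle is the passage to the limit in $\mathcal{V}$, since $V$ is convex and decreasing and the integral is over the infinite horizon, so neither monotone nor dominated convergence applies directly, and $V$ can blow up near $0$. The natural route is to pass to a subsequence with $Y^{n}\to Y$ $\kappa\times\mathbb{P}$-a.e., decompose $V=V^{+}-V^{-}$, handle the negative part $V^{-}(\zeta Y^{n})$ (bounded on $[\zeta_{t}\cdot\text{something},\infty)$, using $V$ decreasing and $V(\infty)$ finite, together with the $\kappa$-integrability of constants) by dominated/uniform-integrability arguments, and handle the positive part together with $f_{t}\zeta_{t}Y^{n}_{t}\geq 0$ by Fatou's lemma, which gives $\liminf_{n}\mathcal{V}(Y^{n})\geq\mathcal{V}(Y)$ — the \emph{wrong} direction. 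To get the upper bound one instead exploits convexity: by Proposition \ref{prop:dense} (or a standard Komlós/convexity argument) one may choose the approximating sequence so that suitable convex combinations $\bar{Y}^{n}\in\widetilde{\mathcal{Y}}(y)$ (convex since $\widetilde{\mathcal{Y}}(y)$ is convex, as noted after \eqref{eq:dlmd}) converge to $Y$ with $\limsup_{n}\mathcal{V}(\bar{Y}^{n})\leq\mathcal{V}(Y)$, using convexity of $Y\mapsto\mathcal{V}(Y)$ and the fact that $\mathcal{V}(Y^{n})$ is bounded (which follows because $v(y)<\infty$ and one may discard the finitely many bad terms). If Proposition \ref{prop:dense} already supplies an approximating sequence \emph{in} $\widetilde{\mathcal{Y}}(y)$ with the requisite integrability control, the argument is essentially a lower-semicontinuity bookkeeping exercise; otherwise the convexity step is where the real work lies, and I would lean on the $L^{1}(\kappa\times\mathbb{P})$-boundedness of the dual domain from Proposition \ref{prop:propertiesCD} to secure uniform integrability of the negative parts and thereby justify interchanging limit and expectation.
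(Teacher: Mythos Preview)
Your argument takes an unnecessary detour that leads you into difficulties you need not face. You correctly invoke Proposition \ref{prop:dense} to obtain, for any near-optimal $Y\in\mathcal{Y}(y)$, a sequence $(Y^{n})\subseteq\widetilde{\mathcal{Y}}(y)$ with $Y^{n}\to Y$ in measure $\kappa\times\mathbb{P}$. But this says precisely that $Y\in\cl(\widetilde{\mathcal{Y}}(y))$, by the definition of closure. Hence $\inf_{\cl(\widetilde{\mathcal{Y}}(y))}\mathcal{V}\leq\mathcal{V}(Y)\leq v(y)+\varepsilon$ directly, and you are done: there is no need to compare $\mathcal{V}(Y^{n})$ with $\mathcal{V}(Y)$, because the infimum in the theorem is over the \emph{closure}, not over $\widetilde{\mathcal{Y}}(y)$ itself. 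Put differently, Proposition \ref{prop:dense} together with the closedness of $\mathcal{Y}(y)$ (which you also cite) yields the set equality $\cl(\widetilde{\mathcal{Y}}(y))=\mathcal{Y}(y)$, so the two infima are taken over identical domains. This is exactly the paper's one-line proof of the theorem, via the bijection $Y\leftrightarrow\zeta Y$ between $\mathcal{Y}$ and $\mathcal{D}$.

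The semicontinuity programme you sketch is therefore superfluous, and, as you yourself note, it does not obviously succeed: Fatou gives lower semicontinuity of $\mathcal{V}$ (the wrong direction for an upper bound on the restricted infimum), and the convex-combination repair is unjustified, since there is no reason the values $\mathcal{V}(Y^{n})$ should be bounded merely because $v(y)<\infty$, so averaging does not control the $\limsup$. Had the theorem asked for the infimum over $\widetilde{\mathcal{Y}}(y)$ without closure, your concerns would be genuine and the problem substantially harder; with the closure in place, the result is an immediate corollary of Proposition \ref{prop:dense}.
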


The proof of Theorem \ref{thm:dense} will be given in Section
\ref{sec:pdts} and will rest on Proposition \ref{prop:dense}, which
connects two dual domains in the abstract formulations of our
optimisation problems in Section \ref{sec:tad}.

\section{The abstract duality}
\label{sec:tad}

In this section we state an abstract duality theorem (Theorem
\ref{thm:adt}), from which Theorem \ref{thm:cwrtid} will
follow. Proofs of the results here will follow in Section
\ref{sec:pdts}.

Set $\mathbf{\Omega}:=[0,\infty)\times\Omega$. Let $\mathcal{G}$
denote the optional $\sigma$-algebra on $\mathbf{\Omega}$, that is,
the sub-$\sigma$-algebra of
$\mathcal{B}([0,\infty))\otimes\mathcal{F}$ generated by evanescent
sets and stochastic intervals of the form
$\llbracket T,\infty\llbracket$ for arbitrary stopping times
$T$. Define the measure $\mu:=\kappa\times\mathbb{P}$ on
$(\mathbf{\Omega},\mathcal{G})$. On the resulting finite measure space
$(\mathbf{\Omega},\mathcal{G},\mu)$, denote by $L^{0}_{+}(\mu)$ the
space of non-negative $\mu$-measurable functions, corresponding to
non-negative infinite horizon processes.

The primal and dual domains for our optimisation problems
\eqref{eq:pvf} and \eqref{eq:dvf} are now considered as subsets of
$L^{0}_{+}(\mu)$. The abstract primal domain $\mathcal{C}(x)$ is
identical to the set of admissible consumption plans, now considered
as a subset of $L^{0}_{+}(\mu)$:
\begin{equation}
\mathcal{C}(x) := \{g\in L^{0}_{+}(\mu): \mbox{$g=c,\,\mu$-a.e., for
some $c\in\mathcal{A}(x)$}\}, \quad x>0.  
\label{eq:Cx}
\end{equation}
As always we write $\mathcal{C}\equiv\mathcal{C}(1)$, and the set
$\mathcal{C}$ is convex. (Since $\mathcal{C}=\mathcal{A}$ we do not
really need to introduce the new notation, and do so only for some
notational symmetry in the abstract formulation.) In the abstract
notation, the primal value function \eqref{eq:pvf} is written as
\begin{equation}
u(x) := \sup_{g\in\mathcal{C}(x)}\int_{\mathbf{\Omega}}U(g)\ud\mu,
\quad x>0.
\label{eq:vfabs}
\end{equation}

For the dual problem, the abstract dual domain $\mathcal{D}(y),\,y>0$
is composed of elements $h=\zeta Y$ appearing in the dual value
function \eqref{eq:dvf}. We thus define
\begin{equation}
\mathcal{D}(y) := \{h\in L^{0}_{+}(\mu): \mbox{$h=\zeta
Y,\,\mu$-a.e., for some $Y\in\mathcal{Y}(y)$}\}, \quad y>0.
\label{eq:Dy}
\end{equation}
As usual, we write $\mathcal{D}\equiv\mathcal{D}(1)$, we have
$\mathcal{D}(y)=y\mathcal{D}$ for $y>0$, and the set $\mathcal{D}$ is
convex. With this notation, the dual problem \eqref{eq:dvf} takes the
form
\begin{equation}
v(y) := \inf_{h\in\mathcal{D}(y)}\int_{\mathbf{\Omega}}\left(V(h) +
fh\right)\ud\mu, \quad y>0. 
\label{eq:dvfabs}
\end{equation}

\begin{remark}
\label{rem:solid}
  
Observe that we have not enlarged the original primal or (in
particular) the dual domain in the classical manner akin to
\cite{ks99,ks03}, to encompass processes dominated by elements of the
original domain. (So, for example, on the dual side, one might define
$\mathcal{D}$ to be composed of all elements $h$ such that
$h\leq\zeta Y$, for some $Y\in\mathcal{Y}$.) Such enlargements are not
operative here, for a number of reasons, all stemming from the
presence of the random endowment $f$, as we now describe.

First, our duality proof will not be based on an
application of the bipolar theorem of \cite{bs99}. One reason for this
is that the primal variable $c-f$ in the budget constraint
\eqref{eq:bc}, is not necessarily non-negative, which precludes the
use of the bipolar theorem, as that applies to elements in
$L^{0}_{+}(\mu)$. In this context, the enlargement of the dual
domain, to encompass processes dominated by the original dual
variables, is typically carried out in order to reach the bipolar of
the original dual domain, so as to use the bipolar
theorem, which is of no use to us.

Second, in problems without random endowment, the dual enlargement,
which renders the abstract dual domain solid,\footnote{Recall that a
set $A\subseteq L^{0}_{+}(\mu)$ is called \emph{solid} if
$\overline{h}\in A$ and $0\leq h\leq
\overline{h},\,\mu$-a.e. implies that $h\in A$.} does not prevent
the dual optimiser from lying in the original dual domain (in essence,
the monotone decreasing property of $V(\cdot)$ ensures that one takes
the ``largest'' dual element in order to find the dual optimiser,
which then lies in the original dual domain). But in a problem with
random endowment, as in \eqref{eq:dvfabs}, the dual objective function
is no longer monotone decreasing in $h\in\mathcal{D}(y)$, so the
optimiser in $\mathcal{D}(y)$ might not lie in the original dual
domain if $\mathcal{D}(y)$ contains all elements $h\leq\zeta Y$, for
some $Y\in\mathcal{Y}(y)$.

These features illustrate some of the difficulties that arise in
utility maximisation problems with random endowment, which have made
such problems hard to deal with, and go some way to explaining the
subtle techniques that have had to be employed to obtain results in
this area. These include the use of finitely additive measures, as in
\cite{csw01,csw17,kz03}, or the expansion of the dimension of the
value function with an additional variable for the number of units of
the random endowment, as in \cite{hk04,most17,mostsirbu20}. One of the
contributions of this paper is to obtain very general duality results
without having to introduce such remedies.
  
\end{remark}

For later use, and in particular to state a result further below
(Proposition \ref{prop:dense}) that will ultimately furnish us with
the proof of Theorem \ref{thm:dense}, we define the abstract domain
$\widetilde{\mathcal{D}}(y),\,y>0$ in an analogous manner to
\eqref{eq:Dy}, as the counterpart to the set
$\mathcal{\widetilde{Y}}(y)$ of discounted martingale deflators:
\begin{equation}
\widetilde{\mathcal{D}}(y) := \{h\in L^{0}_{+}(\mu): \mbox{$h=\zeta
Y,\,\mu$-a.e., for some $Y\in\widetilde{\mathcal{Y}}(y)$}\}, \quad y>0.
\label{eq:mcDy}
\end{equation}
As usual, we write
$\widetilde{\mathcal{D}}\equiv\widetilde{\mathcal{D}}(1)$, we have
$\widetilde{\mathcal{D}}(y)=y\widetilde{\mathcal{D}}$ for $y>0$, and
the set $\widetilde{\mathcal{D}}$ is convex.

The abstract duality theorem will rely on certain basic properties of
the sets $\mathcal{C}$ and $\mathcal{D}$, which we state in the key
Proposition \ref{prop:propertiesCD} below. In what follows we shall
sometimes employ the notation
$\langle g,h\rangle:=\int_{\mathbf{\Omega}}gh\ud\mu,\,g,h\in
L^{0}(\mu)$, and we shall call a set $A\subseteq L^{0}_{+}(\mu)$
\emph{closed in $\mu$-measure}, or simply \emph{closed}, if it is
closed with respect to the topology of convergence in measure $\mu$.

\begin{proposition}[Properties of the primal and dual domains]
\label{prop:propertiesCD}
  
The primal domain $\mathcal{C}$ satisfies
\begin{equation}
g \in \mathcal{C} \iff \langle g-f,h\rangle \leq 1, \quad \forall\,
h\in\mathcal{D}.
\label{eq:dualC}
\end{equation}
Both $\mathcal{C}$ and the abstract dual domain $\mathcal{D}$ are
closed, convex subsets of $L^{0}_{+}(\mu)$, and are bounded in
$L^{0}(\mu)$. Moreover, $\mathcal{D}$ is bounded in $L^{1}(\mu)$.

\end{proposition}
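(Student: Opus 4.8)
The plan is to prove the four assertions in order: (1) the bipolar-type characterization \eqref{eq:dualC}; (2) closedness of $\mathcal{C}$ and $\mathcal{D}$ in $\mu$-measure; (3) $L^0(\mu)$-boundedness of both; (4) $L^1(\mu)$-boundedness of $\mathcal{D}$. For \eqref{eq:dualC}, the forward implication is exactly the budget constraint (Lemma \ref{lem:bc}) rewritten in abstract notation: $g\in\mathcal{C}=\mathcal{A}$ means $g=c$ for some $c\in\mathcal{A}(1)$, and then $\langle g-f,h\rangle=\mathbb{E}\!\left[\int_0^\infty(c_t-f_t)\zeta_tY_t\ud\kappa_t\right]=\mathbb{E}\!\left[\int_0^\infty(c_t-f_t)Y_t\ud t\right]\le 1$ for all $h=\zeta Y$ with $Y\in\mathcal{Y}$. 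The converse is where the financing condition (Assumption \ref{ass:financing}) does the work: given $g$ with $\langle g-f,h\rangle\le 1$ for all $h\in\mathcal{D}$, I would first produce a candidate wealth process via the conditional-expectation formula (as in \eqref{eq:owp}, but with $\widehat{Y}$ replaced by a generic optimal-type deflator, or more simply by defining $X_tY_t$ through the supermartingale right-hand side), verify the financing inequality \eqref{eq:financing} holds, and then invoke Assumption \ref{ass:financing} to conclude $g\in\mathcal{A}=\mathcal{C}$. Here one must be careful that the budget inequality against \emph{all} $h\in\mathcal{D}$ — not just one — is what yields the pointwise-in-time conditional financing inequality; this is a standard argument using the fact that $\mathcal{Y}$ is stable under the relevant pasting/stopping operations, i.e. one tests against deflators that behave like $Y$ up to time $t$.

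For closedness of $\mathcal{C}$: given a sequence $g_n\in\mathcal{C}$ converging in $\mu$-measure to $g$, pass to an a.e.-convergent subsequence; since each $g_n$ satisfies $\langle g_n-f,h\rangle\le 1$ for every $h\in\mathcal{D}$, Fatou's lemma (the integrands $g_nh\ge 0$ and $fh$ is $\mu$-integrable by Lemma \ref{lem:Dbounded}/the standing hypothesis on $\mathbb{E}[\int_0^\infty f_tY_t\ud t]$) gives $\langle g-f,h\rangle\le\liminf\langle g_n-f,h\rangle\le 1$, so $g\in\mathcal{C}$ by part (1). For closedness of $\mathcal{D}$: this is the delicate one and, as flagged in Remark \ref{rem:cdvslmd}, the intended route is via Fatou convergence of supermartingales. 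Given $h_n=\zeta Y_n$ with $Y_n\in\mathcal{Y}$ converging in $\mu$-measure to $h=\zeta Y$, one shows $Y$ (a suitable càdlàg modification obtained through a Fatou-limit / Komlós-type argument on the supermartingales $X Y_n+\int_0(c_s-f_s)Y_{n,s}\ud s$) still has the defining supermartingale property \eqref{eq:mcY} for every $c\in\mathcal{A}$: the key point is that a Fatou limit of supermartingales bounded below (here, bounded below by $-\int_0^\infty f_sY_{n,s}\ud s$, integrable uniformly along the sequence — this needs the $L^1$-boundedness of part (4) applied to the deflator family) is again a supermartingale. I expect this step to be the main obstacle, since it requires assembling: $L^1(\mu)$-boundedness of $\{Y_n\}$, a Komlós/forward-convex-combination argument to get a.s. convergence of the relevant conditional expectations, the Fatou-convergence theorem for supermartingales, and a check that the limit is strictly positive with $Y_0=1$ (strict positivity may require an extra argument, e.g. via the budget constraint applied to a consumption plan supported where $Y=0$, or citing that $\mathcal{Y}\supseteq\widetilde{\mathcal{Y}}\ne\emptyset$ together with a domination/comparison).

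For $L^0(\mu)$-boundedness: boundedness in measure of $\mathcal{C}$ follows because $\mathcal{A}(1)\subseteq\mathcal{A}^{0}(1+\|F\|)$-type comparison — more precisely, admissible consumption is dominated by the terminal value of a wealth process starting from $1$ plus cumulative income, and self-financing nonnegative wealth processes form an $L^0$-bounded family under NUPBR \eqref{eq:noarb} (this is the classical fact that $\mathcal{X}$ is bounded in $L^0$ once a deflator exists); combined with boundedness of $f$ and finiteness of $\int_0^\infty\e^{-\delta t}\ud t$ this transfers to $\mathcal{C}$. Boundedness of $\mathcal{D}$ in $L^0(\mu)$ follows a fortiori from part (4). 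The crux of part (4), $L^1(\mu)$-boundedness of $\mathcal{D}$, is: for $h=\zeta Y\in\mathcal{D}$,
\[
\int_{\mathbf{\Omega}}h\ud\mu=\mathbb{E}\!\left[\int_0^\infty \zeta_tY_t\,\e^{-\delta t}\ud t\right]=\mathbb{E}\!\left[\int_0^\infty Y_t\ud t\right]=\int_0^\infty\mathbb{E}[Y_t]\ud t\le\int_0^\infty\e^{-\underline{r}t}\ud t=\frac{1}{\underline{r}},
\]
where the inequality uses that $Y=\widetilde{Z}$-type deflators satisfy $\mathbb{E}[Y_t]\le y\,\mathbb{E}\big[\exp(-\int_0^t r_s\ud s)\big]\le y\,\e^{-\underline{r}t}$ — for $Y\in\widetilde{\mathcal{Y}}$ this is the supermartingale property of $Z$ discounted, and for general $Y\in\mathcal{Y}\subseteq\mathcal{Y}^0$ one argues that $X^0Y$ is a supermartingale with $X^0_t=\exp(\int_0^t r_s\ud s)\cdot 1$ (holding one unit of cash is admissible), giving $\mathbb{E}[\exp(\int_0^t r_s\ud s)Y_t]\le y$, hence $\mathbb{E}[Y_t]\le y\,\e^{-\underline{r}t}$. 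This is exactly where the strictly positive interest-rate lower bound \eqref{eq:rbound} is indispensable, as remarked in the introduction. I would present this computation last, as it is short and self-contained once the cash-holding admissibility and \eqref{eq:rbound} are in hand.
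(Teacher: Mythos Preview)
Your overall plan is sound and matches the paper for the dual characterisation, closedness of $\mathcal{C}$ (Fatou against the characterisation \eqref{eq:dualC}), and closedness of $\mathcal{D}$ (Fatou convergence of supermartingales, which the paper carries out in Lemma~\ref{lem:Dclosed} via \cite[Lemma~5.2]{fk97}). A minor point: for the converse in \eqref{eq:dualC} you propose a pasting argument to pass from the time-zero budget inequality to the time-$t$ financing inequality. The paper does not do this; the text preceding Assumption~\ref{ass:financing} makes explicit that the budget constraint is \emph{assumed} to characterise admissibility, so the converse is simply invoked as part of the standing hypothesis, with no pasting needed.

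There are two genuine differences. For the $L^{1}(\mu)$-boundedness of $\mathcal{D}$, your route (apply the supermartingale property of $X^{0}Y$ with the cash-account wealth $X^{0}_{t}=\exp(\int_{0}^{t}r_{s}\ud s)\in\mathcal{X}$, giving $\mathbb{E}[Y_{t}]\leq\e^{-\underline{r}t}$ and then integrate) is correct and arguably more direct than the paper's. The paper instead exhibits the admissible consumption plan $c_{t}-f_{t}\equiv\underline{r}$ (verifying admissibility from \eqref{eq:rbound} with $H\equiv 0$) and reads off $\underline{r}\int_{\mathbf{\Omega}}h\ud\mu\leq 1$ from the abstract budget constraint; this has the advantage of simultaneously yielding the second bound $\int_{\mathbf{\Omega}}fh\ud\mu\leq\overline{a}/\underline{r}$ needed later.

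For the $L^{0}(\mu)$-boundedness of $\mathcal{C}$, however, your proposed argument has a gap. Bounding cumulative discounted consumption $C_{t}$ by a self-financing wealth plus cumulative income controls the \emph{integral} of $\widetilde{c}$, not the consumption \emph{rate} $c$ as an element of $L^{0}(\mu)$; moreover the self-financing part $\widetilde{X}^{0}=\widetilde{X}-F+C$ need not be nonnegative, so NUPBR-type $L^{0}$-bounds on $\mathcal{X}$ do not apply directly. The paper's argument is different and cleaner: it exhibits a strictly positive element $\overline{h}:=\zeta\exp(-\int_{0}^{\cdot}r_{s}\ud s)\in\widetilde{\mathcal{D}}\subseteq\mathcal{D}$, and uses the budget constraint together with \eqref{eq:fbound} to obtain $\int_{\mathbf{\Omega}}g\overline{h}\ud\mu\leq 1+\overline{a}/\underline{r}$ for all $g\in\mathcal{C}$, so $\mathcal{C}$ is bounded in $L^{1}(\overline{h}\ud\mu)$ and hence (since $\overline{h}>0$) in $L^{0}(\mu)$. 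You should replace your NUPBR comparison with this dual-element argument.
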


The proof of Proposition \ref{prop:propertiesCD} will be given in
Section \ref{sec:pdts}, using Lemmata \ref{lem:Dclosed} and
\ref{lem:Dbounded}.

\begin{theorem}[Abstract duality theorem]
\label{thm:adt}

Define the primal value function $u(\cdot)$ by \eqref{eq:vfabs} and
the dual value function $v(\cdot)$ by \eqref{eq:dvfabs}. Assume the
Inada conditions \eqref{eq:inada} and that
\begin{equation*}
u(x) > -\infty,\,\forall \, x>0, \quad v(y)< \infty,\,\forall \,y>0.   
\end{equation*}

Set
\begin{equation}
y^{*} := \inf\{y>0: v^{\prime}(y)=0\}.  
\label{eq:y0abs}
\end{equation}
Then, with Proposition \ref{prop:propertiesCD} in place, we have:

\begin{itemize}

\item[(i)] $u(\cdot)$ and $v(\cdot)$ are conjugate:
\begin{equation}
v(y) = \sup_{x>0}[u(x)-xy], \quad u(x) = \inf_{y\in(0,y^{*})}[v(y)+xy], \quad
x>0,\,y\in(0,y^{*}).
\label{eq:conjugacy}
\end{equation}

\item[(ii)] The primal and dual optimisers
$\widehat{g}(x)\in\mathcal{C}(x)$ and
$\widehat{h}(y)\in\mathcal{D}(y)$ exist and are unique, so that
\begin{equation*}
u(x) = \int_{\mathbf{\Omega}}U(\widehat{g}(x))\ud\mu, \quad v(y) =
\int_{\mathbf{\Omega}}\left(V(\widehat{h}(y)) +
f\widehat{h}(y)\right)\ud\mu, \quad x>0,\,y\in(0,y^{*}). 
\end{equation*}

\item[(iii)] With $y=u^{\prime}(x)$ (equivalently,
$x=-v^{\prime}(y)$), the primal and dual optimisers are related by
\begin{equation*}
U^{\prime}(\widehat{g}(x)) = \widehat{h}(y), \quad
\mbox{equivalently}, \quad \widehat{g}(x) =
-V^{\prime}(\widehat{h}(y)), 
\end{equation*}
and satisfy
\begin{equation*}
\langle \widehat{g}(x)-f,\widehat{h}(y)\rangle = xy.
\end{equation*}

\item[(iv)] $u(\cdot)$ and $-v(\cdot)$ are strictly increasing,
strictly concave and differentiable on their respective domains. The
constant $y^{*}$ in \eqref{eq:y0abs} is finite: $y^{*}<+\infty$, so
that the primal value function has finite derivative at zero:
\begin{equation*}
u^{\prime}(0) := \lim_{x\downarrow 0}u^{\prime}(x) < +\infty,  
\end{equation*}
while the derivative of the primal value function at infinity and of
the dual value function at zero satisfy
\begin{equation*}
u^{\prime}(\infty) := \lim_{x\to\infty}u^{\prime}(x) = 0, \quad
-v^{\prime}(0) := \lim_{y\downarrow 0}(-v^{\prime}(y)) = +\infty.
\end{equation*}
Furthermore, the derivatives of the value functions satisfy
\begin{equation*}
xu^{\prime}(x) = \int_{\mathbf{\Omega}}
U^{\prime}(\widehat{g}(x))(\widehat{g}(x)-f)\ud\mu, \quad
yv^{\prime}(y) = \int_{\mathbf{\Omega}}
\left(V^{\prime}(\widehat{h}(y)) + f\right)\widehat{h}(y)\ud\mu, \quad
x>0,\,y\in(0,y^{*}).  
\end{equation*}

\end{itemize}

\end{theorem}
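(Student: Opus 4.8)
The strategy is the classical Kramkov–Schachermayer programme, adapted to the fact that the primal variable in the budget constraint is the signed quantity $g-f$ rather than a non-negative process. The engine is Proposition \ref{prop:propertiesCD}, which gives exactly what is needed in place of the usual bipolar theorem: $\mathcal{C}$ is characterised by the budget inequalities $\langle g-f,h\rangle\le 1$ for all $h\in\mathcal{D}$, both domains are closed and convex in $L^0(\mu)$, and crucially $\mathcal{D}$ is bounded in $L^1(\mu)$. First I would set up the shifted primal objective: writing $\widetilde u(x):=u(x)$ but tracking $g-f$ rather than $g$, note that for fixed $h\in\mathcal{D}(y)$ the pointwise Fenchel inequality $U(g)\le V(h)+gh=V(h)+fh+(g-f)h$ integrated against $\mu$ and combined with the budget constraint gives $u(x)\le v(y)+xy$ for all $x,y>0$, i.e. weak duality $v(y)\ge\sup_{x>0}[u(x)-xy]$. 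This is just \eqref{eq:vuineq} in abstract form.

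The core of the argument is existence of the dual optimiser and the reverse inequality. For existence, fix $y>0$; take a minimising sequence $(h_n)\subseteq\mathcal{D}(y)$ for the convex functional $h\mapsto\int_{\mathbf\Omega}(V(h)+fh)\ud\mu$. Since $\mathcal{D}$ is bounded in $L^1(\mu)$, Komlós's lemma (or the Banach–Saks argument of \cite{ks99}, applied to forward convex combinations) yields $\widetilde h_n\in\conv(h_n,h_{n+1},\dots)$ converging $\mu$-a.e.\ to some $\widehat h(y)$; closedness and convexity of $\mathcal{D}(y)$ put $\widehat h(y)\in\mathcal{D}(y)$. Lower semicontinuity of the integral functional — here one uses that $V$ is bounded below on compacts, that $fh\ge 0$ so the $fh$ term only helps via Fatou, and the de la Vallée-Poussin / convexity trick to control the negative part of $V$, exactly as in \cite{ks03} under the condition $v(y)<\infty$ — then gives $\int(V(\widehat h(y))+f\widehat h(y))\ud\mu\le\liminf_n\int(V(\widetilde h_n)+f\widetilde h_n)\ud\mu=v(y)$, so $\widehat h(y)$ is optimal. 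Strict convexity of $V$ gives uniqueness. For the primal side, the characterisation \eqref{eq:dualC} says $\mathcal{C}$ is a closed convex set cut out by the budget constraints, hence bounded in $L^0(\mu)$ (Proposition \ref{prop:propertiesCD} again); a minimising sequence for $u(x)$, after Komlós, converges a.e.\ to some $\widehat g(x)\in\mathcal{C}(x)$, and since $U$ is concave and bounded above by an affine function one gets $\int U(\widehat g(x))\ud\mu\ge\limsup_n\int U(g_n)\ud\mu=u(x)$ after the usual argument handling the positive part of $U$ via uniform integrability coming from the finiteness of the dual value. So the primal optimiser exists; strict concavity of $U$ gives uniqueness once $u$ is known finite, which follows from $u(x)\le v(y)+xy<\infty$.

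With both optimisers in hand, I would establish the reverse inequality $u(x)\ge v(y)+xy$ at conjugate points by the standard perturbation/Lagrangian argument: for fixed $y\in(0,y^*)$ with $\widehat h=\widehat h(y)$, consider the candidate primal element $\widehat g:=-V'(\widehat h)=I(\widehat h)$; one must check $\widehat g-f$ satisfies the budget constraint with equality against $\widehat h$, i.e.\ $\langle\widehat g-f,\widehat h\rangle=xy$ for the appropriate $x$. This is where the minimality of $\widehat h$ enters: a first-order variation $\widehat h+\varepsilon(h-\widehat h)$, $h\in\mathcal{D}(y)$, combined with convexity, forces $\langle\widehat g-f,h\rangle\le xy$ for all $h\in\mathcal{D}(y)$ (here $y^*$ is exactly the threshold beyond which $-v'(y)\le 0$, i.e.\ beyond which the optimal "$x$" would be non-positive, explaining the restriction $y<y^*$); then \eqref{eq:dualC} gives $\widehat g/x\in\mathcal{C}$, and the Fenchel equality $U(\widehat g)=V(\widehat h)+\widehat g\widehat h$ holding pointwise by construction yields $u(x)\ge\int U(\widehat g)\ud\mu=v(y)+xy$. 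Combined with weak duality this proves conjugacy (i), identifies the optimisers and the relation in (ii)–(iii), and shows $x=-v'(y)$. Finally, part (iv): strict concavity/convexity and differentiability of $u,-v$ follow from uniqueness of optimisers plus the envelope theorem in the usual way; $u'(\infty)=0$ and $-v'(0)=+\infty$ are the "easy" Inada transfers; and the distinctive claim $y^*<\infty$, equivalently $u'(0)<\infty$, is read off directly from the structure of $v$ — because $f$ is bounded below appropriately (the income is strictly positive on an initial interval with positive probability, or more precisely $\E[\int_0^\infty f_t Y_t\ud t]>0$ for $Y\in\mathcal{Y}(y)$), the derivative $v'(y)=-x(y)+\E[\int f_t\zeta_t Y_t\ud\kappa_t]/y$-type expression hits zero at finite $y$, so the dual objective's $fh$ term is what breaks the Inada condition at zero.

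The step I expect to be the genuine obstacle is the reverse inequality / attainment of the budget constraint with equality at the dual optimum, because the non-monotonicity of $h\mapsto\int(V(h)+fh)\ud\mu$ (flagged in Remark \ref{rem:solid}) means one cannot simply argue that the optimal $\widehat h$ is "as large as possible" and saturates the constraint; one must instead run the variational argument carefully on the non-solid domain $\mathcal{D}(y)$, using only that it is closed and convex, and track where the sign of the Lagrange multiplier forces the restriction to $y\in(0,y^*)$. The second delicate point is the lower-semicontinuity of the dual functional along the Komlós limit when $V$ is unbounded below, which is handled exactly by the $v(y)<\infty$ hypothesis via the de la Vallée-Poussin argument of \cite{ks03}; I would cite that rather than reprove it.
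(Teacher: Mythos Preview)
Your plan is broadly correct and mirrors the paper's architecture for existence: weak duality, compactness via forward convex combinations (Komlós-type), uniform integrability of $U^{+}$ on the primal side and of $V^{-}$ on the dual side, then existence and uniqueness of both optimisers. The paper does exactly this, in the same order, citing \cite{ks99,ks03} for the uniform integrability arguments.

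The genuine difference is in how conjugacy is obtained. You propose to start from the dual optimiser $\widehat h(y)$, form the candidate $\widehat g:=I(\widehat h)$, and run a first-order variation over $\mathcal{D}(y)$ to show $\langle\widehat g-f,h\rangle\leq xy$ for all $h$, hence $\widehat g\in\mathcal{C}(x)$, after which the pointwise Fenchel equality closes the duality gap. The paper instead proves conjugacy \emph{first}, by a minimax argument in the style of \cite[Lemma 3.4]{ks99}: truncate the primal domain to balls $\mathcal{B}_{n}\subset L^{\infty}(\mu)$ (compact for $\sigma(L^{\infty},L^{1})$), apply a minimax theorem to $(g,h)\mapsto\int(U(g)-(g-f)h)\ud\mu$ over $\mathcal{B}_{n}\times\mathcal{D}(y)$, and pass $n\to\infty$ using the dual characterisation \eqref{eq:dualC} and the $L^{1}(\mu)$-boundedness of $\mathcal{D}$. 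Only after conjugacy is in hand does the paper derive the relations $U'(\widehat g)=\widehat h$ and $\langle\widehat g-f,\widehat h\rangle=xy$, by squeezing the inequality $0\leq\int(V(\widehat h)-U(\widehat g)+\widehat g\widehat h)\ud\mu\leq v(y)-u(x)+xy$. Your route is the reverse: optimiser relations first, conjugacy as a corollary. Both are standard, but the paper's minimax route avoids the need to justify differentiation under the integral in the variational inequality (integrability of $V'(\widehat h)h$ for arbitrary $h\in\mathcal{D}(y)$), which you flag but do not resolve.

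Two points of care in your version. First, you write ``$\widehat g/x\in\mathcal{C}$'', but the primal domains do \emph{not} scale homogeneously here: $\mathcal{C}(x)\neq x\mathcal{C}$, because the income $f$ does not scale with initial wealth. The correct conclusion from $\langle\widehat g-f,h\rangle\leq x$ for all $h\in\mathcal{D}$ is $\widehat g\in\mathcal{C}(x)$ directly (this is the $x$-version of \eqref{eq:dualC}), not $\widehat g/x\in\mathcal{C}$. Second, your definition of $x$ as $\langle\widehat g-f,\widehat h\rangle/y$ and the restriction $y<y^{*}$ presuppose some regularity of $v$ that you have not yet established at that stage; the paper handles this by proving differentiability of $v$ separately (via one-sided difference quotients and dominated convergence, using $\int h\ud\mu\leq 1/\underline r$), and only then identifying $x=-v'(y)$ and showing $y^{*}<\infty$.
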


The proof of Theorem \ref{thm:adt} will follow in Section
\ref{sec:pdts}.

Theorem \ref{thm:dense} will rest on the following proposition, which
connects the sets $\mathcal{D}$ and $\widetilde{\mathcal{D}}$.

\begin{proposition}
\label{prop:dense}

With respect to the topology of convergence in measure $\mu$, the set
$\widetilde{\mathcal{D}}\equiv\widetilde{\mathcal{D}}(1)$ of
\eqref{eq:mcDy} is dense in the abstract dual domain
$\mathcal{D}\equiv\mathcal{D}(1)$ of \eqref{eq:Dy}. 
That is, we have
\begin{equation*}
\mathcal{D} = \cl(\widetilde{\mathcal{D}}),
\end{equation*}
where $\cl(\cdot)$ denotes the closure with respect to the topology of
convergence in measure $\mu$.

\end{proposition}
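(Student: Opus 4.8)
The plan is to establish the two inclusions $\cl(\widetilde{\mathcal{D}})\subseteq\mathcal{D}$ and $\mathcal{D}\subseteq\cl(\widetilde{\mathcal{D}})$ separately. The first is immediate: $\widetilde{\mathcal{D}}\subseteq\mathcal{D}$ by \eqref{eq:inclusion2} (equivalently, by the supermartingale property established in Lemma \ref{lem:bcdlmd}), and $\mathcal{D}$ is closed in $\mu$-measure by Proposition \ref{prop:propertiesCD}, so the closure of $\widetilde{\mathcal{D}}$ remains inside $\mathcal{D}$. All the work is therefore in the reverse inclusion: each $h\in\mathcal{D}$ must be produced as a limit in $\mu$-measure of elements of $\widetilde{\mathcal{D}}$.

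Fix $h=\zeta Y\in\mathcal{D}$, so $Y\in\mathcal{Y}$; since multiplication by the strictly positive deterministic process $\zeta$ (respectively $\zeta^{-1}$) is a homeomorphism of $L^{0}(\mu)$, it is enough to approximate $Y$ itself in $\mu$-measure by discounted local martingale deflators. My approach would be to split $Y$ into a local martingale factor and a predictable decreasing factor and to show that the decreasing factor can be driven to the trivial one in the limit. Concretely, write the multiplicative (It\^o--Watanabe) decomposition of the positive supermartingale $Y$ as $Y=LD$, where $L$ is a strictly positive local martingale with $L_{0}=y$ and $D$ is a predictable, non-increasing process with $D_{0}=1$ (not to be confused with the set $\mathcal{D}$). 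The first step is to show that $e^{-\int_{0}^{\cdot}r_{s}\ud s}L\in\widetilde{\mathcal{Y}}(y)$, i.e. that $L$ is, up to the scalar $y$, a local martingale deflator for the enlarged market of the form $Z^{\mathbb{G}}\mathcal{E}(-\gamma\cdot M)$ as in \eqref{eq:Z}; the delicate point is that $L$ is a priori only the local martingale part of $Y$, not of $X^{0}Y$, so one must feed in the full set of supermartingale constraints built into the definition \eqref{eq:mcY} of $\mathcal{Y}$ (tested against admissible pairs $(X,c)$ that consume aggressively relative to the income), together with the independence of $\mathbb{G}$ and $\mathbb{F}^{N}$ and the strictly positive lower bound $r_{t}\geq\underline{r}>0$ from \eqref{eq:rbound}, which, as in the proof of Proposition \ref{prop:propertiesCD}, forces the predictable decrease of $Y$ to dominate the discounting decrease $\int_{0}^{\cdot}Y_{s}r_{s}\ud s$ already carried by every element of $\widetilde{\mathcal{Y}}$. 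The second step is the approximation itself: replace $D$ by a sequence of admissible choices of the integrand $\gamma>-1$ in \eqref{eq:Z} — using the explicit form $M_{t}=N_{t}+\eta\int_{0}^{t}N_{s}\ud s$ — so that the resulting discounted local martingale deflators absorb the residual decrease of $D$ and converge to $Y$ in $\mu$-measure, the convergence being controlled by the decay and integrability of $\int_{0}^{\infty}Y_{t}\ud t$ that follow from $r\geq\underline{r}>0$ and the $L^{1}(\mu)$-boundedness of $\mathcal{D}$ (Proposition \ref{prop:propertiesCD}).

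The main obstacle is the second step: showing that a consumption deflator $Y\in\mathcal{Y}$ is genuinely a $\mu$-measure limit of discounted local martingale deflators, i.e. that its strict-supermartingale defect $D$ is asymptotically negligible and that $\mathcal{Y}$ contains no strict supermartingale deflators outside $\cl(\widetilde{\mathcal{Y}})$ — the analogous statement for the larger set $\mathcal{Y}^{0}$ of wealth deflators being false in general, which is exactly why it is left open whether $\mathcal{Y}=\mathcal{Y}^{0}$. The Fatou-convergence arguments that give closedness of $\mathcal{D}$ are of no use here, since their limits are merely supermartingales; the approximating sequence has to be constructed by hand from the model structure. A secondary, purely technical, difficulty is the bookkeeping between convergence in $\mu$-measure of the time--space processes, the pathwise convergence of the underlying supermartingales, and the bijection $Y\leftrightarrow\zeta Y$, as well as the treatment of predictable jump times of $D$ and of $S$, where the clean drift estimates break down.
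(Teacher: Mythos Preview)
Your easy inclusion $\cl(\widetilde{\mathcal{D}})\subseteq\mathcal{D}$ is correct and matches the paper. The hard inclusion is where your proposal and the paper diverge entirely.

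The paper does \emph{not} attempt a direct approximation of a given $Y\in\mathcal{Y}$ by discounted local martingale deflators. Its argument is indirect, through the duality machinery already built. The budget constraint with discounted local martingale deflators (Lemma~\ref{lem:bcdlmd}) together with the financing condition (Assumption~\ref{ass:financing}) give the \emph{same} polar-type characterisation of the primal domain whether one tests against $\mathcal{D}$ or against $\widetilde{\mathcal{D}}$: $g\in\mathcal{C}\iff\langle g-f,h\rangle\leq 1$ for all $h\in\widetilde{\mathcal{D}}$. One then re-runs the entire abstract duality proof with $\cl(\widetilde{\mathcal{D}})$ in place of $\mathcal{D}$ --- this set is automatically closed and convex and inherits the $L^{1}(\mu)$-bound --- obtaining a dual value function $\tilde v$ that is conjugate to the same primal $u$. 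Uniqueness of the conjugate forces $\tilde v=v$, and the paper concludes the proposition from this. No multiplicative decomposition, no explicit approximating sequence.

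Your direct route via the It\^o--Watanabe decomposition $Y=LD$ has genuine gaps in \emph{both} steps, not only the second. For the first step, there is no reason the local-martingale part $L$ of an arbitrary $Y\in\mathcal{Y}$ should itself be (a scalar multiple of) an element of $\mathcal{Z}$: from $Y\in\mathcal{Y}\subseteq\mathcal{Y}^{0}$ you only know that $X^{0}Y=X^{0}LD$ is a supermartingale for self-financing $X^{0}$, and this does not force $\widetilde{X}^{0}L$ to be a local martingale. The ``feeding in'' of the consumption constraints from \eqref{eq:mcY} that you allude to is not an argument. For the second step you correctly flag the obstacle but offer no construction; the freedom in the integrand $\gamma$ against $M$ acts only through the single jump at $\tau$ and the compensating drift on $\{t<\tau\}$, and there is no mechanism by which this can absorb an arbitrary predictable non-increasing $D$. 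Your own closing paragraph essentially concedes that the plan is incomplete.

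The contrast is instructive: the paper sidesteps both difficulties by never trying to approximate a fixed $Y$ at all, instead showing that $\cl(\widetilde{\mathcal{D}})$ and $\mathcal{D}$ generate the same duality with $\mathcal{C}$, which is the content actually needed downstream for Theorem~\ref{thm:dense}.
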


The proof of Proposition \ref{prop:dense} will be given in Section
\ref{sec:pdts}, following the proofs of the abstract and concrete
duality theorems.

\section{Proofs of the duality theorems}
\label{sec:pdts}

In this section, we prove the abstract duality of Theorem
\ref{thm:adt}, and then establish the concrete duality of Theorem
\ref{thm:cwrtid}. We proceed via a series of lemmas. Using the fact
that the budget constraint \eqref{eq:bc} is a sufficient as well as
necessary condition for admissibility, along with two subsequent
results (Lemma \ref{lem:Dclosed} and Lemma \ref{lem:Dbounded}), that
the dual domain $\mathcal{D}$ is closed, and also bounded in
$L^{1}(\mu)$, we establish Proposition \ref{prop:propertiesCD}, which
underpins the abstract duality.

\subsection{Properties of the dual domain}
\label{subsec:Dclosed}

\subsubsection{Closed property of the dual domain}
\label{subsubsec:Dclosed}

The first step towards establishing Proposition
\ref{prop:propertiesCD} is to show that $\mathcal{D}$ is closed. As in
some classical proofs \cite[Lemma 4.1]{ks99}, we shall employ
supermartingale convergence results based on Fatou convergence of
processes.

Here is the closed property for the abstract dual domain.

\begin{lemma}
\label{lem:Dclosed}

The dual domain $\mathcal{D}\equiv\mathcal{D}(1)$ of \eqref{eq:Dy} is
closed with respect to the topology of convergence in measure $\mu$.
  
\end{lemma}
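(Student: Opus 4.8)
The plan is to take a sequence $(h^{n})_{n\in\mathbb{N}}\subseteq\mathcal{D}$ converging in $\mu$-measure to some $h\in L^{0}_{+}(\mu)$, and to show $h\in\mathcal{D}$, i.e. $h=\zeta Y$ $\mu$-a.e. for some $Y\in\mathcal{Y}$. Write $h^{n}=\zeta Y^{n}$ with $Y^{n}\in\mathcal{Y}$; since $\zeta$ is a strictly positive deterministic process, $\mu$-convergence of $h^{n}$ to $h$ is equivalent to $\mu$-convergence of $Y^{n}$ to $\widetilde{Y}:=h/\zeta$, so it suffices to produce a c\`adl\`ag version $Y$ of $\widetilde{Y}$ lying in $\mathcal{Y}$. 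First I would pass, via Koml\'os's lemma (or Lemma A1.1 of \cite{ds94}), to a sequence of forward convex combinations $\overline{Y}^{n}\in\conv(Y^{n},Y^{n+1},\dots)$ that converges $\mu$-a.e.; convexity of $\mathcal{Y}$ keeps $\overline{Y}^{n}\in\mathcal{Y}$, and the convex combinations have the same $\mu$-limit $\widetilde{Y}$.

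The heart of the argument is a Fatou-convergence construction, exactly as in \cite[Lemma 4.1]{ks99} (adapted to the infinite-horizon supermartingale setting, as in \cite{mmc20}). Each $\overline{Y}^{n}$ is a non-negative supermartingale (being an element of $\mathcal{Y}\subseteq\mathcal{Y}^{0}$, and $X^{0}\equiv1\in\mathcal{X}$). Define the Fatou limit
\begin{equation*}
Y_{t} := \limsup_{s\downarrow t,\, s\in\mathbb{Q}_{+}} \limsup_{n\to\infty} \overline{Y}^{n}_{s}, \quad t\geq 0,
\end{equation*}
which is a c\`adl\`ag supermartingale agreeing $\mu$-a.e.\ (indeed $dt\times\mathbb{P}$-a.e.) with $\widetilde{Y}$, by the standard Fatou-convergence lemma for supermartingales on a dense set of rationals together with the right-continuity of $\mathbb{F}$. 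It remains to check that this $Y$ actually lies in $\mathcal{Y}$, i.e.\ that $XY+\int_{0}^{\cdot}(c_{s}-f_{s})Y_{s}\ud s$ is a supermartingale for every $c\in\mathcal{A}(x)$. For each fixed $c$ and each $n$ the process $X\overline{Y}^{n}+\int_{0}^{\cdot}(c_{s}-f_{s})\overline{Y}^{n}_{s}\ud s$ is a supermartingale by definition of $\mathcal{Y}$; one then takes $\limsup$ in $n$ and $s\downarrow t$ along rationals inside conditional expectations, using the Fatou lemma. The integrable lower bound needed for Fatou (the terms $-\int_{0}^{s}f_{u}\overline{Y}^{n}_{u}\ud u$ could be negative) comes from the uniform $L^{1}(\mu)$-type control: $\mathbb{E}[\int_{0}^{t}f_{u}\overline{Y}^{n}_{u}\ud u]\leq\overline{a}\int_{0}^{t}\mathbb{E}[\overline{Y}^{n}_{u}]\ud u\leq\overline{a}\int_{0}^{t}\e^{-\int_{0}^{u}r_{s}\ud s}\,\mathbb{E}[Z^{n}_{u}]\ud u$, which is bounded using the supermartingale property of the deflators $Z^{n}$ and the interest-rate bound \eqref{eq:rbound} (so the discount factor forces integrability uniformly in $n$ and for all $t$); the convergence $\int_{0}^{s}(c_{u}-f_{u})\overline{Y}^{n}_{u}\ud u\to\int_{0}^{s}(c_{u}-f_{u})Y_{u}\ud u$ along a subsequence follows from dominated/monotone convergence after splitting into the $c$-part (monotone) and the $f$-part (dominated, using the bound above).

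The main obstacle, and the point where more care than in the classical no-endowment case is needed, is precisely this last integrability/uniform-control step: because the primal quantity $c-f$ is not sign-definite, one cannot simply invoke monotone convergence on the whole integral as in \cite{ks99}, and one must separately dominate the $\int f\,Y\ud s$ contribution uniformly over the sequence. This is where the strictly positive interest rate \eqref{eq:rbound} and the boundedness of the income rate $a\leq\overline{a}$ enter decisively, and it dovetails with Lemma \ref{lem:Dbounded} (boundedness of $\mathcal{D}$ in $L^{1}(\mu)$); in fact the cleanest route is to prove the $L^{1}(\mu)$-bound first and then feed it into the Fatou argument here.
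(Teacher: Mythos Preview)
Your proposal follows essentially the same route as the paper: pass to forward convex combinations, take a Fatou-type limit along a dense countable time set, and verify that the limit lies in $\mathcal{Y}$. Two differences in execution are worth noting. First, the paper invokes \cite[Lemma~5.2]{fk97} directly on the \emph{composite} supermartingale sequence $\widehat{V}^{n}:=X\widehat{Y}^{n}+\int_{0}^{\cdot}(c_{s}-f_{s})\widehat{Y}^{n}_{s}\ud s$ (as well as on $\widehat{Y}^{n}$, with the same convex weights), so the supermartingale property of the limit $V=XY+\int_{0}^{\cdot}(c_{s}-f_{s})Y_{s}\ud s$ is delivered as part of the Fatou-convergence package rather than re-derived by hand via conditional Fatou. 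Second, the paper uses \cite[Lemma~8]{z02} to identify the Fatou limit $Y$ with the $\mu$-a.e.\ limit (the two can disagree only on a countable set of times, hence on a $\mu$-null set), which is the step that justifies $h=\zeta Y$ $\mu$-a.e.

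There is a soft spot in your step for verifying the supermartingale property: conditional Fatou requires a \emph{pointwise} integrable lower bound uniform in $n$, i.e.\ an integrable $Z$ with $X_{t}\overline{Y}^{n}_{t}+\int_{0}^{t}(c_{s}-f_{s})\overline{Y}^{n}_{s}\ud s\geq -Z$ for all $n$, whereas you supply only a uniform bound on $\mathbb{E}\bigl[\int_{0}^{t}f_{s}\overline{Y}^{n}_{s}\ud s\bigr]$; uniformly bounded expectations do not furnish a dominating random variable. The paper's device of applying the Fatou-convergence lemma to $\widehat{V}^{n}$ as a black box sidesteps this; an alternative fix along your lines is to work with the \emph{non-negative} supermartingales $X\overline{Y}^{n}+\int_{0}^{\cdot}c_{s}\overline{Y}^{n}_{s}\ud s$ and treat the bounded-$f$ piece separately. (A minor point: in the Fatou-limit construction and in the passage to the limit you want $\liminf$, not $\limsup$, since the inequality runs $\mathbb{E}[\liminf_{n}\,\cdot\mid\mathcal{F}_{s}]\leq\liminf_{n}\mathbb{E}[\,\cdot\mid\mathcal{F}_{s}]$.)
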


\begin{proof}

Let $(h^{n})_{n\in\mathbb{N}}$ be a sequence in $\mathcal{D}$,
converging $\mu$-a.e. to some $h\in L^{0}_{+}(\mu)$. We want to show
that $h\in\mathcal{D}$.

Since $h^{n}\in\mathcal{D}$, for each $n\in\mathbb{N}$ we have
$h^{n}=\zeta\widehat{Y}^{n},\,\mu$-a.e for some supermartingale
$\widehat{Y}^{n}\in\mathcal{Y}$. With $\mathbb{T}$ a dense countable
subset of $\mathbb{R}_{+}$, \cite[Lemma 5.2]{fk97} implies that there
exists a sequence $(Y^{n})_{n\in\mathbb{N}}$ of supermartingales, with
each $Y^{n}\in\conv(\widehat{Y}^{n},\widehat{Y}^{n+1},\ldots)$, where
$\conv(\widehat{Y}^{n},\widehat{Y}^{n+1},\ldots)$ denotes a convex
combination $\sum_{k=n}^{N(n)}\lambda_{k}\widehat{Y}^{k}$ for
$\lambda_{k}\in[0,1]$ with $\sum_{k=n}^{N(n)}\lambda_{k}=1$, and a
supermartingale $Y$, such that $(Y^{n})_{n\in\mathbb{N}}$ is Fatou
convergent on $\mathbb{T}$ to $Y$.

Define a supermartingale sequence $(\widehat{V}^{n})_{n\in\mathbb{N}}$
by
$\widehat{V}^{n}:=X\widehat{Y}^{n} +
\int_{0}^{\cdot}(c_{s}-f_{s})\widehat{Y}^{n}_{s}\ud s$, with
$c\in\mathcal{A}$ an admissible consumption plan and $X$ the
associated wealth process, so $(X,c)\in\mathcal{H}$ is an admissible
investment-consumption pair with initial wealth $1$. Once again from
\cite[Lemma 5.2]{fk97} there exists a sequence
$(V^{n})_{n\in\mathbb{N}}$ of supermartingales with each
$V^{n}\in\conv(\widehat{V}^{n},\widehat{V}^{n+1},\ldots)$, and a
supermartingale $V$, such that $(V^{n})_{n\in\mathbb{N}}$ is Fatou
convergent on $\mathbb{T}$ to $V$. We observe that, because
$Y^{n}\in\conv(\widehat{Y}^{n},\widehat{Y}^{n+1},\ldots)$ and
$V^{n}\in\conv(\widehat{V}^{n},\widehat{V}^{n+1},\ldots)$, we have
\begin{equation*}
V^{n} = \sum_{k=n}^{N(n)}\lambda_{k}\widehat{V}^{k} =
\sum_{k=n}^{N(n)}\lambda_{k}
\left(X\widehat{Y}^{k}+\int_{0}^{\cdot}(c_{s}-f_{s})\widehat{Y}^{k}_{s}\ud
  s\right) = XY^{n} + \int_{0}^{\cdot}c_{s}Y^{n}_{s}\ud s.
\end{equation*}
As a consequence, and because the sequence $(Y^{n})_{n\in\mathbb{N}}$
is Fatou convergent on $\mathbb{T}$ to the supermartingale $Y$, the
sequence
$(V^{n})_{n\in\mathbb{N}}=\left(XY^{n}+\int_{0}^{\cdot}(c_{s}-f_{s})Y^{n}_{s}\ud
  s\right)_{n\in\mathbb{N}}$ is Fatou convergent on $\mathbb{T}$ to
the supermartingale $V=XY+\int_{0}^{\cdot}(c_{s}-f_{s})Y_{s}\ud s$. Since
$XY+\int_{0}^{\cdot}(c_{s}-f_{s})Y_{s}\ud s$ is a supermartingale and
$c\in\mathcal{A}$ is an admissible consumption plan (so
$(X,c)$ is an admissible investment-consumption strategy), we have
$Y\in\mathcal{Y}$.

Because $h^{n}=\zeta\widehat{Y}^{n},\,\mu$-a.e.for each
$n\in\mathbb{N}$, and since
$Y^{n}=\sum_{k=n}^{N(n)}\lambda_{k}\widehat{Y}^{k}$, we have
\begin{equation}
\zeta Y^{n} =
\sum_{k=n}^{N(n)}\lambda_{k}\zeta\widehat{Y}^{k}
= \sum_{k=n}^{N(n)}\lambda_{k}h^{k}, \quad \mbox{$\mu$-a.e.} 
\label{eq:conv}
\end{equation}
Now, by \cite[Lemma 8]{z02} (proven there for finite
horizon processes, but it is straightforward to verify that the proof
goes through without alteration for infinite horizon processes) there
is a countable set $K\subset\mathbb{R}_{+}$ such that for
$t\in\mathbb{R}_{+}\setminus K$, we have
$Y_{t}=\liminf_{n\to\infty}Y^{n}_{t}$ almost surely, and hence also
$Y=\liminf_{n\to\infty}Y^{n}$, $\mu$-almost everywhere (since these
differ only on a set of measure zero). So, taking the limit inferior
in \eqref{eq:conv} and recalling that $(h^{n})_{n\in\mathbb{N}}$
converges $\mu$-a.e. to $h$, we obtain
\begin{equation*}
\zeta Y = \liminf_{n\to\infty}\zeta Y^{n} =
\liminf_{n\to\infty}\sum_{k=n}^{N(n)}\lambda_{k}h^{k} = h, \quad
\mbox{$\mu$-a.e.}    
\label{eq:conv1}
\end{equation*}
That is, $h=\zeta Y$ $\mu$-a.e, for $Y\in\mathcal{Y}$, so
$h\in\mathcal{D}$, and thus $\mathcal{D}$ is closed.

\end{proof}

\subsubsection{$L^{1}(\mu)$-boundedness of the dual domain}
\label{subsubsec:Dbdd}


The remaining ingredient we shall need to prove
\ref{prop:propertiesCD} is to show that the dual domain is bounded in
$L^{1}(\mu)$. This is taken care of by the lemma below, on the
boundedness properties of two integrals involving elements
$h\in\mathcal{D}$. As these results will be used at various places in
the duality proof, we collect them here.

\begin{lemma}[Bounded integrals in the dual domain]
\label{lem:Dbounded}

Recall the lower bound in \eqref{eq:rbound} for the interest rate and
the upper bound in \eqref{eq:fbound} on the income rate. We then have
\begin{equation}
\int_{\mathbf{\Omega}}h\ud\mu \leq \frac{1}{\underline{r}}, \quad
\int_{\mathbf{\Omega}}fh\ud\mu \leq
\frac{\overline{a}}{\underline{r}}, \quad \forall \, h\in\mathcal{D}. 
\label{eq:Dbounds}
\end{equation}
  
\end{lemma}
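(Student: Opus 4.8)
The plan is to bound the two integrals in \eqref{eq:Dbounds} by choosing concrete admissible consumption plans in the definition \eqref{eq:mcY} of $\mathcal{Y}$ and exploiting the resulting supermartingale property together with the lower bound \eqref{eq:rbound} on the interest rate. Fix $h\in\mathcal{D}$, so $h=\zeta Y$ $\mu$-a.e.\ for some $Y\in\mathcal{Y}\equiv\mathcal{Y}(1)$, and recall that $\ud\kappa_t=\e^{-\delta t}\ud t$ while $\zeta_t=\e^{\delta t}$, so that $\int_{\mathbf{\Omega}}h\ud\mu=\mathbb{E}\big[\int_0^\infty\zeta_tY_t\ud\kappa_t\big]=\mathbb{E}\big[\int_0^\infty Y_t\ud t\big]$ and $\int_{\mathbf{\Omega}}fh\ud\mu=\mathbb{E}\big[\int_0^\infty f_tY_t\ud t\big]$. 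Thus both inequalities in \eqref{eq:Dbounds} amount to bounds on $\mathbb{E}\big[\int_0^\infty Y_t\ud t\big]$ and $\mathbb{E}\big[\int_0^\infty f_tY_t\ud t\big]$.

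For the first bound, the idea is to find an admissible wealth-consumption pair $(X,c)\in\mathcal{H}(1)$ for which the supermartingale property of $XY+\int_0^\cdot(c_s-f_s)Y_s\ud s$, combined with $r_t\geq\underline{r}$, forces $\underline{r}\int_0^\infty Y_s\ud s$ to be dominated by something with expectation $1$. A natural candidate is to take the self-financing pure-cash strategy starting from wealth $1$: hold all wealth in the bank account and consume the interest, i.e.\ $X_t\equiv\e^{\int_0^t r_s\ud s}\cdot$ (discounted constant) — more simply, set $c\equiv f$ so that $\widetilde X=\widetilde X^0$ is self-financing and take $X^0\equiv 1$ by choosing $H\equiv 0$; then $\widetilde X_t=\e^{-\int_0^t r_s\ud s}$, so $X_t\equiv 1$. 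But to extract $\int_0^\infty Y_s\ud s$ we instead want a pair where the consumption genuinely draws down wealth: take $H\equiv 0$ and $c_t:=f_t+r_tX_t$ with $X_t\equiv 1$ for all $t$, i.e.\ consume income plus interest; then $\widetilde X_t=1+\int_0^t(\widetilde f_s-\widetilde c_s)\ud s=1-\int_0^t r_s\e^{-\int_0^u r\,\ud u}\ud s=\e^{-\int_0^t r_s\ud s}$, so indeed $X_t\equiv 1\geq0$ and $(X,c)\in\mathcal{H}(1)$ (invoking Assumption \ref{ass:financing} if necessary to confirm admissibility, though here $X$ stays constant so this is immediate). Then $XY+\int_0^\cdot(c_s-f_s)Y_s\ud s=Y+\int_0^\cdot r_sX_sY_s\ud s=Y+\int_0^\cdot r_sY_s\ud s$ is a supermartingale with value $1$ at time $0$; taking expectations, using $Y\geq0$ and $r_s\geq\underline{r}$, gives $\underline{r}\,\mathbb{E}\big[\int_0^t Y_s\ud s\big]\leq\mathbb{E}\big[\int_0^t r_sY_s\ud s\big]\leq1$ for all $t\geq0$, and letting $t\uparrow\infty$ by monotone convergence yields $\int_{\mathbf{\Omega}}h\ud\mu=\mathbb{E}\big[\int_0^\infty Y_s\ud s\big]\leq1/\underline{r}$.

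The second bound then follows immediately from the first together with the income bound $0\leq f_t\leq\overline{a}$ from \eqref{eq:fbound}: $\int_{\mathbf{\Omega}}fh\ud\mu=\mathbb{E}\big[\int_0^\infty f_tY_t\ud t\big]\leq\overline{a}\,\mathbb{E}\big[\int_0^\infty Y_t\ud t\big]\leq\overline{a}/\underline{r}$. The main obstacle is making sure the chosen pair $(X,c)$ is genuinely in $\mathcal{H}(1)$ — i.e.\ that $c_t=f_t+r_tX_t$ satisfies the integrability requirement $\int_0^t c_s\ud s<\infty$ (clear, since $r$ is integrable on compacts, $f$ is bounded, and $X\equiv1$) and that the strategy is admissible (immediate as $X\equiv1\geq0$) — and being careful with the passage to the limit $t\uparrow\infty$, which is justified since $r_sY_s\geq0$ so monotone convergence applies. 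One should also note that this step is where the strict positivity of the interest rate, \eqref{eq:rbound}, is essential, as flagged in the introduction: without $\underline{r}>0$ the argument gives no finite bound. No other subtlety arises.
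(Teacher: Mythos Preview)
Your proof is correct and follows essentially the same approach as the paper: both choose an admissible pair with $H\equiv 0$ and excess consumption $c-f$ equal (or close) to the interest rate so that wealth stays at or above $1$, and then extract the bound from the resulting supermartingale inequality. The paper takes $c_t-f_t=\underline{r}$ (constant) and invokes the budget constraint, whereas you take $c_t-f_t=r_t$ so that $X\equiv 1$ exactly and work directly from the supermartingale definition of $\mathcal{Y}$---a slightly cleaner variant of the same idea.
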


\begin{proof}

Choose a consumption plan such that $c_{t}-f_{t}=\epsilon,\,t\geq
0$, for some constant $\epsilon\in(0,\underline{r}]$. It is
straightforward to verify that this plan is admissible, as
follows. With initial wealth $x=1$ and trading strategy $H\equiv 0$,
the lower bound in \eqref{eq:rbound} on the interest rate yields that
\begin{equation*}
X_{t} \geq \frac{\epsilon}{\underline{r}} +
\exp(\underline{r}t)\left(1 -  \frac{\epsilon}{\underline{r}}\right),
\quad t\geq 0.   
\end{equation*}
Thus, provided $\epsilon\in(0,\underline{r}]$, we have
$X_{t}\geq 0,\,\forall\,t\geq 0$ almost surely, so that the chosen
consumption plan is admissible: $c\in\mathcal{A}$. In particular, we
may set $\epsilon=\underline{r}$, which yields
$X_{t}\geq 1,\,\forall\,t\geq 0$ almost surely. (This makes perfect sense:
one can consume at a rate equal to the income rate plus the minimal
interest rate and stay solvent.)

The budget constraint \eqref{eq:bc} implies that
\begin{equation}
\int_{\mathbf{\Omega}}(g-f)h\ud\mu \leq 1, \quad \forall \,
g\in\mathcal{C},\, h\in\mathcal{D}.
\label{eq:bcabs}
\end{equation}
So, if we choose the consumption plan such that $g-f=\underline{r}$,
$\mu$-a.e., we have $g\in\mathcal{C}$ from the first part of the
proof, and the abstract budget constraint in \eqref{eq:bcabs} converts
to
\begin{equation*}
\underline{r}\int_{\mathbf{\Omega}}h\ud\mu \leq 1, \quad \forall\,
h\in\mathcal{D}, 
\end{equation*}
so we obtain the first inequality in \eqref{eq:Dbounds}. The second
inequality then follows by augmenting this with the bound in
\eqref{eq:fbound} on the income rate.

\end{proof}

  


With this preparation, we are now able to establish Proposition
\ref{prop:propertiesCD}. 

\begin{proof}[Proof of Proposition \ref{prop:propertiesCD}]

We first establish the dual characterisation of $\mathcal{C}$ as
expressed in \eqref{eq:dualC}. The budget constraint \eqref{eq:bc}
along with the financing condition in Assumption \ref{ass:financing},
give the equivalence, invoking the measure $\kappa$ of
\eqref{eq:kappa},
\begin{equation*}
c\in\mathcal{A} \iff
\mathbb{E}\left[\int_{0}^{T}(c_{t}-f_{t})\zeta_{t}Y_{t}\ud\kappa_{t}\right]
\leq 1, \quad \forall\,Y\in\mathcal{Y}.  
\end{equation*}
So, with $\mathcal{C}\owns g\equiv c$ and $\mathcal{D}\owns\zeta Y$,
in terms of the measure $\mu$ we have
\begin{equation}
g\in\mathcal{C} \iff 
\int_{\mathbf{\Omega}}(g-f)h\ud\mu \leq 1, \quad \forall\,
h\in \mathcal{D},
\label{eq:Cchar0}
\end{equation}
which establishes \eqref{eq:dualC}.

The equivalence \eqref{eq:Cchar0} along with Fatou's lemma yields that
the set $\mathcal{C}$ is closed with respect to the topology of
convergence in measure $\mu$. To see this, let
$(g^{n})_{n\in\mathbb{N}}$ be a sequence in $\mathcal{C}$ which
converges $\mu$-a.e. to an element $g\in L^{0}_{+}(\mu)$. For
arbitrary $h\in\mathcal{D}$ we obtain, via Fatou's lemma and the fact
that $g^{n}\in\mathcal{C}$ for each $n\in\mathbb{N}$ (with $g^{n}-f$
bounded below),
\begin{equation*}
\int_{\mathbf{\Omega}}(g-f)h\ud\mu \leq
\liminf_{n\to\infty}\int_{\mathbf{\Omega}}(g^{n}-f)h\ud\mu \leq 1,
\end{equation*}
so by \eqref{eq:Cchar0}, $g\in\mathcal{C}$, and thus $\mathcal{C}$ is
closed. Convexity of $\mathcal{C}$ is clear from its definition.
  
For the $L^{0}(\mu)$-boundedness of $\mathcal{C}$, we shall find a
positive element $\overline{h}\in\mathcal{D}$ and show that
$\mathcal{C}$ is bounded in $L^{1}(\overline{h}\ud\mu)$ and hence
bounded in $L^{0}(\mu)$. Given the inclusion in \eqref{eq:inclusion2},
we may choose $\mathcal{Z}\owns Z\equiv 1$, so that
$Y=\exp\left(-\int_{0}^{\cdot}r_{s}\ud
  s\right)\in\widetilde{\mathcal{Y}}\subseteq\mathcal{Y}$, and then
$\overline{h}:=\zeta\exp\left(-\int_{0}^{\cdot}r_{s}\ud
  s\right)\in\mathcal{D}$ defines a strictly positive element of
$\mathcal{D}$. Observe that, with the lower bound on the interest rate
in \eqref{eq:rbound}
\begin{equation*}
\int_{\mathbf{\Omega}}\overline{h}\ud\mu =
\mathbb{E}\left[\int_{0}^{\infty}\exp\left(-\int_{0}^{t}r_{s}\ud s\right)
\zeta_{t}\ud\kappa_{t}\right] \leq
\mathbb{E}\left[\int_{0}^{\infty}\exp\left(-\underline{r}t\right)\ud
t\right] = \frac{1}{\underline{r}}. 
\end{equation*}

By virtue of the budget constraint we have, for any $g\in\mathcal{C}$,
that $\int_{\mathbf{\Omega}}(g-f)\overline{h}\ud\mu\leq 1$.Thus, with
the bound on the income rate in \eqref{eq:fbound}, we have
\begin{equation}
\int_{\mathbf{\Omega}}g\overline{h}\ud\mu \leq 1 +
\int_{\mathbf{\Omega}}f\overline{h}\ud\mu \leq 1 + \overline{a}
\int_{\mathbf{\Omega}}\overline{h}\ud\mu = 1 +
\frac{\overline{a}}{\underline{r}} <\infty.   
\label{eq:boundghbar}
\end{equation}
Thus, $\mathcal{C}$ is bounded in $L^{1}(\overline{h}\ud\mu)$ and
hence bounded in $L^{0}(\mu)$.

Finally, we note from Lemma \ref{lem:Dbounded}, and in particular the
first integral in \eqref{eq:Dbounds}, that $\mathcal{D}$ is bounded in
$L^{1}(\mu)$, and hence also bounded in $L^{0}(\mu)$.

\end{proof}

\subsection{The abstract duality proof}
\label{subsec:adp}

We now take Proposition \ref{prop:propertiesCD} as given in the
remainder of this section, and proceed with the proof of the abstract
duality of Theorem \ref{thm:adt} via a series of lemmas.

The first step is to establish weak duality.

\begin{lemma}[Weak duality]
\label{lem:weakdual}
  
The primal and dual value functions $u(\cdot)$ and $v(\cdot)$ of
\eqref{eq:vfabs} and \eqref{eq:dvfabs} satisfy the weak duality bounds
\begin{equation}
v(y) \geq \sup_{x>0}[u(x)-xy], \quad y>0, \quad
\mbox{equivalently} \quad u(x) \leq \inf_{y>0}[v(y)+xy],
\quad x>0. 
\label{eq:weakd}
\end{equation}
As a result, $u(x)$ is finitely valued for all $x>0$. Moreover, we
have the limiting relations
\begin{equation}
\limsup_{x\to\infty}\frac{u(x)}{x} \leq 0, \quad
\liminf_{y\to\infty}\frac{v(y)}{y} \geq 0. 
\label{eq:limiting}
\end{equation}

\end{lemma}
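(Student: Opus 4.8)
The plan is to follow the classical route to weak duality via the Fenchel--Young inequality between $U$ and $V$, combined with the budget constraint already in hand. First I would start from inequality \eqref{eq:vuineq}, which was derived precisely for this purpose: for any $c\in\mathcal{C}(x)$ and $Y\in\mathcal{Y}(y)$ one has $\mathbb{E}\left[\int_{0}^{\infty}U(c_{t})\ud\kappa_{t}\right]\leq\mathbb{E}\left[\int_{0}^{\infty}(V(\zeta_{t}Y_{t})+f_{t}\zeta_{t}Y_{t})\ud\kappa_{t}\right]+xy$, i.e.\ in the abstract notation $\int_{\mathbf{\Omega}}U(g)\ud\mu\leq\int_{\mathbf{\Omega}}(V(h)+fh)\ud\mu+xy$ for $g\in\mathcal{C}(x)$, $h\in\mathcal{D}(y)$. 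Taking the supremum over $g\in\mathcal{C}(x)$ on the left and the infimum over $h\in\mathcal{D}(y)$ on the right gives $u(x)\leq v(y)+xy$ for all $x,y>0$, which is exactly \eqref{eq:weakd}. One small point to check is that the left-hand side of \eqref{eq:vuineq} is well-defined (not $+\infty$ with a $-\infty$ on the right): since $v(y)<\infty$ by hypothesis, the right-hand side is finite for a suitable $Y$, so $\int U(g)\ud\mu$ is bounded above; and since $U$ may be negative, one should note $\int U(g)^{+}\ud\mu<\infty$ so the integral is well-defined in $[-\infty,\infty)$. Combined with $u(x)>-\infty$ from the standing hypothesis, this shows $u(x)$ is finitely valued, as claimed.

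For the limiting relations \eqref{eq:limiting}, I would exploit the conjugacy-type bounds just obtained together with finiteness. From $u(x)\leq v(y)+xy$, divide by $x$ and let $x\to\infty$ with $y$ fixed: $\limsup_{x\to\infty}u(x)/x\leq\limsup_{x\to\infty}(v(y)/x+y)=y$ for every $y>0$, hence letting $y\downarrow 0$ gives $\limsup_{x\to\infty}u(x)/x\leq 0$. Symmetrically, from $v(y)\geq u(x)-xy$, divide by $y$ and let $y\to\infty$ with $x$ fixed: $\liminf_{y\to\infty}v(y)/y\geq\liminf_{y\to\infty}(u(x)/y-x)=-x$ for every $x>0$, and letting $x\downarrow 0$ yields $\liminf_{y\to\infty}v(y)/y\geq 0$. (Here one uses that $u(x)$ and $v(y)$ are finite, which we have just established for $u$ and is assumed for $v$.)

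The proof is essentially routine given the preparatory inequality \eqref{eq:vuineq} and Proposition \ref{prop:propertiesCD}; there is no serious obstacle. The only mild subtlety is the measurability/integrability bookkeeping needed to assert that $\int_{\mathbf{\Omega}}U(g)\ud\mu$ is well-defined as an element of $[-\infty,\infty)$ so that the supremum in the definition of $u(x)$ makes sense and the chain of inequalities is legitimate; this is handled by the positivity of $V+fh$-type terms and the standing assumption $v(y)<\infty$, which pins down an upper bound for $\int U(g)\ud\mu$ uniformly in $g\in\mathcal{C}(x)$.
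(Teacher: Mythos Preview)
Your proposal is correct and follows essentially the same approach as the paper: both start from the Fenchel--Young inequality \eqref{eq:vuineq}, take the supremum over $g\in\mathcal{C}(x)$ and infimum over $h\in\mathcal{D}(y)$ to obtain $u(x)\leq v(y)+xy$, and then deduce finiteness of $u$ and the limiting relations. Your derivation of \eqref{eq:limiting} is in fact more explicit than the paper's, which simply asserts that the weak duality bounds ``easily lead to'' those limits.
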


\begin{proof}

For any $g\in\mathcal{C}(x)$ and $h\in\mathcal{D}(y)$, using the
budget constraint $\int_{\mathbf{\Omega}}(g-f)h\ud\mu\leq xy$ in the
same manner as the arguments leading to \eqref{eq:vuineq}, we may
bound the achievable utility according to
\begin{equation}
\int_{\mathbf{\Omega}}U(g)\ud\mu \leq \int_{\mathbf{\Omega}}\left(V(h)
+ fh\right)\ud\mu + xy, \quad x,y>0,
\label{eq:ucbound}
\end{equation}
Maximising the left-hand-side of \eqref{eq:ucbound} over
$g\in\mathcal{C}(x)$ and minimising the right-hand-side over
$h\in\mathcal{D}(y)$ gives $u(x)\leq v(y)+xy$, and \eqref{eq:weakd}
follows.

The assumption that $v(y)<\infty$ for all $y>0$ immediately yields
that $u(x)$ is finitely valued for some $x>0$. Since $U(\cdot)$ is
strictly increasing and strictly concave, and given the convexity of
$\mathcal{C}$, these properties are inherited by $u(\cdot)$, which is
therefore finitely valued for all $x>0$. Finally, the relations in
\eqref{eq:weakd} easily lead to those in \eqref{eq:limiting}.
  
\end{proof}

The next step is to give a compactness lemma for the primal
domain. The proof is on similar lines to the proof (for the dual domain) of
\cite[Lemma 3.6]{most15}, and uses \cite[Lemma A1.1]{ds94} (adapted
from a probability space to the finite measure space
$(\mathbf{\Omega},\mathcal{G},\mu)$), so for brevity is omitted.

\begin{lemma}[Compactness lemma for $\mathcal{C}$]
\label{lem:Ccompact}

Let $(\tilde{g}^{n})_{n\in\mathbb{N}}$ be a sequence in
$\mathcal{C}$. Then there exists a sequence $(g^{n})_{n\in\mathbb{N}}$
with $g^{n}\in\conv(\tilde{g}^{n}, \tilde{g}^{n+1},\ldots)$, which
converges $\mu$-a.e. to an element $g\in\mathcal{C}$ that is
$\mu$-a.e. finite.

\end{lemma}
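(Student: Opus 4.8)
The plan is to build the convex-combination sequence by a Koml\'os-type argument and then identify its $\mu$-a.e.\ limit as an element of $\mathcal{C}$ using the structural facts recorded in Proposition \ref{prop:propertiesCD}. First I would invoke \cite[Lemma A1.1]{ds94}, which applies verbatim on the finite measure space $(\mathbf{\Omega},\mathcal{G},\mu)$ in place of a probability space: applied to the sequence $(\tilde{g}^{n})_{n\in\mathbb{N}}\subseteq L^{0}_{+}(\mu)$ it yields, for each $n$, a convex combination $g^{n}\in\conv(\tilde{g}^{n},\tilde{g}^{n+1},\ldots)$ such that $(g^{n})_{n\in\mathbb{N}}$ converges $\mu$-a.e.\ to some $g:\mathbf{\Omega}\to[0,+\infty]$. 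Since $\mathcal{C}$ is convex, every $g^{n}$ lies in $\mathcal{C}$.

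The next step is to show the limit $g$ is $\mu$-a.e.\ finite. This is where the $L^{0}(\mu)$-boundedness of $\mathcal{C}$ from Proposition \ref{prop:propertiesCD} enters: for each $M>0$ one has $\{g=+\infty\}\subseteq\liminf_{n}\{g^{n}>M\}$ up to a $\mu$-null set, so Fatou's lemma for sets gives $\mu(g=+\infty)\leq\liminf_{n}\mu(g^{n}>M)\leq\sup_{n}\mu(g^{n}>M)$, and the right-hand side tends to $0$ as $M\to\infty$ precisely because $\mathcal{C}$ is bounded in $L^{0}(\mu)$; hence $\mu(g=+\infty)=0$ and $g\in L^{0}_{+}(\mu)$. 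Finally, since $\mathcal{C}$ is closed with respect to convergence in measure $\mu$ (again Proposition \ref{prop:propertiesCD}) and $g^{n}\to g$ $\mu$-a.e.\ with $g$ finite, it follows that $g\in\mathcal{C}$, which completes the argument.

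I expect the only genuinely load-bearing point to be the finiteness of the limit, and it is worth emphasising that this rests ultimately on the strictly positive interest-rate bound \eqref{eq:rbound}: via Lemma \ref{lem:Dbounded} and the budget constraint \eqref{eq:bc} one produces the strictly positive element $\overline{h}\in\mathcal{D}$ witnessing $L^{1}(\overline{h}\ud\mu)$-, hence $L^{0}(\mu)$-, boundedness of $\mathcal{C}$. Apart from that, the proof is the standard Koml\'os/closedness mechanism already used for the dual domain in \cite[Lemma 3.6]{most15}, which is why the details are omitted here.
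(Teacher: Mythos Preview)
Your proposal is correct and follows essentially the same route the paper indicates: invoke \cite[Lemma A1.1]{ds94} on the finite measure space $(\mathbf{\Omega},\mathcal{G},\mu)$ to extract a $\mu$-a.e.\ convergent sequence of convex combinations, use the $L^{0}(\mu)$-boundedness of $\mathcal{C}$ (from Proposition \ref{prop:propertiesCD}) for finiteness of the limit, and then closedness and convexity of $\mathcal{C}$ to conclude $g\in\mathcal{C}$, exactly mirroring the argument the paper gives for $\mathcal{D}$ in Lemma \ref{lem:Dcompact} and attributes to \cite[Lemma 3.6]{most15}. Your explicit Fatou-for-sets justification of finiteness is just an unpacking of what \cite[Lemma A1.1]{ds94} already delivers under $L^{0}$-boundedness, so there is no substantive difference.
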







Here is the next step in this chain of results, a uniform
integrability property associated with elements of $\mathcal{C}$ and
the positive part of the utility function.

\begin{lemma}[Uniform integrability of
$(U^{+}(g^{n}))_{n\in\mathbb{N}},\,g^{n}\in\mathcal{C}(x)$]
\label{lem:Uplusg}

The family $(U^{+}(g))_{g\in\mathcal{C}(x)}$ is uniformly integrable,
for any $x>0$.
  
\end{lemma}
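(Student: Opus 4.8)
The plan is to reduce uniform integrability of $(U^{+}(g))_{g\in\mathcal{C}(x)}$ to the combination of two facts already at our disposal: the $L^{1}(\overline{h}\,\ud\mu)$-boundedness of $\mathcal{C}$ established in the proof of Proposition \ref{prop:propertiesCD} (with $\overline{h}=\zeta\exp(-\int_{0}^{\cdot}r_{s}\ud s)$ a strictly positive element of $\mathcal{D}$), and the asymptotic decay of $U^{+}(x)/x$ as $x\to\infty$, which follows from the Inada condition $U^{\prime}(\infty)=0$ together with concavity. Since $\mathcal{C}(x)=x\mathcal{C}$, it suffices to treat $x=1$.

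First I would record that, because $U$ is concave and increasing with $U^{\prime}(\infty)=0$, for any $\epsilon>0$ there is a constant $K_{\epsilon}$ such that $U^{+}(z)\leq \epsilon z + K_{\epsilon}$ for all $z\geq 0$ (take $K_{\epsilon}=U^{+}(z_{\epsilon})$ where $z_{\epsilon}$ is chosen so that $U^{\prime}(z)\leq\epsilon$ for $z\geq z_{\epsilon}$, and bound $U$ above by its tangent line plus a constant). Next, for $g\in\mathcal{C}$ and any level $m>0$, on the set $\{U^{+}(g)>m\}$ we estimate $U^{+}(g)\leq \epsilon g + K_{\epsilon}$, so
\begin{equation*}
\int_{\{U^{+}(g)>m\}}U^{+}(g)\ud\mu \leq \epsilon\int_{\mathbf{\Omega}}g\ud\mu + K_{\epsilon}\,\mu(\{U^{+}(g)>m\}).
\end{equation*}
The issue is that $\mu$ itself is only a finite measure on $\mathbf{\Omega}$, but $g$ need not be bounded, so I would instead split against $\overline{h}$: on $\{U^{+}(g)>m\}$ one has (for $m$ large, hence $g$ large there) that $g$ is at least some threshold, and by Chebyshev applied to the finite, uniformly bounded quantity $\int_{\mathbf{\Omega}}g\overline{h}\ud\mu\leq 1+\overline{a}/\underline{r}$ from \eqref{eq:boundghbar}, together with a lower bound on $\overline{h}$ restricted to a large ball $[0,T]\times\Omega$, one controls $\mu(\{U^{+}(g)>m\}\cap([0,T]\times\Omega))$ uniformly in $g$; the tail $(T,\infty)\times\Omega$ contributes at most $\epsilon\,\kappa((T,\infty))$ after absorbing the factor $K_{\epsilon}$, which is small for $T$ large. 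Cleaning this up, the bound $\int_{\{U^{+}(g)>m\}}U^{+}(g)\ud\mu$ is shown to be $\leq \epsilon\,(1+\overline{a}/\underline{r}) + o_{m}(1)$ uniformly over $g\in\mathcal{C}$, and letting first $m\to\infty$ and then $\epsilon\downarrow 0$ gives the claim.

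The main obstacle is the interplay between $g$ being unbounded and $\mu$ being finite but the weight $\overline{h}$ not being bounded below: one cannot directly say ``$\mathcal{C}$ bounded in $L^{1}(\mu)$'' — indeed only $L^{1}(\overline{h}\,\ud\mu)$-boundedness is available — so the standard de la Vallée-Poussin-style argument has to be run against $\overline{h}\,\ud\mu$ and then transferred back to $\ud\mu$, which is where the explicit form $\overline{h}_{t}=\e^{\delta t}\exp(-\int_{0}^{t}r_{s}\ud s)$ and the lower bound \eqref{eq:rbound} are used to make the transfer quantitative on bounded time intervals while the impatience/discounting handles the tail. A cleaner alternative, which I would check first, is to observe that $U^{+}$ grows sublinearly so that $U^{+}(g)\leq \epsilon g\overline{h} + \psi_{\epsilon}(g)$ where $\psi_{\epsilon}$ is bounded and supported where $g$ is moderate — then uniform integrability of the first term is immediate from $L^{1}(\overline{h}\,\ud\mu)$-boundedness and uniform integrability of the second is automatic since it is bounded and $\mu$ is finite; this avoids the Chebyshev step entirely and is likely the intended short argument.
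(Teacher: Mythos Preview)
Your approach is quite different from the paper's. The paper argues by contradiction in the style of Kramkov--Schachermayer: if the family were not uniformly integrable, one could find $\alpha>0$ and a disjoint sequence $(A_{n})$ with $\int U^{+}(g^{n})\mathbbm{1}_{A_{n}}\ud\mu\geq\alpha$, then assemble $f^{n}:=x_{0}+\sum_{k=1}^{n}g^{k}\mathbbm{1}_{A_{k}}$, show that $f^{n}\in\mathcal{C}(c_{n})$ with $c_{n}$ growing linearly in $n$ while $\int U(f^{n})\ud\mu\geq\alpha n$, and conclude $\limsup_{z\to\infty}u(z)/z>0$, contradicting the weak duality bound \eqref{eq:limiting}.

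Your direct route has a genuine gap. A minor point first: $\mathcal{C}(x)\neq x\mathcal{C}$ here, since the income $f$ does not scale with initial wealth; the reduction to $x=1$ is therefore incorrect (though inessential to the strategy). The real obstruction is the claimed deterministic lower bound on $\overline{h}_{t}=\exp\bigl(\delta t-\int_{0}^{t}r_{s}\ud s\bigr)$ over $[0,T]\times\Omega$. The paper assumes only $r_{t}\geq\underline{r}>0$ and pathwise local integrability of $r$; no upper bound on $r$ is imposed, so $\int_{0}^{t}r_{s}\ud s$ can be arbitrarily large with positive probability and $\overline{h}$ is \emph{not} bounded below on any slab $[0,T]\times\Omega$. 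Both your Chebyshev transfer and the ``cleaner alternative'' $U^{+}(g)\leq\epsilon g\overline{h}+\psi_{\epsilon}(g)$ with $\psi_{\epsilon}$ bounded break down on the event where $\overline{h}$ is tiny and $g$ is large: neither $\epsilon g\overline{h}$ nor a bounded function can dominate $U^{+}(g)$ there. A correct direct argument does exist, but it must invoke the standing assumption $v(y)<\infty$: for small $y>0$ pick $h\in\mathcal{D}(y)$ with $\int V^{+}(h)\ud\mu<\infty$, use the Fenchel bound $U^{+}(g)\leq V^{+}(h)+gh$, control the $gh$ term by $y(x+\overline{a}/\underline{r})$ via the budget constraint and Lemma~\ref{lem:Dbounded}, and let $y\downarrow 0$. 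This is essentially dual to the paper's argument (which also uses $v<\infty$, through weak duality), but it is not the argument you sketched.
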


\begin{proof}

Fix $x>0$. If $U(\infty)\leq 0$ there is nothing to prove, so assume
$U(\infty)>0$.

If the sequence $(U^{+}(g^{n}))_{n\in\mathbb{N}}$ is not uniformly
integrable, then, passing if need be to a subsequence still denoted by 
$(g^{n})_{n\in\mathbb{N}}$, we can find a constant $\alpha>0$ and a
disjoint sequence $(A_{n})_{n\in\mathbb{N}}$ of sets of
$(\mathbf{\Omega},\mathcal{G})$ (so
$A_{n}\in\mathcal{G},\,n\in\mathbb{N}$ and $A_{i}\cap A_{j}=\emptyset$
if $i\neq j$) such that
\begin{equation*}
\int_{\mathbf{\Omega}}U^{+}(g^{n})\mathbbm{1}_{A_{n}}\ud\mu \geq
\alpha, \quad n\in\mathbb{N}.  
\end{equation*}
(See for example  \cite[Corollary A.1.1]{pham09}.) Define a
sequence $(f^{n})_{n\in\mathbb{N}}\in L^{0}_{+}(\mu)$ by
\begin{equation*}
f^{n} := x_{0} + \sum_{k=1}^{n}g^{k}\mathbbm{1}_{A_{k}}, \quad
x_{0}:=\inf\{x>0:\,U(x)\geq 0\}. 
\end{equation*}
For any $h\in\mathcal{D}$ (so
$\int_{\mathbf{\Omega}}h\ud\mu\leq 1/\underline{r}$,
$\int_{\mathbf{\Omega}}(g^{k}-f)h\ud\mu\leq x,\,k=1,\ldots,n$,
$\int_{\mathbf{\Omega}}fh\ud\mu\leq \overline{a}/\underline{r}$), we
have
\begin{eqnarray*}
\int_{\mathbf{\Omega}}(f^{n}-f)h\ud\mu & = &
\int_{\mathbf{\Omega}}\left(x_{0} +
\sum_{k=1}^{n}g^{k}\mathbbm{1}_{A_{k}} - f\right)h\ud\mu \\
& \leq & \frac{x_{0}}{\underline{r}} +
\sum_{k=1}^{n}\int_{\mathbf{\Omega}}(g^{k}-f+f)h\ud\mu  -
\int_{\mathbf{\Omega}}fh\ud\mu \\
& = & \frac{x_{0}}{\underline{r}} +
\sum_{k=1}^{n}\int_{\mathbf{\Omega}}(g^{k}-f)h\ud\mu +
(n-1)\int_{\mathbf{\Omega}}fh\ud\mu \\ 
&\leq & \frac{1}{\underline{r}}\left(x_{0}+(n-1)\overline{a}\right) +
nx.  
\end{eqnarray*}
Thus,
$f^{n}\in\mathcal{C}\left(\left(x_{0}+(n-1)\overline{a}\right)/\underline{r}
+ nx\right),\,n\in\mathbb{N}$.

On the other hand, since $U^{+}(\cdot)$ is non-negative and
non-decreasing,
\begin{eqnarray*}
\int_{\mathbf{\Omega}}U(f^{n})\ud\mu = 
\int_{\mathbf{\Omega}}U^{+}(f^{n})\ud\mu
& = & \int_{\mathbf{\Omega}}U^{+}\left(x_{0} +
\sum_{k=1}^{n}g^{k}\mathbbm{1}_{A_{k}}\right)\ud\mu \\
& \geq & \int_{\mathbf{\Omega}}
U^{+}\left(\sum_{k=1}^{n}g^{k}\mathbbm{1}_{A_{k}}\right)\ud\mu \\
& = & \sum_{k=1}^{n}\int_{\mathbf{\Omega}}
U^{+}\left(g^{k}\mathbbm{1}_{A_{k}}\right)\ud\mu \geq \alpha n.
\end{eqnarray*}
Therefore,
\begin{eqnarray*}
\limsup_{z\to\infty}\frac{u(z)}{z} & = &
\limsup_{n\to\infty}\frac{u\left(\left(x_{0} +
(n-1)\overline{a}\right)/\underline{r}
+ nx\right)}{\left(x_{0} + (n-1)\overline{a}\right)/\underline{r} +
nx} \\  
& \geq & \limsup_{n\to\infty}
\frac{\int_{\mathbf{\Omega}}U(f^{n})\ud\mu}{\left(x_{0} +
(n-1)\overline{a}\right)/\underline{r} + nx} \\ 
& \geq & \limsup_{n\to\infty}
\left(\frac{\alpha n}{\left(x_{0} +
(n-1)\overline{a}\right)/\underline{r} + nx}\right)
= \frac{\alpha}{x+\overline{a}/\underline{r}} > 0,
\end{eqnarray*}
which contradicts the limiting weak duality bound in
\eqref{eq:limiting}, and establishes the result.

\end{proof}

We can now prove existence of a unique optimiser in the primal
problem.

\begin{lemma}[Primal existence]
\label{lem:primexis}

The optimal solution $\widehat{g}(x)\in\mathcal{C}(x)$ to the primal
problem \eqref{eq:vfabs} exists and is unique, so that $u(\cdot)$ is
strictly concave.  
  
\end{lemma}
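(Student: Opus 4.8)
The plan is to follow the classical Koml\'os-type argument for primal existence in utility maximisation, now that Lemma \ref{lem:Ccompact} (compactness of $\mathcal{C}$) and Lemma \ref{lem:Uplusg} (uniform integrability of the positive parts $U^{+}(g)$ over $g\in\mathcal{C}(x)$) are in hand. First I would take a maximising sequence $(\tilde g^{n})_{n\in\mathbb{N}}\subseteq\mathcal{C}(x)$, so that $\int_{\mathbf{\Omega}}U(\tilde g^{n})\ud\mu\to u(x)$. By Lemma \ref{lem:Ccompact} (rescaled from $\mathcal{C}$ to $\mathcal{C}(x)=x\mathcal{C}$) there is a sequence of convex combinations $g^{n}\in\conv(\tilde g^{n},\tilde g^{n+1},\ldots)$ converging $\mu$-a.e.\ to some $\widehat g(x)\in\mathcal{C}(x)$ which is $\mu$-a.e.\ finite. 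Since $U$ is concave, each $\int_{\mathbf{\Omega}}U(g^{n})\ud\mu\geq$ the corresponding convex combination of the $\int_{\mathbf{\Omega}}U(\tilde g^{k})\ud\mu$, so $(g^{n})_{n\in\mathbb{N}}$ is still a maximising sequence.

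Next I would pass to the limit in $\int_{\mathbf{\Omega}}U(g^{n})\ud\mu$. Split $U=U^{+}-U^{-}$. For the negative part, $U^{-}(g^{n})\to U^{-}(\widehat g(x))$ $\mu$-a.e.\ and Fatou's lemma gives $\limsup_{n}\big(-\int_{\mathbf{\Omega}}U^{-}(g^{n})\ud\mu\big)\leq -\int_{\mathbf{\Omega}}U^{-}(\widehat g(x))\ud\mu$. For the positive part, Lemma \ref{lem:Uplusg} says $(U^{+}(g^{n}))_{n\in\mathbb{N}}$ is uniformly integrable, so by Vitali's convergence theorem $\int_{\mathbf{\Omega}}U^{+}(g^{n})\ud\mu\to\int_{\mathbf{\Omega}}U^{+}(\widehat g(x))\ud\mu$. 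Combining, $u(x)=\limsup_{n}\int_{\mathbf{\Omega}}U(g^{n})\ud\mu\leq\int_{\mathbf{\Omega}}U(\widehat g(x))\ud\mu$, and since $\widehat g(x)\in\mathcal{C}(x)$ the reverse inequality is trivial, so $\widehat g(x)$ is optimal. (One should also note $u(x)$ is finite by Lemma \ref{lem:weakdual}, so these manipulations are legitimate and $U(\widehat g(x))$ is $\mu$-integrable.)

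For uniqueness and strict concavity of $u(\cdot)$: suppose $\widehat g_{1},\widehat g_{2}\in\mathcal{C}(x)$ are both optimal with $\widehat g_{1}\neq\widehat g_{2}$ on a set of positive $\mu$-measure. By convexity of $\mathcal{C}(x)$, $\tfrac12(\widehat g_{1}+\widehat g_{2})\in\mathcal{C}(x)$, and strict concavity of $U$ gives $\int_{\mathbf{\Omega}}U\big(\tfrac12(\widehat g_{1}+\widehat g_{2})\big)\ud\mu>\tfrac12\int_{\mathbf{\Omega}}U(\widehat g_{1})\ud\mu+\tfrac12\int_{\mathbf{\Omega}}U(\widehat g_{2})\ud\mu=u(x)$, a contradiction; hence the optimiser is unique. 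The same strict-concavity estimate applied to $g_{1}\in\mathcal{C}(x_{1})$, $g_{2}\in\mathcal{C}(x_{2})$ optimal, using $\tfrac12(g_{1}+g_{2})\in\mathcal{C}\big(\tfrac12(x_{1}+x_{2})\big)$, yields $u\big(\tfrac12(x_{1}+x_{2})\big)>\tfrac12 u(x_{1})+\tfrac12 u(x_{2})$ for $x_{1}\neq x_{2}$, so $u(\cdot)$ is strictly concave.

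The only genuinely delicate point is the interchange of limit and integral for the positive part, and this is exactly what Lemma \ref{lem:Uplusg} is designed to supply — so with that lemma cited, the argument is routine. The main obstacle is therefore already dispatched upstream; here one just has to be careful that the convex-combination step preserves both membership in $\mathcal{C}(x)$ and the maximising property, and that all integrals are finite so that splitting $U=U^{+}-U^{-}$ is harmless.
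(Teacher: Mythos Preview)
Your proof is correct and follows essentially the same route as the paper's: take a maximising sequence, pass to Koml\'os-type convex combinations via Lemma~\ref{lem:Ccompact}, apply Fatou to $U^{-}$ and uniform integrability (Lemma~\ref{lem:Uplusg}) to $U^{+}$, then deduce uniqueness and strict concavity of $u(\cdot)$ from strict concavity of $U$. One small slip: in the presence of the income $f$ one does \emph{not} have $\mathcal{C}(x)=x\mathcal{C}$, so the compactness lemma extends to $\mathcal{C}(x)$ not by rescaling but because $\mathcal{C}(x)$ is itself convex, closed and $L^{0}(\mu)$-bounded (the proof of Proposition~\ref{prop:propertiesCD} goes through for any $x>0$).
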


\begin{proof}

Fix $x>0$. Let $(g^{n})_{n\in\mathbb{N}}$ be a maximising sequence
in $\mathcal{C}(x)$ for $u(x)<\infty$ (the finiteness proven in
Lemma \ref{lem:weakdual}). That is
\begin{equation}
\lim_{n\to\infty}\int_{\mathbf{\Omega}}U(g^{n})\ud\mu = u(x) < \infty.
\label{eq:maxseqprimal}
\end{equation}

By the compactness lemma for $\mathcal{C}$ (and thus also for
$\mathcal{C}(x)$), Lemma \ref{lem:Ccompact}, we can find a sequence
$(\widehat{g}^{n})_{n\in\mathbb{N}}$ of convex combinations, so
$\mathcal{C}(x)\owns\widehat{g}^{n}\in \conv(g^{n},
g^{n+1},\ldots),\,n\in\mathbb{N}$, which converges $\mu$-a.e. to some
element $\widehat{g}(x)\in\mathcal{C}(x)$. We claim that
$\widehat{g}(x)$ is the primal optimiser. That is, that we have
\begin{equation}
\int_{\mathbf{\Omega}}U(\widehat{g}(x))\ud\mu = u(x).
\label{eq:primaloptimiser}
\end{equation}
By concavity of $U(\cdot)$ and \eqref{eq:maxseqprimal} we have
\begin{equation*}
\lim_{n\to\infty}\int_{\mathbf{\Omega}}U(\widehat{g}^{n})\ud\mu \geq
\lim_{n\to\infty}\int_{\mathbf{\Omega}}U(g^{n})\ud\mu = u(x),
\end{equation*}
which, combined with the obvious inequality
$u(x)\geq
\lim_{n\to\infty}\int_{\mathbf{\Omega}}U(\widehat{g}^{n})\ud\mu$ means
that we also have, further to \eqref{eq:maxseqprimal},
\begin{equation*}
\lim_{n\to\infty}\int_{\mathbf{\Omega}}U(\widehat{g}^{n})\ud\mu =
u(x).  
\end{equation*}
In other words
\begin{equation}
\lim_{n\to\infty}\int_{\mathbf{\Omega}}U^{+}(\widehat{g}^{n})\ud\mu -
\lim_{n\to\infty}\int_{\mathbf{\Omega}}U^{-}(\widehat{g}^{n})\ud\mu =
u(x) < \infty,    
\label{eq:Uplusminus}
\end{equation}
and note therefore that both integrals in \eqref{eq:Uplusminus} are
finite.

From Fatou's lemma, we have
\begin{equation}
\lim_{n\to\infty}\int_{\mathbf{\Omega}}U^{-}(\widehat{g}^{n})\ud\mu
\geq \int_{\mathbf{\Omega}}U^{-}(\widehat{g}(x))\ud\mu.  
\label{eq:Uminus}
\end{equation}
From Lemma \ref{lem:Uplusg} we have uniform integrability of
$(U^{+}(\widehat{g}^{n}))_{n\in\mathbb{N}}$, so that
\begin{equation}
\lim_{n\to\infty}\int_{\mathbf{\Omega}}U^{+}(\widehat{g}^{n})\ud\mu
= \int_{\mathbf{\Omega}}U^{+}(\widehat{g}(x))\ud\mu.  
\label{eq:Uplus}
\end{equation}
Thus, using \eqref{eq:Uminus} and \eqref{eq:Uplus} in
\eqref{eq:Uplusminus}, we obtain
\begin{equation*}
u(x) \leq \int_{\mathbf{\Omega}}U(\widehat{g}(x))\ud\mu,
\end{equation*}
which, combined with the obvious inequality
$u(x)\geq\int_{\mathbf{\Omega}}U(\widehat{g}(x))\ud\mu$, yields
\eqref{eq:primaloptimiser}. The uniqueness of the primal optimiser
follows from the strict concavity of $U(\cdot)$, as does the strict
concavity of $u(\cdot)$. For this last claim, fix $x_{1}<x_{2}$ and
$\lambda\in(0,1)$, note that
$\lambda\widehat{g}(x_{1}) +
(1-\lambda)\widehat{g}(x_{2})\in\mathcal{C}(\lambda x_{1} +
(1-\lambda)x_{2})$ (yet must be sub-optimal for
$u(\lambda x_{1}+(1-\lambda)x_{2})$ as it is not guaranteed to equal
$\widehat{g}(\lambda x_{1}+(1-\lambda)x_{2})$) and therefore, using
the strict concavity of $U(\cdot)$,
\begin{equation*}
u(\lambda x_{1} + (1-\lambda)x_{2})  \geq
\int_{\mathbf{\Omega}}U\left(\lambda\widehat{g}(x_{1}) +
  (1-\lambda)\widehat{g}(x_{2})\right)\ud\mu > \lambda u(x_{1}) +
(1-\lambda)u(x_{2}). 
\end{equation*}

\end{proof}

We can now move to the dual side of the analysis, which will lead to
the demonstration of conjugacy of the value functions as well as dual
existence and uniqueness. In many duality proofs this is accompanied
by an enlargement of the dual domain, a demonstration of closedness of
the enlarged domain, and subsequent use of the bipolar theorem of
\cite{bs99} on $L^{0}_{+}(\mu)$ to confirm that, with the enlargement,
we have reached the bipolar of the original dual domain. Here, because
the variable $g-f$ appearing in the budget constraint is not an
element of $L^{0}_{+}(\mu)$, the use of the bipolar theorem is not
available, as we noted in Remark \ref{rem:solid}.

Our program is to use the fact that our dual domain is closed, as
established in Lemma \ref{lem:Dclosed}, to directly derive a
compactness lemma for $\mathcal{D}$ (Lemma \ref{lem:Dcompact} below)
and we proceed from there to show dual existence and conjugacy.

\begin{lemma}[Compactness lemma for $\mathcal{D}$]
\label{lem:Dcompact}

Let $(\tilde{h}^{n})_{n\in\mathbb{N}}$ be a sequence in
$\mathcal{D}$. Then there exists a sequence $(h^{n})_{n\in\mathbb{N}}$
with $h^{n}\in\conv(\tilde{h}^{n}, \tilde{h}^{n+1},\ldots)$, which
converges $\mu$-a.e. to an element $h\in\mathcal{D}$ that is
$\mu$-a.e. finite.

\end{lemma}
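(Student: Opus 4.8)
The plan is to mirror the standard compactness argument for dual domains (as in \cite[Lemma 3.6]{most15} or \cite[Lemma 3.2]{ks99}), which combines the $L^{1}(\mu)$-boundedness of $\mathcal{D}$ with Komlós-type convex-combination extraction and the closedness of $\mathcal{D}$ just proved in Lemma \ref{lem:Dclosed}. First I would invoke Lemma \ref{lem:Dbounded}, which gives $\int_{\mathbf{\Omega}}h\ud\mu\leq 1/\underline{r}$ for every $h\in\mathcal{D}$; in particular the sequence $(\tilde{h}^{n})_{n\in\mathbb{N}}$ is bounded in $L^{1}(\mu)$. Then, applying \cite[Lemma A1.1]{ds94} (in the form adapted to the finite measure space $(\mathbf{\Omega},\mathcal{G},\mu)$, exactly as used for the primal domain in Lemma \ref{lem:Ccompact}), one extracts a sequence $(h^{n})_{n\in\mathbb{N}}$ of convex combinations $h^{n}\in\conv(\tilde{h}^{n},\tilde{h}^{n+1},\ldots)$ which converges $\mu$-a.e. to some $h\in L^{0}_{+}(\mu)$ that is $\mu$-a.e. finite.

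It then remains to check that the limit $h$ lies in $\mathcal{D}$. This is where Lemma \ref{lem:Dclosed} does the work: since $\mathcal{D}$ is convex, each convex combination $h^{n}$ lies in $\mathcal{D}$; and since $\mathcal{D}$ is closed in the topology of convergence in measure $\mu$ and $h^{n}\to h$ $\mu$-a.e. (hence in $\mu$-measure, $\mu$ being finite), we conclude $h\in\mathcal{D}$, as required. The $\mu$-a.e.\ finiteness of $h$ is automatic from the Komlós extraction together with the uniform $L^{1}(\mu)$ bound (Fatou's lemma gives $\int_{\mathbf{\Omega}}h\ud\mu\leq 1/\underline{r}<\infty$).

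The only potential obstacle is purely formal: ensuring that the extraction lemma of \cite{ds94}, stated for probability spaces, transfers verbatim to the finite measure space $(\mathbf{\Omega},\mathcal{G},\mu)$ — but this is routine (normalise $\mu$ to a probability measure, apply the lemma, and note that convergence $\mu$-a.e.\ is unaffected), and indeed the same transfer was already used in Lemma \ref{lem:Ccompact}. Consequently the proof is essentially a one-paragraph assembly of Lemmata \ref{lem:Dclosed} and \ref{lem:Dbounded} with the classical convex-combinations argument, and can be stated briefly, or even omitted with a pointer to the proof of Lemma \ref{lem:Ccompact}.
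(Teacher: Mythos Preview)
Your proposal is correct and essentially identical to the paper's proof: both apply \cite[Lemma A1.1]{ds94} on $(\mathbf{\Omega},\mathcal{G},\mu)$ to extract $\mu$-a.e.\ convergent convex combinations, use convexity of $\mathcal{D}$ to keep the $h^{n}$ in $\mathcal{D}$, and invoke Lemma~\ref{lem:Dclosed} for the limit. The only cosmetic difference is that the paper cites $L^{0}(\mu)$-boundedness (from Proposition~\ref{prop:propertiesCD}) rather than $L^{1}(\mu)$-boundedness for the finiteness of the limit, but your route via Lemma~\ref{lem:Dbounded} and Fatou is equally valid.
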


\begin{proof}

\cite[Lemma A1.1]{ds94} (adapted from a probability space to the
measure space $(\mathbf{\Omega},\mathcal{G},\mu)$) implies the
existence of a sequence $(h^{n})_{n\in\mathbb{N}}$, with
$h^{n}\in\conv(\tilde{h}^{n}, \tilde{h}^{n+1},\ldots)$, which
converges $\mu$-a.e. to an element $h$ that is $\mu$-a.e. finite
because $\mathcal{D}$ is bounded in $L^{0}(\mu)$ (the finiteness also
from \cite[Lemma A1.1]{ds94}). By convexity of $\mathcal{D}$, each
$h^{n},\,n\in\mathbb{N}$ lies in $\mathcal{D}$. Finally, we note that
$h\in\mathcal{D}$ because, according to Lemma \ref{lem:Dclosed},
$\mathcal{D}$ is closed with respect to the topology of convergence in
measure $\mu$.

\end{proof}

The next step in the chain of results we need is a uniform
integrability result for the family
$(V^{-}(h))_{h\in\mathcal{D}(y)}$. The proof uses the
$L^{1}(\mu)$-boundedness of $\mathcal{D}$ and is similar to the proof
in \cite[Lemma 3.2]{ks99}, but the bound on
$\int_{\mathbf{\Omega}}h\ud\mu,\,h\in\mathcal{D}(y)$ here is
$y/\underline{r}$ (as opposed to $y$ in the classical case of
\cite{ks99}). For brevity, therefore, the proof is omitted.

\begin{lemma}[Uniform integrability of $(V^{-}(h))_{h\in\mathcal{D}(y)}$]
\label{lem:Vminush}

The family $(V^{-}(h))_{h\in\mathcal{D}(y)}$ is uniformly integrable,
for any $y>0$.
  
\end{lemma}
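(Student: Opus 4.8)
The plan is to follow the template of the classical argument \cite[Lemma 3.2]{ks99}, the only adjustment being the replacement of the constant $1$ by $1/\underline{r}$ in the relevant $L^{1}(\mu)$ bound: by Lemma \ref{lem:Dbounded} together with $\mathcal{D}(y)=y\mathcal{D}$, every $h\in\mathcal{D}(y)$ satisfies $\int_{\mathbf{\Omega}}h\ud\mu\le y/\underline{r}$. The whole statement reduces, via elementary truncation estimates, to a single sub-linear domination of $V^{-}$.

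\textbf{Step 1: a sub-linear bound on $V^{-}$.} We claim that for every $\epsilon>0$ there is a constant $c_{\epsilon}\ge 0$ with $V^{-}(z)\le\epsilon z+c_{\epsilon}$ for all $z>0$. Since $V^{\prime}(\cdot)=-I(\cdot)=-(U^{\prime})^{-1}(\cdot)$ and $U^{\prime}(\infty)=0$, we have $I(z)\downarrow 0$ as $z\to\infty$, so there is $z_{\epsilon}>0$ with $I(z)<\epsilon$ for all $z\ge z_{\epsilon}$. For $z\ge z_{\epsilon}$, integrating $V^{\prime}=-I$ yields $V(z)\ge V(z_{\epsilon})-\epsilon(z-z_{\epsilon})\ge V(z_{\epsilon})-\epsilon z$, hence $V^{-}(z)\le\epsilon z+|V(z_{\epsilon})|$. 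For $0<z<z_{\epsilon}$, monotonicity of $V$ gives $V(z)\ge V(z_{\epsilon})$, hence $V^{-}(z)\le V^{-}(z_{\epsilon})\le|V(z_{\epsilon})|$. So $c_{\epsilon}:=|V(z_{\epsilon})|$ works.

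\textbf{Step 2: tail estimate and conclusion.} Fix $y>0$, $\epsilon>0$ and $h\in\mathcal{D}(y)$, and take $K>c_{\epsilon}$. On $\{V^{-}(h)>K\}$ we have $\epsilon h+c_{\epsilon}\ge V^{-}(h)>K$, so $\{V^{-}(h)>K\}\subseteq\{h>(K-c_{\epsilon})/\epsilon\}$; Markov's inequality and the bound $\int_{\mathbf{\Omega}}h\ud\mu\le y/\underline{r}$ then give $\mu(\{V^{-}(h)>K\})\le\epsilon y/((K-c_{\epsilon})\underline{r})$. Combining this with Step 1,
\begin{equation*}
\int_{\{V^{-}(h)>K\}}V^{-}(h)\ud\mu\le\int_{\{V^{-}(h)>K\}}(\epsilon h+c_{\epsilon})\ud\mu\le\frac{\epsilon y}{\underline{r}}+\frac{c_{\epsilon}\epsilon y}{(K-c_{\epsilon})\underline{r}}.
\end{equation*}
Taking the supremum over $h\in\mathcal{D}(y)$ and then $K\to\infty$ gives $\limsup_{K\to\infty}\sup_{h\in\mathcal{D}(y)}\int_{\{V^{-}(h)>K\}}V^{-}(h)\ud\mu\le\epsilon y/\underline{r}$; as $\epsilon>0$ is arbitrary, this $\limsup$ equals $0$, which is exactly uniform integrability of $(V^{-}(h))_{h\in\mathcal{D}(y)}$ on the finite measure space $(\mathbf{\Omega},\mathcal{G},\mu)$. (The same estimate with $K=0$, $\epsilon=1$ also gives $\sup_{h\in\mathcal{D}(y)}\int_{\mathbf{\Omega}}V^{-}(h)\ud\mu<\infty$.)

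There is no genuine obstacle here; the one point requiring attention is Step 1, the sub-linear domination of $V^{-}$, which rests entirely on the Inada condition $U^{\prime}(\infty)=0$ (equivalently $-V^{\prime}(\infty)=0$). Everything else is bookkeeping, and the factor $1/\underline{r}$ inherited from Lemma \ref{lem:Dbounded} is harmless because $\mu$ is finite and the factor is simply absorbed into the constants.
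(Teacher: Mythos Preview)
Your proposal is correct and follows exactly the approach the paper indicates: the paper omits the proof, stating it is similar to \cite[Lemma 3.2]{ks99} with the $L^{1}(\mu)$ bound $\int_{\mathbf{\Omega}}h\ud\mu\le y/\underline{r}$ (from Lemma \ref{lem:Dbounded}) replacing the classical bound $y$. Your Step 1 (the sub-linear domination $V^{-}(z)\le\epsilon z+c_{\epsilon}$ via the Inada condition $U^{\prime}(\infty)=0$) and Step 2 (the tail estimate using Markov's inequality and the $L^{1}(\mu)$ bound) reproduce precisely the Kramkov--Schachermayer argument with the constant adjusted, as the paper intended.
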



  



One can can now proceed to prove existence of a unique optimiser in
the dual problem, and conjugacy of the value functions. We proceed
first with the former, followed by conjugacy.

The proof of Lemma \ref{lem:dualexis} on dual existence is on the same
lines as the proof of primal existence (Lemma \ref{lem:primexis}),
with adjustments for minimisation as opposed to maximisation and
convexity of $V(\cdot)$ (inherited by $V^{(f)}(y):=V(y)+fy,\,y>0$)
replacing concavity of $U(\cdot)$, and uses the uniform integrability
property of $(V^{-}(h))_{h\in\mathcal{D}(y)}$ in Lemma
\ref{lem:Vminush}, so for brevity is omitted.

\begin{lemma}[Dual existence]
\label{lem:dualexis}

The optimal solution $\widehat{h}(y)\in\mathcal{D}(y)$ to the dual
problem \eqref{eq:dvfabs} exists and is unique, so that $v(\cdot)$ is
strictly convex.  
  
\end{lemma}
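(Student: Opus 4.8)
The plan is to run the argument of Lemma~\ref{lem:primexis} on the dual side, with the obvious substitutions: suprema by infima, the concavity of $U(\cdot)$ by the convexity of $h\mapsto V^{(f)}(h):=V(h)+fh$ (convex since $V$ is convex and $h\mapsto fh$ is linear for fixed $f$), the uniform integrability of $(U^{+}(g))_{g\in\mathcal{C}(x)}$ by that of $(V^{-}(h))_{h\in\mathcal{D}(y)}$ from Lemma~\ref{lem:Vminush}, and the compactness lemma for $\mathcal{C}$ by the compactness lemma for $\mathcal{D}$ (Lemma~\ref{lem:Dcompact}, which is where the closedness of $\mathcal{D}$ from Lemma~\ref{lem:Dclosed} and its $L^{0}(\mu)$-boundedness from Proposition~\ref{prop:propertiesCD} are used). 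Since $v(y)<\infty$ by hypothesis, the objective is genuinely finite along a minimising sequence.

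Concretely, I would fix $y>0$ and pick a minimising sequence $(h^{n})_{n\in\mathbb{N}}$ in $\mathcal{D}(y)$, i.e.\ $\int_{\mathbf{\Omega}}V^{(f)}(h^{n})\ud\mu\to v(y)$. By Lemma~\ref{lem:Dcompact} there is a sequence $\widehat{h}^{n}\in\conv(h^{n},h^{n+1},\ldots)\subseteq\mathcal{D}(y)$ converging $\mu$-a.e.\ to some $\widehat{h}(y)\in\mathcal{D}(y)$ that is $\mu$-a.e.\ finite and strictly positive (being an element of $\mathcal{D}(y)$). Convexity of $V^{(f)}$ bounds $\int_{\mathbf{\Omega}}V^{(f)}(\widehat{h}^{n})\ud\mu$ by the corresponding convex combination of the $\int_{\mathbf{\Omega}}V^{(f)}(h^{k})\ud\mu$, whence $\limsup_{n}\int_{\mathbf{\Omega}}V^{(f)}(\widehat{h}^{n})\ud\mu\le v(y)$ and so $\int_{\mathbf{\Omega}}V^{(f)}(\widehat{h}^{n})\ud\mu\to v(y)$. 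Writing $V^{(f)}(h)=\bigl(V^{+}(h)+fh\bigr)-V^{-}(h)$ and noting $V^{\pm}(\widehat{h}^{n})\to V^{\pm}(\widehat{h}(y))$ $\mu$-a.e.\ (continuity of $V$ on $(0,\infty)$), Lemma~\ref{lem:Vminush} and Vitali's theorem give $\int_{\mathbf{\Omega}}V^{-}(\widehat{h}^{n})\ud\mu\to\int_{\mathbf{\Omega}}V^{-}(\widehat{h}(y))\ud\mu$, while Fatou's lemma on the non-negative integrands $V^{+}(\widehat{h}^{n})+f\widehat{h}^{n}$ gives $\liminf_{n}\int_{\mathbf{\Omega}}\bigl(V^{+}(\widehat{h}^{n})+f\widehat{h}^{n}\bigr)\ud\mu\ge\int_{\mathbf{\Omega}}\bigl(V^{+}(\widehat{h}(y))+f\widehat{h}(y)\bigr)\ud\mu$. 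Adding, $v(y)\ge\int_{\mathbf{\Omega}}V^{(f)}(\widehat{h}(y))\ud\mu\ge v(y)$, so $\widehat{h}(y)$ attains the infimum. Uniqueness follows from strict convexity of $V(\cdot)$ as in Lemma~\ref{lem:primexis} (two distinct optimisers would have a midpoint in $\mathcal{D}(y)$ of strictly smaller dual value), and the same convex-combination argument with $\lambda\widehat{h}(y_{1})+(1-\lambda)\widehat{h}(y_{2})\in\mathcal{D}(\lambda y_{1}+(1-\lambda)y_{2})$ gives strict convexity of $v(\cdot)$, the strict inequality because $\widehat{h}(y_{1})$ and $\widehat{h}(y_{2})$ cannot coincide $\mu$-a.e.\ for $y_{1}\ne y_{2}$ (by the normalisation $Y_{0}=y$ and right-continuity of elements of $\mathcal{Y}$).

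The step I expect to be the main obstacle is the passage to the limit in the dual objective. Because the random endowment makes $h\mapsto V^{(f)}(h)$ non-monotone (cf.\ Remark~\ref{rem:solid}), there is no closed solid enlargement to which a bipolar theorem could be applied; one must instead control $\int V^{+}$ and $\int V^{-}$ separately, using Fatou's lemma on the part bounded below and the uniform integrability of $(V^{-}(h))_{h\in\mathcal{D}(y)}$ (Lemma~\ref{lem:Vminush}, which itself rests on the $L^{1}(\mu)$-boundedness of $\mathcal{D}$ from Proposition~\ref{prop:propertiesCD}) on the rest, the endowment term $f\widehat{h}^{n}$ being handled by Fatou via the bound $\int_{\mathbf{\Omega}}fh\ud\mu\le\overline{a}/\underline{r}$. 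Everything else is a routine transcription of the primal existence proof, which is why the paper omits it.
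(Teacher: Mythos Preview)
Your proposal is correct and follows precisely the approach the paper indicates: mirror the primal existence argument of Lemma~\ref{lem:primexis}, replacing concavity of $U$ by convexity of $V^{(f)}(h)=V(h)+fh$, the compactness lemma for $\mathcal{C}$ by Lemma~\ref{lem:Dcompact}, and the uniform integrability of $(U^{+}(g))$ by that of $(V^{-}(h))_{h\in\mathcal{D}(y)}$ from Lemma~\ref{lem:Vminush}. Your handling of the limit via the split $V^{(f)}(h)=(V^{+}(h)+fh)-V^{-}(h)$, with Fatou on the non-negative part and Vitali on the uniformly integrable part, is exactly what is needed, and your justification that $\widehat{h}(y_{1})\neq\widehat{h}(y_{2})$ on a set of positive $\mu$-measure (via $Y_{0}=y$ and right-continuity) is in fact more explicit than the corresponding step in the paper's primal argument.
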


We can now establish conjugacy of the value functions. The proof works
in the manner of \cite[Lemma 3.4]{ks99}, by bounding the elements in
the primal domain to create a compact set $\mathcal{B}_{n}$ (the set
of elements in $L^{0}_{+}(\mu)$ lying in a ball of radius
$n,\,n\in\mathbb{N}$) for the weak$*$ topology
$\sigma(L^{\infty},L^{1})$ on $L^{\infty}(\mu)$,\footnote{Recall that
  a sequence $(g^{n})_{n\in\mathbb{N}}$ in $L^{\infty}(\mu)$ converges
  to $g\in L^{\infty}(\mu)$ with respect to the weak$*$ topology
  $\sigma(L^{\infty},L^{1})$ if and only if
  $(\langle g^{n},h\rangle)_{n\in\mathbb{N}}$ converges to
  $\langle g,h\rangle$ for each $h\in L^{1}(\mu)$.} so as to apply the
minimax theorem (see \cite[Theorem 45.8]{strasser85}), involving a
maximisation over a compact set and a minimisation over a subset of a
vector space, with the function
$w(g,h):=\int_{\mathbf{\Omega}}(U(g)-gh+fh)\ud\mu =
\int_{\mathbf{\Omega}}(U(g)-(g-f)h)\ud\mu$, for
$g\in\mathcal{B}_{n},\,h\in\mathcal{D}(y)$. The proof uses the dual
characterisation of $\mathcal{C}$ in \eqref{eq:dualC}, along with the
fact that the dual domain $\mathcal{D}(y)$ is bounded in $L^{1}(\mu)$,
as well as the compactness result of Lemma \ref{lem:Dcompact} and the
uniform integrability property of Lemma \ref{lem:Vminush}. As the
proof is on the lines of the proof of \cite[Lemma 3.4]{ks99}, it is
omitted for brevity.






  

  

Note that, because $u(\cdot)$ and $-v(\cdot)$ are strictly
concave, they are almost everywhere differentiable. We shall show in
Lemma \ref{lem:differentiable} that they are in fact differentiable
everywhere, so freely use their derivatives in the statements of some
forthcoming lemmas.


\begin{lemma}[Conjugacy]
\label{lem:conjugacy}
  
The dual value function in \eqref{eq:dvfabs} satisfies the conjugacy
relation
\begin{equation}
v(y) = \sup_{x>0}[u(x)-xy], \quad \mbox{for each $y\in(0,y^{*})$},
\label{eq:absconjugacy}  
\end{equation}
where $u(\cdot)$ is the primal value function in \eqref{eq:vfabs} and
$y^{*}$ is the minimal value of $y>0$ at which the dual derivative
$v^{\prime}(\cdot)$ reaches zero, as in \eqref{eq:y0abs}. 

\end{lemma}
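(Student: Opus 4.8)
The plan is to prove the conjugacy relation $v(y) = \sup_{x>0}[u(x)-xy]$ for $y\in(0,y^*)$ by combining the weak duality inequality already established in Lemma \ref{lem:weakdual} with a minimax argument that delivers the reverse inequality. Weak duality gives $v(y)\geq\sup_{x>0}[u(x)-xy]$ for all $y>0$, so the work is entirely in showing $v(y)\leq\sup_{x>0}[u(x)-xy]$ on $(0,y^*)$. First I would fix $y\in(0,y^*)$ and, for each $n\in\mathbb{N}$, introduce the truncated primal set $\mathcal{B}_n := \{g\in L^\infty_+(\mu): \|g\|_\infty\leq n\}$, which is convex and compact in the weak$^*$ topology $\sigma(L^\infty,L^1)$ on $L^\infty(\mu)$ (using that $\mu$ is a finite measure). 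On $\mathcal{B}_n\times\mathcal{D}(y)$ I would consider the function $w(g,h):=\int_{\mathbf{\Omega}}(U(g)-(g-f)h)\ud\mu$, which is concave and upper semicontinuous in $g$ (for fixed $h$, by weak$^*$ upper semicontinuity of $g\mapsto\int U(g)\ud\mu$ and linearity of $g\mapsto\int(g-f)h\ud\mu$) and convex in $h$ (for fixed $g$, being affine in $h$). The minimax theorem (in the form of \cite[Theorem 45.8]{strasser85}) then applies to give
\begin{equation*}
\sup_{g\in\mathcal{B}_n}\inf_{h\in\mathcal{D}(y)}w(g,h) = \inf_{h\in\mathcal{D}(y)}\sup_{g\in\mathcal{B}_n}w(g,h).
\end{equation*}

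The next step is to identify both sides in the limit $n\to\infty$. For the left-hand side, I would use the dual characterisation of $\mathcal{C}$ from \eqref{eq:dualC}: for $g\in\mathcal{B}_n$, $\inf_{h\in\mathcal{D}(y)}w(g,h) = \int_{\mathbf{\Omega}}U(g)\ud\mu - \sup_{h\in\mathcal{D}(y)}\int_{\mathbf{\Omega}}(g-f)h\ud\mu$, which equals $\int U(g)\ud\mu - xy$ if $g\in\mathcal{C}(x)$ with $x$ chosen minimally (noting $\sup_{h\in\mathcal{D}(y)}\langle g-f,h\rangle = \min\{xy: g\in\mathcal{C}(x)\}$), and equals $-\infty$ otherwise. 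Hence $\sup_{g\in\mathcal{B}_n}\inf_{h}w(g,h)\uparrow \sup_{x>0}[u(x)-xy]$ as $n\to\infty$, since $\bigcup_n(\mathcal{B}_n\cap\mathcal{C}(x))$ exhausts $\mathcal{C}(x)$ in the relevant approximating sense and $U^+$-uniform integrability (Lemma \ref{lem:Uplusg}) controls the tails. For the right-hand side, $\sup_{g\in\mathcal{B}_n}w(g,h) = \int_{\mathbf{\Omega}}\sup_{0\leq\xi\leq n}(U(\xi)-\xi h)\ud\mu + \int_{\mathbf{\Omega}}fh\ud\mu$, and as $n\to\infty$ the inner supremum increases to $V(h)$ pointwise, so by monotone convergence (using $L^1(\mu)$-boundedness of $\mathcal{D}$ and $V^-$-uniform integrability from Lemma \ref{lem:Vminush} to justify passing the limit through the infimum over $h$) this converges to $\inf_{h\in\mathcal{D}(y)}\int_{\mathbf{\Omega}}(V(h)+fh)\ud\mu = v(y)$. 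Equating the two limits yields $v(y) = \sup_{x>0}[u(x)-xy]$.

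The role of the restriction $y\in(0,y^*)$ is that one needs the supremum $\sup_{x>0}[u(x)-xy]$ to be attained at a finite, strictly positive $x$; for $y\geq y^*$ the dual derivative has already hit zero and the behaviour degenerates (this is exactly the finite-marginal-utility-at-zero phenomenon), so the clean conjugacy is stated only on $(0,y^*)$. I would make this precise by noting that $-v'(y)>0$ for $y<y^*$ (from strict convexity of $v$ and the definition of $y^*$), which forces the maximiser $x=-v'(y)$ in the Fenchel problem to be positive, and the already-established weak duality plus the minimax equality then pin down equality there.

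The main obstacle I expect is the careful justification of the two limit interchanges as $n\to\infty$ — in particular, on the right-hand side, exchanging $\lim_{n}$ with $\inf_{h\in\mathcal{D}(y)}$. This is where the structural hypotheses earned earlier must be deployed in concert: $L^1(\mu)$-boundedness of $\mathcal{D}$ (Proposition \ref{prop:propertiesCD}), uniform integrability of $(V^-(h))_{h\in\mathcal{D}(y)}$ (Lemma \ref{lem:Vminush}), and the compactness/closedness of $\mathcal{D}$ (Lemma \ref{lem:Dcompact}) to ensure a minimising sequence has a $\mu$-a.e.-convergent convexified subsequence whose limit lies in $\mathcal{D}(y)$ and along which the limit inferior of the objectives dominates the objective of the limit. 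Since the excerpt explicitly says this proof runs "on the lines of the proof of \cite[Lemma 3.4]{ks99}" and is "omitted for brevity," in the paper itself this would be a short pointer; in a self-contained write-up the delicate part is exactly this uniform-integrability bookkeeping, with the non-monotonicity of the dual objective in $h$ (flagged in Remark \ref{rem:solid}) meaning one cannot simply take a pointwise-largest dual element, so the compactness argument genuinely has to do the work.
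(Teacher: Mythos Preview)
Your proposal is correct and follows essentially the same approach as the paper: the paper explicitly states that the proof runs ``in the manner of \cite[Lemma 3.4]{ks99}'' via the minimax theorem \cite[Theorem 45.8]{strasser85} applied to $w(g,h)=\int_{\mathbf{\Omega}}(U(g)-(g-f)h)\ud\mu$ on $\mathcal{B}_{n}\times\mathcal{D}(y)$, invoking the dual characterisation \eqref{eq:dualC}, $L^{1}(\mu)$-boundedness of $\mathcal{D}$, Lemma \ref{lem:Dcompact}, and Lemma \ref{lem:Vminush}, and then omits the details. You have reconstructed precisely this outline and correctly flagged the delicate limit-interchange as the point where the compactness and uniform-integrability lemmas do the real work.
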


\begin{remark}[Range of validity of \eqref{eq:absconjugacy}]
\label{rem:y0}

The range of $y>0$ for which \eqref{eq:absconjugacy} is valid is
governed by the implicit relation $u^{\prime}(\hat{x})=y$ which
defines the optimal value of $x$ in \eqref{eq:absconjugacy}. We know
that $u(\cdot)$ is increasing and concave and satisfies
$u^{\prime}(\infty)=0$ (from Lemma \ref{lem:uprime0} below). If
$u^{\prime}(0)=+\infty$, then $y^{*}=+\infty$ and
\eqref{eq:absconjugacy} is valid for all $y>0$. We shall see in Lemma
\ref{lem:differentiable} that $y^{*}<\infty$, and this has been
anticipated in \eqref{eq:absconjugacy}.

\end{remark}

We now proceed to further characterise the derivatives of the value
functions, as well as the primal and dual optimisers and the optimal
wealth process.

\begin{lemma}
\label{lem:uprime0}

The derivatives of the primal value function in \eqref{eq:vfabs} at
infinity and of the dual value function in \eqref{eq:dvfabs} at zero
are given by
\begin{equation}
u^{\prime}(\infty) := \lim_{x\to\infty}u^{\prime}(x) = 0, \quad
-v^{\prime}(0) := \lim_{y\downarrow 0}(-v^{\prime}(y)) = +\infty.   
\label{eq:uprime0}
\end{equation}

\end{lemma}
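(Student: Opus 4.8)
The plan is to establish the two limiting derivative relations in \eqref{eq:uprime0} by combining the conjugacy relation \eqref{eq:absconjugacy} with the limiting weak duality bounds \eqref{eq:limiting} of Lemma \ref{lem:weakdual}. Recall from the proof of Lemma \ref{lem:weakdual} that $\limsup_{x\to\infty}u(x)/x\leq 0$ and $\liminf_{y\to\infty}v(y)/y\geq 0$, and that $u(\cdot)$ is finite, strictly increasing and strictly concave, while $-v(\cdot)$ is strictly convex (Lemma \ref{lem:dualexis}). Since a concave increasing function has a decreasing nonnegative derivative, the limit $u^{\prime}(\infty):=\lim_{x\to\infty}u^{\prime}(x)$ exists in $[0,\infty)$; I need only rule out a strictly positive limit.

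First I would prove $u^{\prime}(\infty)=0$. Suppose for contradiction that $u^{\prime}(x)\geq\varepsilon>0$ for all $x>0$ (using that $u^{\prime}$ is decreasing). Then by the fundamental theorem of calculus (or concavity), $u(x)\geq u(1)+\varepsilon(x-1)$ for all $x\geq 1$, whence $\liminf_{x\to\infty}u(x)/x\geq\varepsilon>0$, contradicting $\limsup_{x\to\infty}u(x)/x\leq 0$ from \eqref{eq:limiting}. Hence $u^{\prime}(\infty)=0$. Symmetrically, for the dual statement, I would show $-v^{\prime}(0)=+\infty$: since $-v(\cdot)$ is strictly convex and decreasing in $-v^{\prime}$ terms — more precisely $v$ is convex so $v^{\prime}$ is increasing, so $-v^{\prime}(y)$ is decreasing in $y$ and the limit $-v^{\prime}(0):=\lim_{y\downarrow 0}(-v^{\prime}(y))$ exists in $(0,+\infty]$ — suppose it were finite, say $-v^{\prime}(y)\leq K$ for all small $y>0$, equivalently $v^{\prime}(y)\geq -K$. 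Then $v(y)\geq v(y_0)-K(y_0-y)$ for $y<y_0$, so $v$ stays bounded below near $0$; but actually the cleaner route is to use conjugacy \eqref{eq:absconjugacy}: from $v(y)=\sup_{x>0}[u(x)-xy]$ we get, for each fixed $x$, $v(y)\geq u(x)-xy$, hence $\liminf_{y\downarrow 0}v(y)\geq u(x)$ for every $x$, so $\liminf_{y\downarrow 0}v(y)\geq u(\infty):=\lim_{x\to\infty}u(x)$. Combined with the standard envelope relation $-v^{\prime}(y)=\widehat{x}(y)$ (the optimiser in the sup, which by $u^{\prime}(\widehat{x})=y$ and $u^{\prime}(\infty)=0$ must satisfy $\widehat{x}(y)\to\infty$ as $y\downarrow 0$), we obtain $-v^{\prime}(0)=\lim_{y\downarrow 0}\widehat{x}(y)=+\infty$.

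The main obstacle I anticipate is justifying the envelope/Danskin relation $-v^{\prime}(y)=\widehat{x}(y)$ rigorously at this stage — one must know that the supremum in \eqref{eq:absconjugacy} is attained and that differentiating under the sup is valid. This can be handled with standard convex-analytic facts: $v$ is the convex conjugate of $u$ restricted to the relevant range, $u$ is differentiable and strictly concave (Lemma \ref{lem:primexis}), so the Fenchel equality $u(\widehat{x})-\widehat{x}y=v(y)$ holds exactly when $u^{\prime}(\widehat{x})=y$, and then $-v^{\prime}(y)=\widehat{x}$ follows from differentiability of the conjugate of a differentiable strictly concave function. Since $u^{\prime}$ decreases continuously from $u^{\prime}(0^{+})$ to $u^{\prime}(\infty)=0$, for every $y\in(0,u^{\prime}(0^{+}))$ there is a unique $\widehat{x}(y)$ with $u^{\prime}(\widehat{x}(y))=y$, and $\widehat{x}(y)\uparrow\infty$ as $y\downarrow 0$; this is exactly what delivers $-v^{\prime}(0)=+\infty$. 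Alternatively, if one prefers to avoid the envelope argument entirely, the bound $\liminf_{y\downarrow 0}v(y)\geq u(\infty)$ together with convexity of $v$ and $\liminf_{y\to\infty}v(y)/y\geq 0$ forces $v^{\prime}(0^{+})=-\infty$ directly by a slope argument, which is the route I would ultimately present for brevity.
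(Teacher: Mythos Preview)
Your argument is correct, and for the key assertion $u^{\prime}(\infty)=0$ it is in fact more economical than the paper's own proof. The paper also notes that, by the conjugacy of Lemma~\ref{lem:conjugacy}, the two statements in \eqref{eq:uprime0} are equivalent, and then proves only $u^{\prime}(\infty)=0$. However, rather than invoking the weak-duality bound $\limsup_{x\to\infty}u(x)/x\leq 0$ from \eqref{eq:limiting} as you do, the paper re-derives this limit from scratch: it uses the Inada condition on $U$ in the form $U(x)\leq C_{\epsilon}+\epsilon x$, combines this with the $L^{1}(\overline{h}\ud\mu)$-bound $\int_{\mathbf{\Omega}}g\overline{h}\ud\mu\leq x(1+\overline{a}/\underline{r})$ for $g\in\mathcal{C}(x)$ (from the proof of Proposition~\ref{prop:propertiesCD}), and then sends $\epsilon\downarrow 0$ after identifying $u^{\prime}(\infty)=\lim_{x\to\infty}u(x)/x$ via l'H\^opital. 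Your contradiction argument from \eqref{eq:limiting} bypasses this computation entirely and is the shorter route; the paper's approach has the minor advantage of being self-contained and making the role of the income bound $\overline{a}/\underline{r}$ explicit. For the dual assertion $-v^{\prime}(0)=+\infty$, your envelope argument via $u^{\prime}(\widehat{x}(y))=y$ is exactly the content of the paper's one-line reduction ``the assertions are equivalent by conjugacy'', so there is no real difference there.
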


\begin{proof}

By the conjugacy result in Lemma \ref{lem:conjugacy} between the value
functions, the assertions in \eqref{eq:uprime0} are equivalent. We
shall prove the first assertion.

The primal value function $u(\cdot)$ is strictly concave and strictly
increasing, so there is a finite non-negative limit
$u^{\prime}(\infty):=\lim_{x\to\infty}u^{\prime}(x)$. Because
$U(\cdot)$ is increasing with $\lim_{x\to\infty}U^{\prime}(x)=0$, for
any $\epsilon>0$ there exists a number $C_{\epsilon}$ such that
$U(x)\leq C_{\epsilon}+\epsilon x,\,\forall\,x>0$. Using this and the
fact that there exists a positive element $\overline{h}\in\mathcal{D}$
(as in \eqref{eq:boundghbar} in the proof of Proposition
\ref{prop:propertiesCD}) such that
$\int_{\mathbf{\Omega}}g\overline{h}\ud\mu\leq
x(1+\overline{a}/\underline{r}),\,\forall\,g\in\mathcal{C}(x)$, and
l'H\^opital's rule, we have, with
$\int_{\mathbf{\Omega}}\ud\mu=\mathbb{E}\left[\int_{0}^{\infty}
\ud\kappa_{t}\right]=1/\delta>0$,
\begin{eqnarray*}
0 \leq \lim_{x\to\infty}u^{\prime}(x) =
\lim_{x\to\infty}\frac{u(x)}{x} & = &
\lim_{x\to\infty}\sup_{g\in\mathcal{C}(x)}\int_{\mathbf{\Omega}}
\frac{U(g)}{x}\ud\mu \\
& \leq & \lim_{x\to\infty}\sup_{g\in\mathcal{C}(x)}\int_{\mathbf{\Omega}}
\frac{C_{\epsilon}+\epsilon g}{x}\ud\mu \\
& = & \lim_{x\to\infty}\left(\frac{C_{\epsilon}}{\delta x} +
\frac{1}{\overline{h}}\sup_{g\in\mathcal{C}(x)}\int_{\mathbf{\Omega}}
\frac{\epsilon g\overline{h}}{x}\ud\mu\right) \\  
& \leq & \lim_{x\to\infty}\left(\frac{C_{\epsilon}}{\delta x} +
\frac{\epsilon}{\overline{h}}\left(1 +
\frac{\overline{a}}{\underline{r}}\right)\right) 
= \frac{\epsilon}{\overline{h}}\left(1 +
\frac{\overline{a}}{\underline{r}}\right), \quad \mbox{$\mu$-a.e.}, 
\end{eqnarray*}
and taking the limit as $\epsilon\downarrow 0$ gives the result.
  
\end{proof}

The next lemma shows that the dual value function is differentiable
with a derivative that reaches zero at a finite value of its argument,
so that the primal value function has finite derivative at zero.

\begin{lemma}
\label{lem:differentiable}

The dual value function $v(\cdot)$ in \eqref{eq:dvfabs} is
differentiable on $(0,\infty)$. Moreover, as long as the income is
strictly positive, $f>0$, the primal value function in
\eqref{eq:vfabs} has finite derivative at zero:
\begin{equation}
u^{\prime}(0) := \lim_{x\downarrow 0}u^{\prime}(x) < +\infty.
\label{eq:mvprime1}
\end{equation}
Equivalently, the value $y^{*}:=\inf\{y>0:v^{\prime}(y)=0\}$ of $y>0$
at which the derivative of dual value function reaches zero, is
finite.
  
\end{lemma}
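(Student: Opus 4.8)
The plan is to establish the differentiability of $v$ first --- which simultaneously yields the derivative formula quoted in Theorems~\ref{thm:cwrtid}(iv) and \ref{thm:adt}(iv) --- and then to use that formula to pin down the finiteness of $y^{*}$. Throughout I would use dual existence and uniqueness (Lemma~\ref{lem:dualexis}), the $L^{1}(\mu)$-boundedness of $\mathcal{D}$ (Lemma~\ref{lem:Dbounded}), the compactness Lemma~\ref{lem:Dcompact}, and the conjugacy of Lemma~\ref{lem:conjugacy}.

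\emph{Differentiability of $v$.} Since $v$ is finite and strictly convex on $(0,\infty)$, at each interior $y$ it has finite one-sided derivatives with $v'_{-}(y)\le v'_{+}(y)$, and it suffices to produce the opposite inequality. The idea is to exploit the homogeneity of the domains: because $\widehat{h}(y)/y\in\mathcal{D}$, the process $\lambda\widehat{h}(y)$ lies in $\mathcal{D}(\lambda y)$ for every $\lambda>0$, so it is a competitor for $v(\lambda y)$ and
\[
v(\lambda y)\ \le\ \int_{\mathbf{\Omega}}\bigl(V(\lambda\widehat{h}(y))+\lambda f\widehat{h}(y)\bigr)\ud\mu ,\qquad \lambda>0 ,
\]
with equality at $\lambda=1$. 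Differentiating the right-hand side in $\lambda$ at $\lambda=1$: the endowment term is linear and controlled by $\int_{\mathbf{\Omega}}f\widehat{h}(y)\ud\mu\le y\overline{a}/\underline{r}$; for the conjugate term the difference quotients $\varepsilon^{-1}\bigl(V((1+\varepsilon)\widehat{h}(y))-V(\widehat{h}(y))\bigr)$ are nonpositive and monotone in $\varepsilon$ by convexity of $V$, so the interchange of limit and integral is justified by monotone convergence, the limit integral being finite because $v$ has finite one-sided derivatives. Letting $\lambda\downarrow 1$ and $\lambda\uparrow 1$ gives, respectively,
\[
v'_{+}(y)\ \le\ \frac1y\int_{\mathbf{\Omega}}\bigl(V'(\widehat{h}(y))+f\bigr)\widehat{h}(y)\ud\mu\ \le\ v'_{-}(y) ,
\]
so $v$ is differentiable at $y$ and $y\,v'(y)=\int_{\mathbf{\Omega}}\bigl(V'(\widehat{h}(y))+f\bigr)\widehat{h}(y)\ud\mu$ for all $y>0$.

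\emph{Finiteness of $y^{*}$.} Writing $V'=-I$ in the derivative formula,
\[
y\,v'(y)\ =\ \int_{\mathbf{\Omega}}f\,\widehat{h}(y)\ud\mu\ -\ \int_{\mathbf{\Omega}}\widehat{h}(y)\,I(\widehat{h}(y))\ud\mu ,\qquad y>0 .
\]
This is where the finiteness of $U$ at zero is decisive. On the one hand, $U(0)>-\infty$ gives $zI(z)\to 0$ as $z\to\infty$ (write $w=I(z)\downarrow 0$ and use $U(w)-U(0)\ge wU'(w)$), so the second integrand vanishes wherever $\widehat{h}(y)$ is large; on the other hand $V\ge U(0)$ pointwise, so the deflator-only part $\int_{\mathbf{\Omega}}V(\widehat{h}(y))\ud\mu\le v(y)$ of the dual objective is uniformly bounded below and becomes essentially flat in the deflator, whereas the income part scales linearly. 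Granting the key fact that therefore, because $f>0$ (precisely: $\mu(\{f\ge\varepsilon_{0}\})>0$ for some $\varepsilon_{0}>0$), the first integral above eventually dominates the second, we get $v'(y)>0$ for all large $y$. Since $v'$ is continuous by the first part and $v'(0+)=-\infty$ (Lemma~\ref{lem:uprime0}), $v'$ vanishes at a finite $y^{*}$, i.e.\ $y^{*}<+\infty$; and by the conjugacy of Lemma~\ref{lem:conjugacy}, letting $y\uparrow y^{*}$ one has $x=-v'(y)\downarrow 0$ along $y=u'(x)$, so $u'(0+)=\lim_{x\downarrow 0}u'(x)=y^{*}<+\infty$. (When $f\equiv 0$ the displayed identity reads $y\,v'(y)=-\int_{\mathbf{\Omega}}\widehat{h}(y)I(\widehat{h}(y))\ud\mu<0$ for all $y>0$, so $y^{*}=+\infty$ and the classical primal Inada condition at zero survives; this explains the role of the hypothesis $f>0$.)

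\emph{The main obstacle} is precisely the italicised ``key fact'': that $v'$ is eventually positive, equivalently that the dual objective stops decreasing at a finite $y$. The intended route is a contradiction argument controlling the dual minimiser as $y\to\infty$. If $v'<0$ on all of $(0,\infty)$, then $v$ is strictly decreasing and, being bounded below by $U(0)/\delta$, satisfies $v'(y)\uparrow 0$, which forces $\int_{\mathbf{\Omega}}f\widehat{h}(y)\ud\mu$ and $\int_{\mathbf{\Omega}}\widehat{h}(y)I(\widehat{h}(y))\ud\mu$ to balance asymptotically. One then shows this is incompatible with the structure of $\mathcal{D}$: its elements are $\zeta$ times supermartingale deflators in $\mathcal{Y}(y)$, take the value $y$ at $t=0$, and --- using that the income rate is strictly positive over the non-degenerate initial random interval $[0,\tau)$ --- cannot prevent $\widehat{h}(y)$ from charging the set $\{f>0\}$ as $y\to\infty$. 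Concretely one would apply the compactness Lemma~\ref{lem:Dcompact} to the rescaled optimisers $\widehat{h}(y_{n})/y_{n}\in\mathcal{D}$ along a putative bad sequence $y_{n}\to\infty$, and compare with the competitors $y_{n}\overline{h}$ built from the strictly positive deflator $\overline{h}$ of Proposition~\ref{prop:propertiesCD}, to contradict dual optimality. Making this balance-and-contradiction step rigorous is the crux of the proof; the differentiability of $v$ and the reduction to it are comparatively routine.
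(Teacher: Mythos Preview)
Your differentiability argument is essentially the paper's: exploit the homogeneity $\lambda\widehat{h}(y)\in\mathcal{D}(\lambda y)$ to sandwich the one-sided difference quotients of $v$ by the same integral, then pass to the limit. The paper phrases this as bounds on $(v(y\pm\delta)-v(y))/(\pm\delta)$ and invokes dominated rather than monotone convergence, but the content and the resulting formula $yv'(y)=\int_{\mathbf{\Omega}}(V'(\widehat{h}(y))+f)\widehat{h}(y)\ud\mu$ are the same.

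For the finiteness of $y^{*}$ the paper takes a quite different and much shorter route than your contradiction/compactness sketch. Writing $\widehat{h}^{y}:=\widehat{h}(y)/y\in\mathcal{D}$, the paper compares the derivative formula $v'(y)=\int(V'(y\widehat{h}^{y})+f)\widehat{h}^{y}\ud\mu$ with the value identity $v(y)=\int(V(y\widehat{h}^{y})+yf\widehat{h}^{y})\ud\mu$ and argues that their mutual consistency forces $\widehat{h}^{y}$ to have no explicit dependence on $y$. Granting that, the integrand $(V'(y\widehat{h}^{y})+f)\widehat{h}^{y}$ is monotone in $y$ (since $V'$ is monotone), and monotone convergence combined with the Inada condition $V'(\infty)=0$ and the strict positivity of $f$ and $\widehat{h}^{y}$ gives directly that $v'(y)$ becomes positive at a finite $y$, i.e.\ $y^{*}<\infty$. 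This route neither assumes $U(0)>-\infty$ --- a hypothesis you introduce but which is \emph{not} among the paper's standing assumptions, so your argument as written excludes, for instance, logarithmic utility --- nor requires the compactness Lemma~\ref{lem:Dcompact} or any asymptotic analysis of $\widehat{h}(y_{n})/y_{n}$ along $y_{n}\to\infty$. In short, the ``key fact'' you flag as the crux is bypassed rather than established: the paper replaces your balance-and-contradiction programme by the structural claim that the normalised dual optimiser is $y$-independent, which immediately makes the sign of $v'$ transparent.
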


\begin{proof}

Since $\int_{\mathbf{\Omega}}h\ud\mu\leq y/\underline{r}$ for any
$h\in\mathcal{D}(y)$, we have
$\int_{\mathbf{\Omega}}(\widehat{h}(y)/y)\ud\mu\leq
1/\underline{r},\,y>0$, which defines a unique integrable element
$L^{0}_{+}(\mu)\supseteq\mathcal{D} \owns
\widehat{h}^{y}:=\widehat{h}(y)/y$, for any $y>0$.

Fix $y>0$. Then, for any $\delta>0$, using the fact that
$(y+\delta)\widehat{h}^{y}$ will be suboptimal for $v(y+\delta)$ along
with convexity of $V(\cdot)$, we have
\begin{eqnarray*}
\frac{1}{\delta}\left(v(y+\delta) - v(y)\right) & \leq &
\frac{1}{\delta}\int_{\mathbf{\Omega}}\left(V((y+\delta)\widehat{h}^{y}) +
f(y+\delta)\widehat{h}^{y} - V(y\widehat{h}^{y}) -
fy\widehat{h}^{y}\right)\ud\mu \\
& \leq &
\int_{\mathbf{\Omega}}\left(V^{\prime}((y+\delta)\widehat{h}^{y})
+ f\widehat{h}^{y}\right)\ud\mu.    
\end{eqnarray*}
The element $(y+\delta)\widehat{h}^{y}\in L^{0}_{+}(\mu)$ is strictly
positive, and thus $|V^{\prime}((y+\delta)\widehat{h}^{y})|$ is
bounded $\mu$-a.e. (on recalling the Inada conditions satisfied by
$-V(\cdot)$), while the non-negative element $f\widehat{h}^{y}$ is
bounded above by the integrable function
$\overline{a}\widehat{h}^{y}$, so we may apply dominated convergence
in sending $\delta\downarrow 0$, to obtain that
\begin{equation}
v^{\prime}(y) \leq
\int_{\mathbf{\Omega}}\left(V^{\prime}(y\widehat{h}^{y}) +
f\right)\widehat{h}^{y}\ud\mu.   
\label{eq:vp1}
\end{equation}

An identical argument, this time applied to
$(v(y)-v(y-\delta))/\delta$, yields the reverse inequality
\begin{equation}
v^{\prime}(y) \geq
\int_{\mathbf{\Omega}}\left(V^{\prime}(y\widehat{h}^{y}) +
f\right)\widehat{h}^{y}\ud\mu.   
\label{eq:vp2}
\end{equation}
Then, \eqref{eq:vp1} and \eqref{eq:vp2} yield that $v(\cdot)$ is
differentiable on $(0,\infty)$ with
\begin{equation}
v^{\prime}(y) =  \int_{\mathbf{\Omega}}\left(V^{\prime}(y\widehat{h}^{y}) +
f\right)\widehat{h}^{y}\ud\mu.
\label{eq:vp}
\end{equation}

It remains to establish \eqref{eq:mvprime1}, and the equivalent
assertion that $y^{*}<\infty$. Now, from the definition of the dual
value function we have
\begin{equation}
v(y) =
\int_{\mathbf{\Omega}}\left(V(y\widehat{h}^{y}) +
yf\widehat{h}^{y}\right)\ud\mu.
\label{eq:mveq1}
\end{equation}
From \eqref{eq:vp} and \eqref{eq:mveq1} we see that the former
is only consistent with the latter if $\widehat{h}^{y}$ has no
explicit dependence on $y$, and this in turn implies (given the
monotonicity of $V^{\prime}(\cdot)$) that the integrand in
\eqref{eq:vp} is monotone in $y$. We can thus apply monotone
convergence to obtain
\begin{equation*}
0 = v^{\prime}(y^{*}) = \lim_{y\uparrow
y^{*}}\int_{\mathbf{\Omega}}\left(V^{\prime}(y\widehat{h}^{y}) +
f\right)\widehat{h}^{y}\ud\mu =
\int_{\mathbf{\Omega}}\left(V^{\prime}(y^{*}\widehat{h}^{y^{*}}) +
f\right)\widehat{h}^{y^{*}}\ud\mu,
\end{equation*}
which in turn yields, since $\widehat{h}^{y^{*}}$ and $f$ are strictly
positive $-V(\cdot)$ satisfies the Inada conditions, that
$y^{*}<\infty$, and the proof is complete.

\end{proof}

\begin{remark}
\label{rem:up}

We shall see the formula \eqref{eq:vp} for the dual derivative
reproduced in the course of proving Lemma \ref{lem:derivatives} (see
\eqref{eq:derivativev}).

The conjugacy between the primal and dual value functions, combined
with Lemma \ref{lem:differentiable}, yields that the primal value
function $u(\cdot)$ is also differentiable on $(0,\infty)$.
  
\end{remark}

The final step in the series of lemmas that will furnish us with the
proof of Theorem \ref{thm:adt} is to obtain a duality characterisation
of the primal and dual optimisers.

\begin{lemma}
\label{lem:derivatives}
  
\begin{enumerate}
  

\item For any fixed $x>0$, with $y=u^{\prime}(x)\in(0,y^{*})$
(equivalently $x=-v^{\prime}(y)$), the primal and dual optimisers
$\widehat{g}(x),\widehat{h}(y)$ are related by
\begin{equation}
U^{\prime}(\widehat{g}(x)) = \widehat{h}(y) =
\widehat{h}(u^{\prime}(x)), \quad \mu\mbox{-a.e.},
\label{eq:primaldual}
\end{equation}
and satisfy
\begin{equation}
\int_{\mathbf{\Omega}}(\widehat{g}(x)-f)\widehat{h}(y)\ud\mu = xy =
xu^{\prime}(x).
\label{eq:saturation}
\end{equation}

\item The derivatives of the value functions satisfy the relations
\begin{eqnarray}
xu^{\prime}(x) & = &
\int_{\mathbf{\Omega}}U^{\prime}(\widehat{g}(x))\left(\widehat{g}(x) -
f\right)\ud\mu, \quad x>0, \label{eq:derivativeu}\\
yv^{\prime}(y) & = & 
\int_{\mathbf{\Omega}}\left(V^{\prime}(\widehat{h}(y)) +
f\right)\widehat{h}(y)\ud\mu, \quad y\in(0,y^{*}).
\label{eq:derivativev}
\end{eqnarray}

\end{enumerate}

\end{lemma}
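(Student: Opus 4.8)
The plan is to run the classical Kramkov--Schachermayer argument: combine the conjugacy of the value functions (Lemma \ref{lem:conjugacy}) with the pointwise Fenchel inequality \eqref{eq:VUbound} and the abstract budget constraint \eqref{eq:bcabs}, and read off both the saturation of the budget constraint and the first-order relation from the fact that a sandwich of inequalities must collapse to equalities.

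First I would fix $x>0$ and set $y:=u^{\prime}(x)$. Since $u(\cdot)$ is strictly concave and differentiable on $(0,\infty)$, with $u^{\prime}(\infty)=0$ and $-v^{\prime}(0)=+\infty$ (Lemma \ref{lem:uprime0}) and $y^{*}<\infty$ (Lemma \ref{lem:differentiable}), the map $x\mapsto u^{\prime}(x)$ is a strictly decreasing bijection of $(0,\infty)$ onto $(0,y^{*})$; hence $y\in(0,y^{*})$, $x=-v^{\prime}(y)$, and $x$ is the unique maximiser in $v(y)=\sup_{x^{\prime}>0}[u(x^{\prime})-x^{\prime}y]$, so that $u(x)=v(y)+xy$. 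I would also record the integrability needed below: $U^{+}(\widehat{g}(x))$ is integrable by Lemma \ref{lem:Uplusg} and $u(x)$ is finite, so $U(\widehat{g}(x))\in L^{1}(\mu)$; $V^{-}(\widehat{h}(y))$ is integrable by Lemma \ref{lem:Vminush}, $\int_{\mathbf{\Omega}}f\widehat{h}(y)\ud\mu\leq\overline{a}/\underline{r}$ by Lemma \ref{lem:Dbounded}, and $v(y)$ is finite, so $V(\widehat{h}(y))\in L^{1}(\mu)$; and $\int_{\mathbf{\Omega}}\widehat{g}(x)\widehat{h}(y)\ud\mu<\infty$ by the budget constraint together with integrability of $f\widehat{h}(y)$. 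Hence all the integrals below are finite.

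Then I would integrate the pointwise inequality $U(\widehat{g}(x))\leq V(\widehat{h}(y))+\widehat{g}(x)\widehat{h}(y)$ from \eqref{eq:VUbound}, split off the income term, and use \eqref{eq:bcabs} to obtain
\begin{equation*}
u(x)=\int_{\mathbf{\Omega}}U(\widehat{g}(x))\ud\mu\leq\int_{\mathbf{\Omega}}\bigl(V(\widehat{h}(y))+f\widehat{h}(y)\bigr)\ud\mu+\int_{\mathbf{\Omega}}(\widehat{g}(x)-f)\widehat{h}(y)\ud\mu\leq v(y)+xy.
\end{equation*}
Since the two ends agree by the conjugacy identity above, both inequalities are equalities. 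Equality in the second is exactly $\int_{\mathbf{\Omega}}(\widehat{g}(x)-f)\widehat{h}(y)\ud\mu=xy=xu^{\prime}(x)$, i.e.\ \eqref{eq:saturation}; equality in the first says that the non-negative function $V(\widehat{h}(y))+\widehat{g}(x)\widehat{h}(y)-U(\widehat{g}(x))$ has zero $\mu$-integral, hence vanishes $\mu$-a.e., and the equality clause of \eqref{eq:VUbound} then forces $\widehat{h}(y)=U^{\prime}(\widehat{g}(x))$ $\mu$-a.e., i.e.\ \eqref{eq:primaldual}. Substituting \eqref{eq:primaldual} into \eqref{eq:saturation} gives \eqref{eq:derivativeu}, valid for every $x>0$ because $u^{\prime}(x)\in(0,y^{*})$ throughout; and \eqref{eq:derivativev} is merely the rewriting of formula \eqref{eq:vp} of Lemma \ref{lem:differentiable} obtained by multiplying it by $y$ and using $y\widehat{h}^{y}=\widehat{h}(y)$.

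The step I expect to be the main obstacle is not conceptual but the integrability bookkeeping: one must establish a priori that $U(\widehat{g}(x))$, $V(\widehat{h}(y))$, $f\widehat{h}(y)$ and $\widehat{g}(x)\widehat{h}(y)$ are all $\mu$-integrable before splitting the integrated Fenchel inequality and before inferring that a non-negative integrand with vanishing integral is zero a.e.; this requires assembling Lemmata \ref{lem:Uplusg}, \ref{lem:Vminush} and \ref{lem:Dbounded}, finiteness of $u$ and $v$, and the budget constraint. A secondary point requiring care is the identification of the range of the optimal multiplier, i.e.\ that $x\mapsto u^{\prime}(x)$ maps onto $(0,y^{*})$, so that the duality relation \eqref{eq:primaldual} is available at $y=u^{\prime}(x)$ for all $x>0$ and hence \eqref{eq:derivativeu} holds on all of $(0,\infty)$.
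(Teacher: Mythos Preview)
Your proposal is correct and follows essentially the same sandwich argument as the paper: combine the pointwise Fenchel inequality, the budget constraint, and conjugacy to force both inequalities to equalities, yielding \eqref{eq:primaldual} and \eqref{eq:saturation}, then substitute to obtain \eqref{eq:derivativeu}. The only cosmetic difference is that for \eqref{eq:derivativev} the paper substitutes $\widehat{g}(x)=-V^{\prime}(\widehat{h}(y))$ and $x=-v^{\prime}(y)$ directly into the saturation identity, whereas you cite the formula \eqref{eq:vp} already established in Lemma~\ref{lem:differentiable}; the paper itself notes (Remark~\ref{rem:up}) that these are the same formula.
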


\begin{proof}

Recall the inequality \eqref{eq:VUbound}, which also applies to the
value functions because they are also conjugate by Lemma
\ref{lem:conjugacy}. We thus have, in addition to \eqref{eq:VUbound},
\begin{equation}
v(y) \geq u(x) -xy, \quad \forall\,x>0,\,y\in(0,y^{*}), \quad
\mbox{with equality iff $y=u^{\prime}(x)$}.
\label{eq:vubound}
\end{equation}
With $\widehat{g}(x)\in\mathcal{C}(x),\,x>0$ and
$\widehat{h}(y)\in\mathcal{D}(y),\,y\in(0,y^{*})$ denoting the primal
and dual optimisers, we have, because
$\int_{\mathbf{\Omega}}(g-f)h\ud\mu\leq xy$ for all
$g\in\mathcal{C}(x),\,h\in\mathcal{D}(y)$,
\begin{equation*}
\int_{\mathbf{\Omega}}(\widehat{g}(x) - f)\widehat{h}(y)\ud\mu \leq
xy, \quad x>0,\,y\in(0,y^{*}).  
\end{equation*}
Using this as well as \eqref{eq:VUbound} and \eqref{eq:vubound} we
have
\begin{eqnarray}
\label{eq:fundineq}
0 & \leq &\int_{\mathbf{\Omega}}\left(V(\widehat{h}(y)) -
U(\widehat{g}(x)) + \widehat{g}(x)\widehat{h}(y)\right)\ud\mu \\
& = &
\int_{\mathbf{\Omega}}\left(V(\widehat{h}(y)) + f\widehat{h}(y) -
U(\widehat{g}(x)) + (\widehat{g}(x) - f)\widehat{h}(y)\right)\ud\mu
\nonumber \\
& \leq & v(y) - u(x) + xy, \quad x>0,\,y\in(0,y^{*}). \nonumber
\end{eqnarray}
The right-hand-side of \eqref{eq:fundineq} is zero if and only if
$y=u^{\prime}(x)$, due to \eqref{eq:vubound}, and the non-negative
integrand must then be $\mu$-a.e. zero, which by \eqref{eq:VUbound}
can only happen if \eqref{eq:primaldual} holds, which establishes that
primal-dual relation.

Thus, for any fixed $x>0$ and with $y=u^{\prime}(x)$, and hence
equality in \eqref{eq:fundineq}, we have
\begin{eqnarray*}
0 & = & \int_{\mathbf{\Omega}}\left(V(\widehat{h}(y)) +
f\widehat{h}(y) - U(\widehat{g}(x)) + (\widehat{g}(x) -
f)\widehat{h}(y)\right)\ud\mu \\ 
& = & v(y) - u(x) +
\int_{\mathbf{\Omega}}(\widehat{g}(x) - f)\widehat{h}(y)\ud\mu \\ 
& = & v(y) - u(x) + xy, \quad y=u^{\prime}(x),
\end{eqnarray*}
which implies that \eqref{eq:saturation} must hold. Inserting the
explicit form of $\widehat{h}(y)=U^{\prime}(\widehat{g}(x))$ into
\eqref{eq:saturation} yields \eqref{eq:derivativeu}. Similarly,
setting $\widehat{g}(x)=I(\widehat{h}(y))=-V^{\prime}(\widehat{h}(y))$
into \eqref{eq:saturation}, with $x=-v^{\prime}(y)$ (equivalent to
$y=u^{\prime}(x)$), yields \eqref{eq:derivativev}.

\end{proof}

We now have all the results needed for the abstract duality in Theorem
\ref{thm:adt}, so let us confirm this.

\begin{proof}[Proof of Theorem \ref{thm:adt}]

Lemma \ref{lem:conjugacy} implies the relations \eqref{eq:conjugacy}
of item (i). The statements in item (ii) are implied by Lemma
\ref{lem:primexis} and Lemma \ref{lem:dualexis}. Items (iii) and (iv)
follow from Lemma \ref{lem:uprime0}, Lemma
\ref{lem:differentiable}, Remark \ref{rem:up} and Lemma
\ref{lem:derivatives}.

\end{proof}

\subsection{Proof of the concrete duality}
\label{subsec:pcd}

We are almost ready to prove the concrete duality in Theorem
\ref{thm:cwrtid}, because Theorem \ref{thm:adt} readily implies nearly
all of the assertions of Theorem \ref{thm:cwrtid}. The outstanding
assertion is the characterisation of the optimal wealth process in
\eqref{eq:owp} and the associated uniformly integrable martingale
property of the deflated wealth plus cumulative deflated consumption
over income process
$\widehat{X}(x)\widehat{Y}(y) +
\int_{0}^{\cdot}(\widehat{c}_{s}(x)-f_{s})\widehat{Y}_{s}(y)\ud s$. So we
proceed to establish these assertions in the proposition below, which
turns out to be interesting in its own right. We take as given the
other assertions of Theorem \ref{thm:cwrtid}, and in particular the
optimal budget constraint in \eqref{eq:obc}. We shall confirm the
proof of Theorem \ref{thm:cwrtid} in its entirety after the proof of
the next result.

\begin{proposition}[Optimal wealth process]
\label{prop:owp}

Given the saturated budget constraint equality in \eqref{eq:obc},
the optimal wealth process is characterised by \eqref{eq:owp}. The
process
\begin{equation*}
\widehat{\Lambda}_{t} := \widehat{X}_{t}(x)\widehat{Y}_{t}(y) +
\int_{0}^{t}(\widehat{c}_{s}(x)-f_{s})\widehat{Y}_{s}(y)\ud s, \quad 0\leq
t <\infty, 
\end{equation*}
is a uniformly integrable martingale, converging to an integrable
random variable $\widehat{\Lambda}_{\infty}$, so the martingale
extends to $[0,\infty]$. The process $\widehat{X}(x)\widehat{Y}(y)$ is
a potential, that is, a non-negative supermartingale satisfying
$\lim_{t\to\infty}\mathbb{E}[\widehat{X}_{t}(x)\widehat{Y}_{t}(y)]=0$.
Moreover, $\widehat{X}_{\infty}(x)\widehat{Y}_{\infty}(y)=0$, almost
surely.

\end{proposition}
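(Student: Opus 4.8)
The plan is to prove the four assertions in the order in which each unlocks the next: first that $\widehat{\Lambda}$ is a uniformly integrable martingale, then the representation \eqref{eq:owp}, then the decay of $\widehat{X}(x)\widehat{Y}(y)$ to zero, and finally the potential property. First I would show $\widehat{\Lambda}$ is a martingale. Since $\widehat{c}(x)\in\mathcal{A}(x)$ and $\widehat{Y}(y)\in\mathcal{Y}(y)$, the defining property \eqref{eq:mcY} already makes $\widehat{\Lambda}$ a supermartingale. By Lemma \ref{lem:Dbounded} applied to $\zeta\widehat{Y}(y)\in\mathcal{D}(y)$, the variable $\xi:=\int_{0}^{\infty}f_{s}\widehat{Y}_{s}(y)\ud s$ is integrable, and since $\widehat{X}(x)\widehat{Y}(y)\geq 0$ we have $\widehat{\Lambda}_{t}\geq-\xi$ for all $t$, so $\widehat{\Lambda}$ is a supermartingale bounded below by an integrable random variable and hence converges $\mathbb{P}$-a.s.\ to an integrable $\widehat{\Lambda}_{\infty}$. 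Now $\mathbb{E}[\widehat{\Lambda}_{0}]=xy$, and, using $\widehat{X}(x)\widehat{Y}(y)\geq 0$ again, $\mathbb{E}[\widehat{\Lambda}_{t}]\geq\mathbb{E}[\int_{0}^{t}(\widehat{c}_{s}(x)-f_{s})\widehat{Y}_{s}(y)\ud s]$; the right-hand side tends to $xy$ by monotone convergence applied to the two non-decreasing integrals $\int_{0}^{t}\widehat{c}_{s}\widehat{Y}_{s}\ud s$ and $\int_{0}^{t}f_{s}\widehat{Y}_{s}\ud s$ (the former has finite limit since the saturated constraint \eqref{eq:obc} gives $\mathbb{E}[\int_{0}^{\infty}\widehat{c}_{s}\widehat{Y}_{s}\ud s]=xy+\mathbb{E}[\xi]<\infty$) together with \eqref{eq:obc}. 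As $t\mapsto\mathbb{E}[\widehat{\Lambda}_{t}]$ is non-increasing, bounded above by $xy$, and has limit $xy$, it is constant, so $\widehat{\Lambda}$ is a martingale.

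Next I would establish \eqref{eq:owp} and uniform integrability. The financing condition (Assumption \ref{ass:financing}), applied to $(\widehat{X}(x),\widehat{c}(x))\in\mathcal{H}(x)$ and $\widehat{Y}(y)$, yields $\widehat{X}_{t}(x)\widehat{Y}_{t}(y)\geq\mathbb{E}\bigl[\int_{t}^{\infty}(\widehat{c}_{s}(x)-f_{s})\widehat{Y}_{s}(y)\ud s\mid\mathcal{F}_{t}\bigr]$ for every $t\geq 0$. Writing $R_{t}:=\mathbb{E}\bigl[\int_{0}^{\infty}(\widehat{c}_{s}-f_{s})\widehat{Y}_{s}\ud s\mid\mathcal{F}_{t}\bigr]$, a uniformly integrable martingale (closed by an integrable variable, by the previous step), the process $N_{t}:=\widehat{X}_{t}(x)\widehat{Y}_{t}(y)-\bigl(R_{t}-\int_{0}^{t}(\widehat{c}_{s}-f_{s})\widehat{Y}_{s}\ud s\bigr)=\widehat{\Lambda}_{t}-R_{t}$ is a non-negative martingale with $\mathbb{E}[N_{0}]=xy-\mathbb{E}\bigl[\int_{0}^{\infty}(\widehat{c}_{s}-f_{s})\widehat{Y}_{s}\ud s\bigr]=0$ by \eqref{eq:obc}; hence $N_{t}=0$ a.s.\ for every $t$, and by right-continuity $N\equiv 0$, which is exactly \eqref{eq:owp}. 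Consequently $\widehat{\Lambda}=R$ is uniformly integrable, converges in $L^{1}$, and extends to $[0,\infty]$ with $\widehat{\Lambda}_{\infty}=\int_{0}^{\infty}(\widehat{c}_{s}(x)-f_{s})\widehat{Y}_{s}(y)\ud s$.

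Then, from $\widehat{X}_{t}(x)\widehat{Y}_{t}(y)=\widehat{\Lambda}_{t}-\int_{0}^{t}(\widehat{c}_{s}(x)-f_{s})\widehat{Y}_{s}(y)\ud s$, the first term converges to $\widehat{\Lambda}_{\infty}$ a.s.\ and in $L^{1}$ (previous step), while the second converges to the same limit a.s.\ and in $L^{1}$ (monotone convergence on each of its two monotone pieces), so $\widehat{X}_{t}(x)\widehat{Y}_{t}(y)\to 0$ a.s.\ and in $L^{1}$; in particular $\widehat{X}_{\infty}(x)\widehat{Y}_{\infty}(y)=0$ a.s.\ and $\mathbb{E}[\widehat{X}_{t}(x)\widehat{Y}_{t}(y)]\to 0$. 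For the remaining supermartingale property, $\widehat{X}(x)\widehat{Y}(y)\geq 0$, and for $0\leq s\leq t$ the tower property together with \eqref{eq:owp} (used at $s$ and at $t$) gives $\widehat{X}_{s}\widehat{Y}_{s}-\mathbb{E}[\widehat{X}_{t}\widehat{Y}_{t}\mid\mathcal{F}_{s}]=\mathbb{E}\bigl[\int_{s}^{t}(\widehat{c}_{u}(x)-f_{u})\widehat{Y}_{u}(y)\ud u\mid\mathcal{F}_{s}\bigr]$, whose non-negativity --- equivalently, that $\int_{0}^{\cdot}(\widehat{c}_{u}-f_{u})\widehat{Y}_{u}\ud u$ is a submartingale --- I would deduce from the financing condition for the optimal strategy. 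Combined with the decay this shows $\widehat{X}(x)\widehat{Y}(y)$ is a potential.

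I expect the crux to be the first two steps, namely promoting the supermartingale $\widehat{\Lambda}$ all the way to a uniformly integrable martingale: the entire lever is the already-proved saturated budget constraint \eqref{eq:obc}, which pins down $\mathbb{E}[\widehat{\Lambda}_{t}]$ and then forces the non-negative ``slack'' martingale $N$ coming from the financing condition to vanish. A subsidiary point needing care is the sign of the conditional increments of $\int_{0}^{\cdot}(\widehat{c}_{u}-f_{u})\widehat{Y}_{u}\ud u$ in the last step, which is where the structure of the optimal strategy, rather than generic deflator bounds, is used.
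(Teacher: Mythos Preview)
Your Steps 1--3 are correct and take a different, more direct route than the paper. Both proofs first upgrade $\widehat{\Lambda}$ from supermartingale to martingale on $[0,\infty)$ by combining the defining property \eqref{eq:mcY} with the saturated budget constraint \eqref{eq:obc}. From there the paper proceeds indirectly: it argues that the non-negative process $\widehat{X}\widehat{Y}$ converges a.s.\ (invoking $\mathcal{Y}\subseteq\mathcal{Y}^{0}$ and non-negative supermartingale convergence), hence $\widehat{\Lambda}$ converges a.s.\ to an integrable limit, and then cites \cite[Theorem I.13]{protter} for uniform integrability; only after establishing $\widehat{X}_{\infty}\widehat{Y}_{\infty}=0$ does it derive \eqref{eq:owp} by passing to the limit $u\to\infty$ in the martingale identity on $[t,u]$. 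Your device --- writing $N:=\widehat{\Lambda}-R$ as a non-negative martingale (via Assumption~\ref{ass:financing}) with zero mean, hence identically zero --- delivers \eqref{eq:owp} and uniform integrability (since $R$ is a closed martingale) in a single stroke, and makes the role of the financing condition explicit; the paper's proof does not invoke Assumption~\ref{ass:financing} at all.

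There is, however, a gap in your final step. To obtain the supermartingale property of $\widehat{X}\widehat{Y}$ you need $\mathbb{E}\bigl[\int_{s}^{t}(\widehat{c}_{u}-f_{u})\widehat{Y}_{u}\ud u\,\big|\,\mathcal{F}_{s}\bigr]\geq 0$ for all $s\leq t$, and you propose to extract this from the financing condition ``for the optimal strategy''. But at the optimum the financing inequality at any time is exactly the equality \eqref{eq:owp} you have just proved, so it carries no additional sign information on finite intervals $[s,t]$; the optimal consumption may well lie below the income rate on stochastic intervals, so the integrand is not pointwise non-negative. The paper handles this point by a different argument: it asserts the supermartingale property of $\widehat{X}\widehat{Y}$ directly from the inclusion $\mathcal{Y}\subseteq\mathcal{Y}^{0}$, i.e.\ from $\widehat{Y}$ being a wealth deflator. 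You should either adopt that route or supply an independent argument; the financing condition alone does not yield it.
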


\begin{proof}

Recall the saturated budget constraint equality in \eqref{eq:obc}.
It simplifies notation if we take $x=y=1$, and is without loss of
generality: although $y=u^{\prime}(x)$ in \eqref{eq:obc}, one can
always multiply the utility function by an arbitrary constant so as
to ensure that $u^{\prime}(1)=1$.  We thus have the optimal budget
constraint
\begin{equation}
\mathbb{E}\left[\int_{0}^{\infty}(\widehat{c}_{t} -
f_{t})\widehat{Y}_{t}\ud t\right] =1,
\label{eq:obc1}
\end{equation}
for $\widehat{c}\equiv\widehat{c}(1)\in\mathcal{A}$ and
$\widehat{Y}\equiv\widehat{Y}(1)\in\mathcal{Y}$. Since
$\widehat{c}\in\mathcal{A}$, we know there exists an optimal wealth
process $\widehat{X}\equiv\widehat{X}(1)$ and an associated optimal
trading strategy $\widehat{H}$, such that $\widehat{X}\geq 0$ and such
that
$\widehat{\Lambda}:=\widehat{X}\widehat{Y} +
\int_{0}^{\cdot}(\widehat{c}_{s} - f_{s})\widehat{Y}_{s}\ud s$ is a
supermartingale over $[0,\infty)$. The supermartingale condition, by
the same arguments that led to the derivation of the budget constraint
in Lemma \ref{lem:bc}, leads to the inequality
$\mathbb{E}\left[\int_{0}^{\infty}(\widehat{c}_{t} -
  f_{t})\widehat{Y}_{t}\ud t\right] \leq1$ instead of the equality
\eqref{eq:obc1}. Similarly, if the supermartingale is strict, we get a
strict inequality in place of \eqref{eq:obc1}. We thus deduce that
$\widehat{\Lambda}$ must be a martingale over $[0,\infty)$. We shall
show that this extends to $[0,\infty]$, along with the other claims in
the proposition.

Since $\mathcal{Y}\subseteq\mathcal{Y}^{0}$, $\widehat{Y}$ is also a
wealth deflator, so the (non-negative c\`adl\`ag) deflated wealth
process $\widehat{X}\widehat{Y}$ is a is a non-negative c\`adl\`ag
supermartingale, and thus by \cite[Corollary 5.2.2]{ce15} converges to
an integrable limiting random variable
$\widehat{X}_{\infty}\widehat{Y}_{\infty} :=
\lim_{t\to\infty}\widehat{X}_{t}\widehat{Y}_{t}$ (and moreover
$\widehat{X}_{t}\widehat{Y}_{t}\geq
\mathbb{E}[\widehat{X}_{\infty}\widehat{Y}_{\infty}],\,t\geq 0$). The
integral in $\widehat{\Lambda}$ clearly also converges to an
integrable random variable, by virtue of the budget constraint. Thus,
$\widehat{\Lambda}$ also converges to an integrable random variable
$\widehat{\Lambda}_{\infty}:=\widehat{X}_{\infty}\widehat{Y}_{\infty}+
\int_{0}^{\infty}(\widehat{c}_{t}-f_{t})\widehat{Y}_{t}\ud t$. By
\cite[Theorem I.13]{protter}, the extended martingale over
$[0,\infty]$, $(\widehat{\Lambda}_{t})_{t\in[0,\infty]}$ is then
uniformly integrable, as claimed.

The martingale condition gives
\begin{equation*}
\mathbb{E}\left[\widehat{X}_{t}\widehat{Y}_{t} +
\int_{0}^{t}(\widehat{c}_{s}-f_{s})\widehat{Y}_{s}\ud s\right] = 1.
\quad 0\leq t <\infty.
\end{equation*}
Since $\widehat{X}\widehat{Y}$ is non-negative, while the integral on
the left-hand-side is bounded below by an integrable random variable
(due to the bound in \eqref{eq:fbound} on the income stream and the
integrability of $\widehat{Y}$), taking the limit as $t\to\infty$,
using the Fatou lemma and utilising \eqref{eq:obc1} yields
$\lim_{t\to\infty}\mathbb{E}[\widehat{X}_{t}\widehat{Y}_{t}] \leq
0$. But, since $\widehat{X}\widehat{Y}$ is non-negative, we must in
fact have
\begin{equation*}
\lim_{t\to\infty}\mathbb{E}[\widehat{X}_{t}\widehat{Y}_{t}] = 0,
\end{equation*}
so that $\widehat{X}\widehat{Y}$ is a potential, as claimed.

Using the uniform integrability of $\widehat{\Lambda}$ and taking the
limit as $t\to\infty$ in
$\mathbb{E}[\widehat{\Lambda}_{t}]=1,\,t\geq 0$, we have
\begin{equation*}
1 = \lim_{t\to\infty}\mathbb{E}[\widehat{\Lambda}_{t}] =
\mathbb{E}\left[\lim_{t\to\infty}\widehat{\Lambda}_{t}\right] =
\mathbb{E}[\widehat{X}_{\infty}\widehat{Y}_{\infty}] + 1,
\end{equation*}
on using \eqref{eq:obc1}. Hence, we get
$\mathbb{E}[\widehat{X}_{\infty}\widehat{Y}_{\infty}]=0$ and, since
$\widehat{X}_{\infty}\widehat{Y}_{\infty}$ is non-negative, we deduce
that $\widehat{X}_{\infty}\widehat{Y}_{\infty}=0$, almost surely as
claimed.

We can now assemble these ingredients to arrive at the optimal wealth
process formula \eqref{eq:owp}. Applying the martingale condition
again, this time over $[t,u]$ for some $t\geq 0$, we have
\begin{equation*}
\mathbb{E}\left[\left.\widehat{X}_{u}\widehat{Y}_{u} +
\int_{0}^{u}(\widehat{c}_{s}-f_{s})\widehat{Y}_{s}\ud
s\right\vert\mathcal{F}_{t}\right] = \widehat{X}_{t}\widehat{Y}_{t} +
\int_{0}^{t}(\widehat{c}_{s}-f_{s})\widehat{Y}_{s}\ud
s, \quad 0\leq t\leq u <\infty.
\end{equation*}
Taking the limit as $u\to\infty$ and using the uniform integrability
of $\widehat{\Lambda}$ we obtain
\begin{equation*}
\mathbb{E}\left[\left.\lim_{u\to\infty}\left(\widehat{X}_{u}\widehat{Y}_{u} +
\int_{0}^{u}(\widehat{c}_{s}-f_{s})\widehat{Y}_{s}\ud
s\right)\right\vert\mathcal{F}_{t}\right] = \widehat{X}_{t}\widehat{Y}_{t} +
\int_{0}^{t}(\widehat{c}_{s}-f_{s})\widehat{Y}_{s}\ud
s, \quad t\geq 0,
\end{equation*}
which, on using $\widehat{X}_{\infty}\widehat{Y}_{\infty}=0$,
re-arranges to \eqref{eq:owp}, and the proof is complete.
  
\end{proof}

We can now complete the proof of the concrete duality theorem.

\begin{proof}[Proof of Theorem \ref{thm:cwrtid}]

Given the definitions of the sets $\mathcal{C}(x)$ and
$\mathcal{D}(y)$ in \eqref{eq:Cx} and \eqref{eq:Dy}, respectively, and
the identification of the abstract value functions in \eqref{eq:vfabs}
and \eqref{eq:dvfabs} with their concrete counterparts in
\eqref{eq:pvf} and \eqref{eq:dvf}, Theorem \ref{thm:adt} implies all
the assertions of Theorem \ref{thm:cwrtid}, with the exception of the
optimal wealth process formula \eqref{eq:owp} and the uniformly
integrable martingale property of $\widehat{X}(x)\widehat{Y}(y) +
\int_{0}^{\cdot}(\widehat{c}_{s}(x)-f_{s})\widehat{Y}_{s}(y)\ud s$,
which are established by Proposition \ref{prop:owp}.

\end{proof}

It remains to prove Proposition \ref{prop:dense}, from which Theorem
\ref{thm:dense} will follow.

\begin{proof}[Proof of Proposition \ref{prop:dense}]

With $x=y=1$, from Lemma \ref{lem:bc} we know that the budget
constraint \eqref{eq:bc} holds for all $Y\in\widetilde{\mathcal{Y}}$
and all $c\in\mathcal{A}$, so we have the implication
\begin{equation}
Y\in\widetilde{\mathcal{Y}} \implies
\mathbb{E}\left[\int_{0}^{T}(c_{t} - f_{t})Y_{t}\ud t\right]\leq 1, \quad
\forall\,c\in\mathcal{A}.
\label{eq:forwardY}
\end{equation}
Invoking the financing condition of Asumption \ref{ass:financing} with
$Y\in\widetilde{\mathcal{Y}}$, we have the reverse
implication. Translating the resulting equivalence into the abstract
notation on the measure space $(\mathbf{\Omega},\mathcal{G},\mu)$, we
have the analogue of \eqref{eq:dualC}, with $\widetilde{\mathcal{D}}$
in place of $\mathcal{D}$:
\begin{equation*}
g\in\mathcal{C} \iff \langle g-f,h\rangle \leq 1, \quad \forall\,
h\in\widetilde{\mathcal{D}}.  
\end{equation*}

An examination of the abstract duality proof shows that it was crucial
to establish that the abstract dual domain $\mathcal{D}$ was closed,
as in Lemma \ref{lem:Dclosed}, and that this was done via
supermartingale convergence results, where we found a supermartingale
$Y$ such that deflated wealth plus cumulative deflated consumption
over income, $XY+\int_{0}^{\cdot}(c_{s}-f_{s})\ud s$, was also a
supermartingale, so we could conclude that $Y\in\mathcal{Y}$ (because
of the definition of $\mathcal{Y}$) and hence that
$h=\zeta Y\in\mathcal{D}$. This argument would fail with
$\widetilde{\mathcal{D}}$ in place of $\mathcal{D}$, because the
limiting supermartingale in the Fatou convergence argument is known
only to be a supermartingale, and cannot be shown to be a discounted
local martingale deflator $Y\in\widetilde{\mathcal{Y}}$.

However, if we enlarge $\widetilde{\mathcal{D}}$
to its closure, this is automatically closed, and if we define an abstract
dual value function by
\begin{equation*}
\tilde{v}(y) :=
\inf_{h\in\cl(\widetilde{\mathcal{D}}(y))}\int_{\mathbf{\Omega}}\left(V(h)
+ fh\right)\ud\mu, \quad y>0,  
\end{equation*}
then the rest of the abstract duality proof goes through unaltered,
and by conjugacy of the abstract primal and dual value functions, the
dual function $\tilde{v}(\cdot)$ coincides with $v(\cdot)$. The dual
minimiser lies in $\cl(\widetilde{\mathcal{D}}(y))$, and this
establishes the proposition.

\end{proof}

\begin{proof}[Proof of Theorem \ref{thm:dense}]

Furnished with Proposition \ref{prop:dense}, we have established the
result, due to the one-to-one correspondence between
$\widetilde{\mathcal{D}}$ and $\widetilde{\mathcal{Y}}$.
  
\end{proof}

\subsection{The finite horizon case}
\label{subsec:finiteh}

Our duality results work equally well for finite horizon versions of
the problem \eqref{eq:pvf}, with or without a terminal wealth
objective, by making suitable adjustments to the measure $\kappa$, as
indicated in the examples which follow.

\begin{example}[The finite horizon pure consumption problem]
\label{examp:fhcp}
    
With $T\in(0,\infty)$ a fixed terminal date, one has the finite
horizon version of \eqref{eq:pvf}, without a terminal wealth
objective:
$u(x) := \sup_{c\in\mathcal{A}(x)}\mathbb{E}\left[
  \int_{0}^{T}U(c_{t})\ud\kappa_{t}\right]$, where $\kappa$ is again
discounted Lebesgue measure, this time over $[0,T]$. The budget
constraint and dual problem are then the same as in \eqref{eq:bc} and
\eqref{eq:dvf}, with a finite upper integration limit set to $T$, and
once again $\mu:=\kappa\times\mathbb{P}$ is the product measure on the
abstract formulation of the problems.

An examination of the duality proof in the infinite horizon problem
shows that the boundedness in $L^{1}(\mu)$ of the dual domain, and
hence the finiteness of the integrals in Lemma \ref{lem:Dbounded}, was
used on numerous occasions. In the finite horizon case, it is easier
to obtain corresponding results, regardless of whether the interest
rate is strictly positive. With $h=\zeta Y\in\mathcal{D}$, we obtain
$\int_{\mathbf{\Omega}}h\ud\mu \leq T <\infty$, as the reader can
verify.
The finite horizon pure consumption problem thus affords an easier
duality proof. The results of Theorem \ref{thm:cwrtid} hold with the
upper limit of time integration set to $T<\infty$, and the optimal
deflated wealth plus cumulative deflated consumption over income is
simply a uniformly integrable martingale over $[0,T]$.

We observe that we can in fact write the problem as an infinite
horizon problem provided we define the measure $\kappa$
appropriately. So, if we set
\begin{equation*}
\kappa_{t} = \frac{1}{\delta}\left((1-\exp(-\delta t))\mathbbm{1}_{[0,T)}(t)
+ (1-\exp(-\delta T))\mathbbm{1}_{[T,\infty)}(t)\right), \quad \zeta_{t} =
\exp(\delta t), \quad t\geq 0,    
\end{equation*}
then the primal and dual value functions, as well as the budget
constraint, take the same form as in the infinite horizon problem (so
the infinite horizon duality proof actually goes through without
alteration). This observation will be useful in the example which
follows.

\end{example}

\begin{example}[The finite horizon consumption and terminal wealth
  problem]
\label{examp:fhctwp}

In a similar manner to Example \ref{examp:fhcp} we can consider the
problem with a terminal wealth objective:
$u(x) := \sup_{c\in\mathcal{A}(x)}\mathbb{E}\left[
\int_{0}^{T}U(c_{t})\ud\kappa_{t} + U(X_{T})\right]$, for which the
budget constraint is
$\mathbb{E}\left[X_{T}Y_{T} + \int_{0}^{T}(c_{t} - f_{t})Y_{t}\ud
  t\right]$ and the dual problem is
$v(y) := \inf_{Y\in\mathcal{Y}(y)}\mathbb{E}\left[
\int_{0}^{T}\left(V(\zeta_{t}Y_{t}) +
f_{t}\zeta_{t}Y_{t}\right)\ud\kappa_{t} + V(Y_{T})\right]$.

As in the last part of Example \ref{examp:fhcp}, if we define
$\kappa,\zeta$, and also $f$, appropriately, we can express the
problems and budget constraint as infinite horizon problems, for which
the duality results will hold. Thus, if we write
\begin{equation*}
\kappa_{t} = \frac{1}{\delta}(1-\exp(-\delta t))\mathbbm{1}_{[0,T)}(t)
+ \left(\frac{1}{\delta}(1-\exp(-\delta T)) +
1\right)\mathbbm{1}_{[T,\infty)}(t), \quad t\geq 0,       
\end{equation*}
along with
\begin{equation*}
  \zeta_{t} = \left\{\begin{array}{ll}\exp(\delta t), & t\in[0,T), \\
1, & t\in[T,\infty), \end{array}\right. \quad f_{t}=
a_{t}N_{t}\mathbbm{1}_{\{t<T\}}, \quad t\geq 0, 
\end{equation*}
then then the primal and dual value functions, as well as the budget
constraint, take the same form as in the infinite horizon problem.

\end{example}

\section{Analysis of the Davis-Vellekoop example}
\label{sec:dve}


In this section we consider and numerically solve the example of
\cite{dv09}, as described in Section \ref{subsec:canonical}. We first
explore ramifications of the duality results, before a numerical
solution of the pre-income termination HJB equation, when adopting a
stochastic control approach.

The wealth dynamics are
\begin{equation}
\ud X_{t} =  (rX_{t} - c_{t} + aN_{t})\ud t + \sigma\pi_{t}(\lambda \ud
t + \ud W_{t}), \quad X_{0}=x,
\label{eq:dvdyn}
\end{equation}
where $\pi:=HS$ is the wealth invested in the stock, and we have
recorded in \eqref{eq:dvdyn} that the income rate $f$ is given by
$f_{t} = aN_{t},\,t\geq 0$, with $a\geq 0$ constant.

By Theorem \ref{thm:dense}, the space of consumption deflators
coincides with the closure of the space of discounted local martingale
deflators, so without loss of generality we use deflators given by
\begin{equation}
Y_{t} = y\e^{-rt}\mathcal{E}(-\lambda W-\gamma\cdot M)_{t} =
y\e^{-rt}Z_{t} = y\e^{-rt}Z^{(0)}_{t}\mathcal{E}(-\gamma\cdot M)_{t}, 
\quad y>0, \quad t\geq 0,
\label{eq:Ygamma}
\end{equation}
for c\`agl\`ad adapted processes $\gamma$ satisfying
$\gamma>-1,\,\gamma<+\infty$ almost surely, where $M$ the martingale
in \eqref{eq:M}, $Z\in\mathcal{Z}$ is a local martingale deflator and
$Z^{(0)}:=\mathcal{E}(-\lambda W)_{t}$ is the martingale deflator in
the absence of the income, that is, in the underlying Black-Scholes
market.


The supermartingale of \eqref{eq:supermartingale} is given as
\begin{equation}
\Lambda_{t} := X_{t}Y_{t} + \int_{0}^{t}(c_{s}-f_{s})Y_{s}\ud s = xy +
\int_{0}^{t}Y_{s}(\sigma\pi_{s} - \lambda X_{s})\ud W_{s} -
\int_{0}^{t}X_{s-}Y_{s-}\gamma_{s}\ud M_{s}, \quad t\geq 0. 
\label{eq:Lambda}
\end{equation}

We take $U(\cdot)$ to be a power utility: 
$U(x) = \frac{x^{p}}{p}$, $x\geq 0$, $p\in(0,1)$.
The primal value function is defined as in \eqref{eq:pvf}.
The convex conjugate of $U(\cdot)$ is
$V(y):=-y^{q}/q,\,y>0$, $q:=-p/(1-p)$.
Given the structure of the deflators in \eqref{eq:Ygamma}, the dual
value function is given by
\begin{equation*}
v(y) :=
\inf_{Z\in\mathcal{Z}}\mathbb{E}\left[\int_{0}^{\infty}\e^{-\delta
t}\left(V(yZ_{t}\e^{(\delta-r)t}) +
yf_{t}Z_{t}\e^{(\delta-r)t}\right)\ud t\right], \quad y>0.
\end{equation*}
Assume the dual minimiser is given  by
$\widehat{Z} :=\mathcal{E}(-\lambda W - \widehat{\gamma}\cdot M)$, for
some optimal integrand $\widehat{\gamma}$ in \eqref{eq:Ygamma}. For
use below, define the non-negative martingales $J,F$ by
\begin{equation*}
J_{t} := \mathbb{E}\left[\left.
\int_{0}^{\infty}\e^{-qrs-\delta(1-q)s}\widehat{Z}^{q}_{s}\ud
s\right\vert\mathcal{F}_{t}\right], \quad F_{t} := \mathbb{E}\left[\left.
\int_{0}^{\infty}\e^{-rs}f_{s}\widehat{Z}^{q}_{s}\ud
s\right\vert\mathcal{F}_{t}\right] \quad t\geq 0. 
\end{equation*}

Using Theorem \ref{thm:cwrtid}, and in particular
\eqref{eq:pdc}, the optimal consumption process is given by
\begin{equation}
(\widehat{c}_{t}(x))^{-(1-p)} = u^{\prime}(x)\e^{(\delta
- r)t}\widehat{Z}_{t}, \quad t\geq 0. 
\label{eq:chat}
\end{equation}
By \eqref{eq:obc} the optimisers satisfy the saturated budget
constraint
\begin{equation}
\mathbb{E}\left[\int_{0}^{\infty}(\widehat{c}_{t}(x) -
  f_{t})\e^{-rt}\widehat{Z}_{t}\ud t\right] = x.
\label{eq:sbc}
\end{equation}
The relations \eqref{eq:chat} and \eqref{eq:sbc} yield
\begin{equation}
\widehat{c}_{t}(x) = \left(\frac{x +
F_{0}}{J_{0}}\right)\e^{-(\delta-r)(1-q)t}\widehat{Z}^{-(1-q)}_{t},
\quad t\geq 0. 
\label{eq:con}
\end{equation}
Using \eqref{eq:owp}, the optimal wealth process is given by
\begin{equation*}
\e^{-rt}\widehat{X}_{t}(x)\widehat{Z}_{t} =
\left(\frac{x+F_{0}}{J_{0}}\right)\mathbb{E}\left[\left.\int_{t}^{\infty}
\e^{-qrs-\delta(1-q)s}\widehat{Z}^{q}_{s}\ud
s\right\vert\mathcal{F}_{t}\right], \quad t\geq 0.  
\end{equation*}
More pertinently, the optimal martingale $\widehat{\Lambda}$,
corresponding to the process in \eqref{eq:Lambda} at the optimum, is
computed as
\begin{equation*}
\widehat{\Lambda}_{t} := \e^{-rt}\widehat{X}_{t}(x)\widehat{Z}_{t} +
\int_{0}^{t}(\widehat{c}_{s}(x)-f_{s})\e^{-rs}\widehat{Z}_{s}\ud s =
\left(\frac{x+F_{0}}{J_{0}}\right)J_{t} - F_{t}, \quad t\geq 0,
\end{equation*}
so is indeed a martingale.

By martingale representation, $\widehat{\Lambda}$ will have a
stochastic integral representation which, without loss of generality,
can be written in the form
\begin{equation*}
\widehat{\Lambda}_{t} = x +
\int_{0}^{t}e^{-rs}\widehat{Z}_{s}\widehat{X}_{s}(x)(\varphi_{s}
-q\lambda)\ud W_{s} +
\int_{0}^{t}\e^{-rs}\widehat{Z}_{s-}\widehat{X}_{s-}(x)\beta_{s}\ud
M_{s}, \quad t\geq 0,  
\end{equation*}
for some integrands $\varphi,\beta$. Comparing with the representation
in \eqref{eq:Lambda} at the optimum yields the optimal trading strategy in
terms of the optimal portfolio proportion
$\widehat{\theta}:=\widehat{\pi}/\widehat{X}(x)$ and the optimal
integrand $\widehat{\gamma}$ in the form
\begin{equation}
\widehat{\theta}_{t} := \frac{\widehat{\pi}_{t}}{\widehat{X}_{t}(x)} =
\frac{\lambda}{\sigma(1-p)} + \frac{\varphi_{t}}{\sigma}, \quad
\widehat{\gamma}_{t} = -\beta_{t}, \quad t\geq 0.
\label{eq:inv}  
\end{equation}
In particular, the process $\varphi$ records the correction to the
Merton-type strategy $\lambda/(\sigma(1-p))$.

This is as far as one can go without computing explicitly the dual
minimiser $\widehat{Z}$, which is impossible in closed form. We thus
turn to a numerical solution of the primal pre-income termination
problem via the associated HJB equation. For this, we first state the
results for the no-income and perpetual income problems.

\subsection{No-income and perpetual income cases}
\label{subsec:special}

There are well-known closed form solutions for the special cases where
there is no income ($f\equiv 0$) or the income is perpetual
($f_{t}=a,\,t\geq 0$).

In the case with \textit{perpetual income}, define the risk-adjusted
value of the perpetual income stream given by
$\mathbb{E}\left[\int_{0}^{\infty}\e^{-rt}f_{t}Z^{(0)}_{t}\ud
  t\right]=a/r$. It is well known that (see \cite{dv09} for example)
the value function for the non-terminating income problem is
$u_{\infty}:(-a/r,\infty)\to\mathbb{R}$, given by
\begin{equation*}
u_{\infty}(x) = u_{0}\left(x + \frac{a}{r}\right), \quad x+\frac{a}{r} >0. 
\end{equation*}
where $u_{0}:\mathbb{R}_{+}\to\mathbb{R}$ is the Merton no-income
value function, given by
\begin{equation}
u_{0}(x) =  K^{-(1-p)}\frac{x^{p}}{p}, \quad K :=
\frac{1}{1-p}\left(\delta - rp + \frac{1}{2}q\lambda^{2}\right). 
\label{eq:mertonvf}
\end{equation}
A necessary and sufficient condition for a well-posed problem is
$K>0$.

The optimal consumption and investment processes in the perpetual
income problem are given by
\begin{equation}
c^{(\infty)}_{t} = K\left(X^{(\infty)}_{t} + \frac{a}{r}\right), \quad
\pi^{(\infty)}_{t} = \frac{\lambda}{\sigma(1-p)}\left(X^{(\infty)}_{t} +
\frac{a}{r}\right), \quad t\geq 0,
\label{eq:mertoncpi}    
\end{equation}
where $X^{(\infty)}$ is the optimal wealth process, given by
\begin{equation}
X^{(\infty)}_{t} + \frac{a}{r} = \left(x + \frac{a}{r}\right)
\e^{(r-\delta)(1-q)t}\left(Z^{(0)}_{t}\right)^{-(1-q)}, \quad
t\geq 0.   
\label{eq:mertonX}
\end{equation}
The optimal strategies $c^{(0)},\pi^{(0)}$ and wealth process $X^{(0)}$
in the no-income problem are recovered by letting $a\downarrow 0$ in
the solutions for the perpetual income case.

Naturally, the value function for the non-terminating income problem
is defined for negative values of initial wealth. This is the concrete
manifestation of the fact that the agent borrows against the known
present value of future income, and implements the no-income optimal
strategy with the increased initial capital $x+a/r$. It is precisely
this strategy that is not available to the agent when the income
terminates randomly.

\subsection{The terminating income case}
\label{subsec:tic}

In the case with randomly terminating income, we can immediately bound
the achievable utility between that for the problems with no income
and perpetual income, so the primal value function $u(\cdot)$ in
\eqref{eq:pvf} satisfies the bounds
\begin{equation}
u_{0}(x)\leq u(x)\leq u_{0}\left(x+\frac{a}{r}\right), \quad x > 0.
\label{eq:npb}  
\end{equation}
In particular, $u(0)$ is finite.

Furthermore, the bounds in \eqref{eq:npb} imply that, as $x\to\infty$,
the value functions $u_{0}(\cdot)$, $u(\cdot)$ and $u_{\infty}(\cdot)$ all
coalesce at large values of wealth, as do their derivatives, and the
Inada conditions for $u_{0}(\cdot)$ at infinity yield
that the derivative of the primal value function at infinity is given by
\begin{equation}
u^{\prime}(\infty) := \lim_{x\to\infty}u^{\prime}(x) = 0,
\label{eq:upinfinity}
\end{equation}
in accordance with our earlier duality results.

More pertinently, we thus have an approximation for the primal value
function at large values of initial wealth:
\begin{equation}
u_{0}(x) \approx u(x) \approx u_{\infty}(x), \quad \mbox{as
$x\to\infty$}. 
\label{eq:approx}
\end{equation}
This approximation is useful in a numerical solution of the primal HJB
equation for the pre-income termination component of the value
function, as we describe shortly.

\subsubsection{HJB equation for the pre-income termination
problem} 
\label{subsubsec:hjbeqn}

It is manifestly the case that the agent will adopt the Merton
no-income optimal strategy as soon as the income terminates, as was
observed by \cite{dv09}. Thus, the maximal expected utility process
must satisfy
\begin{equation}
u(\widehat{X}_{t}) = N_{t}u_{1}(\widehat{X}_{t}) +
(1-N_{t})u_{0}(\widehat{X}_{t}), \quad t\geq 0,  
\label{eq:eup}
\end{equation}
where $\widehat{X}$ is the optimal wealth process, and where
$u_{1}:\mathbb{R}_{+}\to\mathbb{R}$ is the value function of the
\textit{random horizon control problem}
\begin{equation}
u_{1}(x) = \sup_{(\pi,c)\in\mathcal{H}(x)}\mathbb{E}\left[
\int_{0}^{\tau}U(c_{t})\ud\kappa_{t} +
\exp(-\delta\tau)u_{0}(X_{\tau})\right], \quad x>0,
\label{eq:rhcp}
\end{equation}
subject to dynamics over $[0,\tau)$ given by \eqref{eq:dvdyn}, so that
in particular, the income rate is the constant $a\geq 0$ up to the
termination time.

The optimal consumption process must decompose according to
\begin{equation*}
\widehat{c}_{t} = N_{t}c^{(1)}_{t} +
(1-N_{t})c^{(0)}(\widehat{X}_{t}), \quad t\geq 0,  
\end{equation*}
where $c^{(0)}(\widehat{X})=K\widehat{X}$ is the optimal Merton
no-income feedback control function, and $c^{(1)}$ denotes the optimal
consumption process up to the random time at which the income
terminates, and we expect to have an associated feedback control
function $c^{(1)}(\cdot)$, associated with the HJB equation for the
random horizon control problem in \eqref{eq:rhcp}, such that
$c^{(1)}_{t}=c^{(1)}(\widehat{X}_{t}),\,t\in[0,\tau)$.
Similar remarks apply to the optimal investment strategy.

One can write down the HJB equation associated with the value
functions $u_{1}(\cdot)$, by examining the expected utility process in
\eqref{eq:eup}, requiring it to be a supermartingale for any
admissible control and a martingale for the optimal control. This
would yield the HJB equation associated with $u_{1}(\cdot)$ (as given
in \cite{dv09})
\begin{equation}
\sup_{(\pi,c)}\left(U(c) + (rx - c + a +
\sigma\lambda\pi)u_{1}^{\prime}(x) +
\frac{1}{2}\sigma^{2}\pi^{2}u_{1}^{\prime\prime}(x) - \delta u_{1}(x)
+ \eta(u_{0}(x) - u_{1}(x))\right) = 0.
\label{eq:HJBti}
\end{equation}

\subsubsection{Transformation of the pre-termination control
  problem}
\label{subsubsec:transformation}

An equivalent way to arrive at \eqref{eq:HJBti} is to begin from the
random horizon formulation in \eqref{eq:rhcp} and integrate over the
distribution of $\tau$, using its density function and the
independence of $\tau$ from the Brownian motion $W$. After performing
an integration by parts in the term involving the integral of
consumption, we arrive at an infinite horizon control problem to
maximise \textit{utility from consumption and inter-temporal wealth},
and with a perpetual (so, non-terminating) income stream paying at the
constant rate $a\geq 0$. In other words, the value function
$u_{1}(\cdot)$ is expressed in the form
\begin{equation}
u_{1}(x) = \sup_{(\pi,c)\in\mathcal{H}(x)}\mathbb{E}\left[
\int_{0}^{\infty}\exp(-\alpha t)\left(U(c_{t}) +
U_{0}(X_{t})\right)\ud t\right], \quad x>0,
\label{eq:ihitwp}
\end{equation}
with a modified discount factor $\alpha$ and a utility function
$U_{0}:\mathbb{R}_{+}\to\mathbb{R}$ (measuring felicity from
inter-temporal wealth) given respectively by
\begin{equation*}
\alpha := \eta +\delta, \quad U_{0}(\cdot) := \eta u_{0}(\cdot),  
\end{equation*}
with $u_{0}(\cdot)$ the Merton no-income value function, and
\eqref{eq:ihitwp} is subject to wealth dynamics over $[0,\infty)$
given by
\begin{equation*}
\ud X_{t} =  (rX_{t} - c_{t} + a)\ud t + \sigma\pi_{t}(\lambda \ud
t + \ud W_{t}), \quad X_{0}=x,  
\end{equation*}
so of the form in \eqref{eq:dvdyn}, but with no termination of the
income (since we have integrated over all possible values of the
termination time). The HJB equation for the problem in
\eqref{eq:ihitwp} is indeed that in \eqref{eq:HJBti}.

Performing the maximisation in the HJB equation \eqref{eq:HJBti}
yields the optimal feedback control functions
$c^{(1)}(\cdot),\pi^{(1)}(\cdot)$, given by
\begin{equation}
\label{eq:ofc}
c^{(1)}(x) = I(u_{1}^{\prime}(x)), \quad \pi^{(1)}(x) =
-\frac{\lambda}{\sigma}\frac{u_{1}^{\prime}(x)}{u_{1}^{\prime\prime}(x)},
\end{equation}
where $I(\cdot):=(U^{\prime}(\cdot))^{-1}$.


Substituting the feedback control functions the into the Bellman
equation converts it to the non-linear ODE
\begin{equation}
V(u_{1}^{\prime}(x)) + U_{0}(x) + (rx+a)u_{1}^{\prime}(x) -
\frac{1}{2}\lambda^{2}\frac{(u_{1}^{\prime}(x))^{2}}{u_{1}^{\prime\prime}(x)}
- \alpha u_{1}(x) = 0, \quad x\in\mathbb R_{+},
\label{eq:primalode}
\end{equation}
where $V(\cdot)$ is the convex conjugate of $U(\cdot)$. The ODE
\eqref{eq:primalode} has no closed form solution.

HJB equations for problems of utility from consumption and
inter-temporal wealth have been considered by \cite{fgg15}, who proved
regularity of solutions to HJB equations for problems of the sort in
\eqref{eq:ihitwp} with a finite horizon, so
the HJB equations were PDEs as opposed to ODEs, but \cite{fgg15}
remark that the extension to the infinite horizon is not problematic,
and this is reasonable. The other additional ingredient in
\eqref{eq:primalode} and the problem \eqref{eq:ihitwp} compared with
\cite{fgg15} is the presence of the constant income stream, resulting
in the additional linear term $au_{1}^{\prime}(\cdot)$ in the
ODE. This renders impossible the closed form solution of
\eqref{eq:primalode}, but we conjecture that the regularity of the
solution would not be affected, so in this section we make the
assumption that:

\begin{assumption}
\label{ass:smooth}
  
There is a smooth (that is $C^{2}(\mathbb{R}_{+})$) solution to the
HJB equation \eqref{eq:primalode} for power utility with power
$p\in(0,1)$.
  
\end{assumption}

Indeed, we know from the duality theory developed in earlier sections
that the value function of the terminating income problem is
differentiable in the wealth argument, so at worst the second
derivative would exist as a distribution. For these reasons, and in
order to present a numerical solution to the pre-income termination
function $u_{1}(\cdot)$, we invoke Assumption \ref{ass:smooth} in this
section.


\subsubsection{Numerical illustration}
\label{subsubsec:numerics}

We implemented an explicit Runge-Kutta method to solve the primal HJB 
ODE \eqref{eq:primalode}. This method is an extension of the standard 
Euler method, using a specific weighted sum of increments of the value 
function $u_{1}(\cdot)$ in a spatial discretisation to increase the order of 
convergence of the algorithm. The method was initiated at a large value 
of wealth $\overline{x}$ such that the approximation in \eqref{eq:approx} 
was operative, so for the pre-income termination value function we used 
$u_{1}(\overline{x})\approx u_{0}\left(\overline{x} + a/(r + \eta)\right)$.
This allowed for a computation of the first two derivatives of $u_{1}(\cdot)$
at large values of wealth, and then the Runge-Kutta algorithm
proceeds to compute the value function at all values of $x$ down to
zero in the wealth grid.

\begin{figure}[htb]
  
\begin{center}
\begin{minipage}{.5\textwidth}
\centering
\begin{subfigure}[b]{\textwidth}
\includegraphics[width=\textwidth]{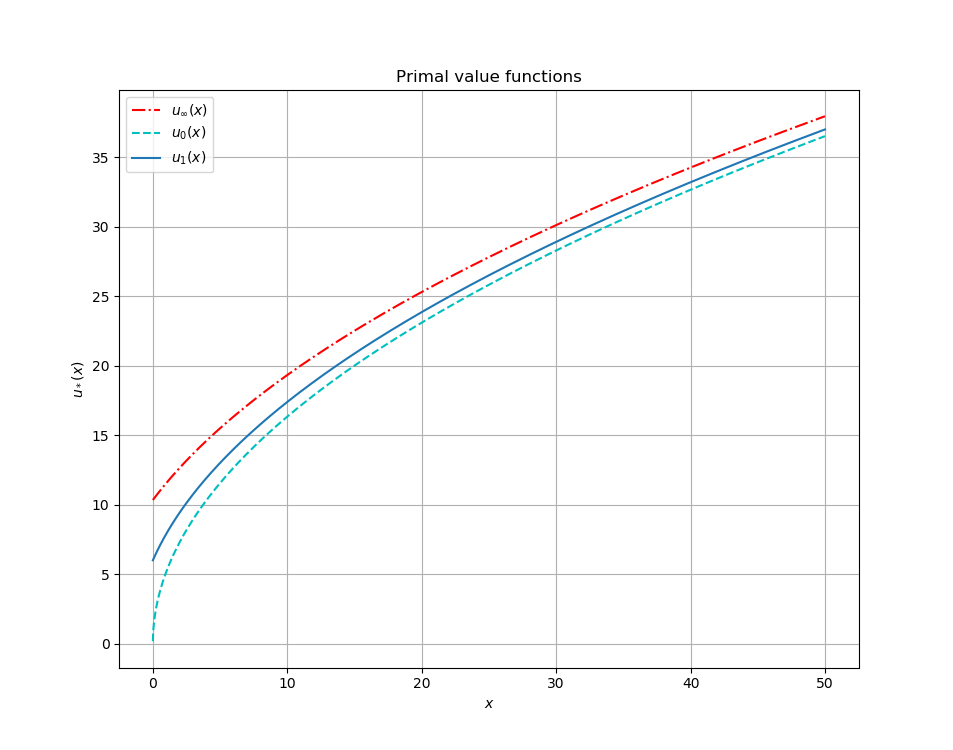}
\caption{Pre-income termination value function.}
\end{subfigure}
\end{minipage}%
\begin{minipage}{.5\textwidth}
\centering
\begin{subfigure}[b]{\textwidth}
\includegraphics[width=\textwidth]{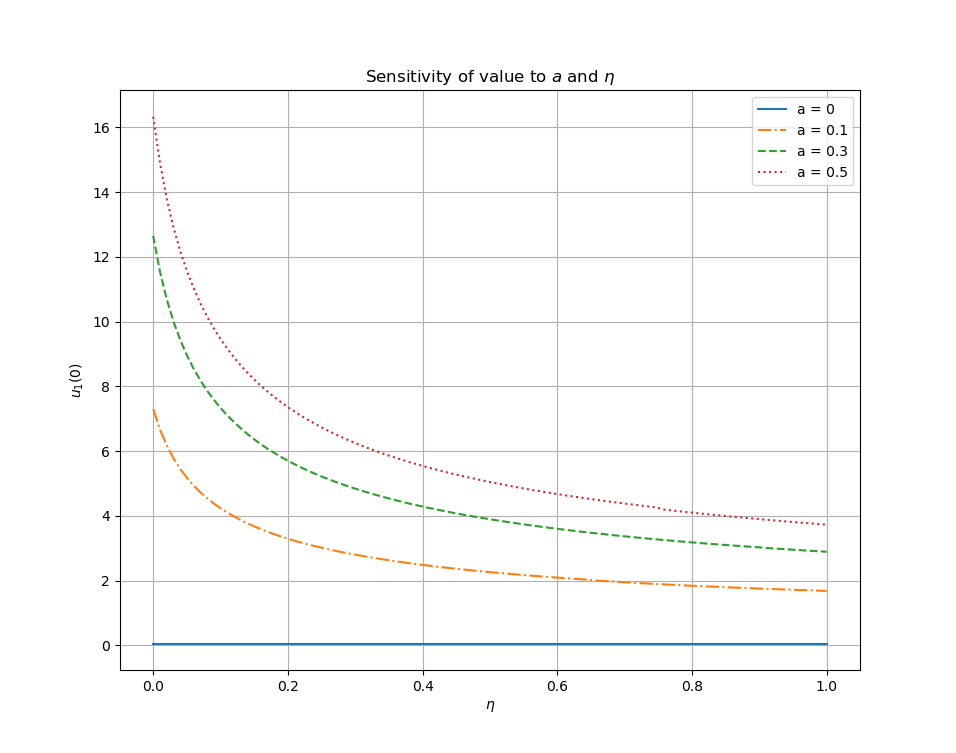}
\caption{Sensitivity of value function to $a$ and $\eta$.} 
  \end{subfigure}
\end{minipage}
\caption{Pre-income termination value function and the no-income and
  perpetual income value functions, and sensitivity of the value
  function to the income rate $a$ and termination intensity~$\eta$ at
  zero wealth. }
\label{fig:pvf}

\end{center}

\end{figure}

Figure \ref{fig:pvf} shows the value function $u_{1}(\cdot)$ as well
as its bounding functions $u_{0}(\cdot)$ and $u_{\infty}(\cdot)$, also
the dependence of the value function on the income rate $a$ and
termination intensity $\eta$ at zero wealth.  The data used are
$\lambda=1$, $\sigma=0.1$, $\delta=0.6$, $\eta=0.1$, $a=0.2$, $p=0.5$.

\begin{figure}[htb]
  
\begin{center}
\begin{minipage}{.5\textwidth}
\centering
\begin{subfigure}[b]{\textwidth}
\includegraphics[width=\textwidth]{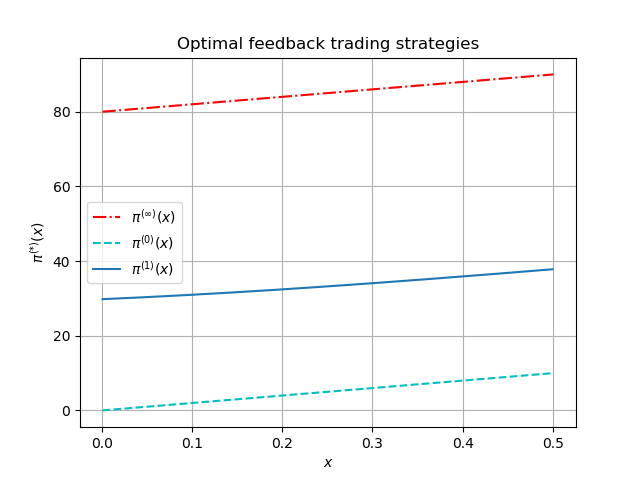}
\caption{Optimal feedback investment.}
\end{subfigure}
\end{minipage}%
\begin{minipage}{.5\textwidth}
\centering
\begin{subfigure}[b]{\textwidth}
\includegraphics[width=\textwidth]{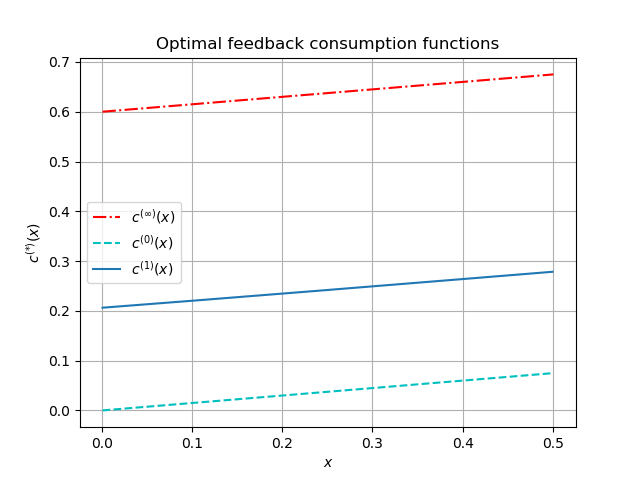}
\caption{Optimal feedback consumption.} 
  \end{subfigure}
\end{minipage}
\caption{Optimal feedback control functions associated with the
  pre-income termination value function $u_{1}(\cdot)$, and the
  no-income and perpetual income value functions $u_{0}(\cdot)$ and
  $u_{\infty}(\cdot)$.} 
\label{fig:ofc}

\end{center}

\end{figure}

Figure \ref{fig:ofc} shows the feedback control functions (for a
smaller range near $x=0$ than in Figure \ref{fig:pvf}). We can see the
finite marginal utility at zero wealth of the pre-income termination
value function and its explicit dependence on the income stream. We
also note that both the optimal feedback controls are linear in
wealth, as expected from \eqref{eq:con} and \eqref{eq:inv}.


\section{Conclusions}
\label{sec:conclusion}

In this paper we have proven a rigorous duality for an infinite
horizon consumption problem with randomly terminating income, thereby
closing the duality gap that arose in \cite{dv09}, in a Black-Scholes
market. The key property of the finiteness of marginal utility at zero
initial wealth emerged, while all the other main tenets of duality
theory were shown to hold. Our results readily extend to finite
horizon versions of the problem as well.

There are still many aspects of such problems that are not well understood.
One is to rigorously analyse the primal and dual HJB equations for the
pre-income termination problem, for which the primal problem involves
maximising utility from both consumption and inter-temporal
wealth.
The other is to fully understand the relations between the rather
different forms of dual characterisation of utility maximisation
problems with random endowment in the literature. We leave these
questions for future research.

{\footnotesize
\section*{Acknowledgements}

We thank two anonymous referees for constructive comments that
improved the paper. Part of this work was carried out during visits by
the second author to Universit\'e Paris Diderot and \'Ecole
Polytechnique. We thank Huy\^en Pham and Nizar Touzi for hospitality,
and are grateful to them and Chris Rogers for valuable discussions.
The last author was supported in part by the EPSRC (UK)  Grant (EP/V008331/1).
}

{\small
\bibliography{dfocwrti_refs}

\begin{thebibliography}{}

\bibitem[\protect\astroncite{Brannath and Schachermayer}{1999}]{bs99}
Brannath, W. and Schachermayer, W. (1999).
\newblock A bipolar theorem for {$L^0_+(\Omega,\mathcal{F},\mathbb P)$}.
\newblock In {\em S\'eminaire de {P}robabilit\'es, {XXXIII}}, volume 1709 of
  {\em Lecture Notes in Math.}, pages 349--354. Springer, Berlin.

\bibitem[\protect\astroncite{Cohen and Elliott}{2015}]{ce15}
Cohen, S.~N. and Elliott, R.~J. (2015).
\newblock {\em Stochastic calculus and applications}.
\newblock Probability and its Applications. Springer, Cham, second edition.

\bibitem[\protect\astroncite{Cvitani{\'c} et~al.}{2001}]{csw01}
Cvitani{\'c}, J., Schachermayer, W., and Wang, H. (2001).
\newblock Utility maximization in incomplete markets with random endowment.
\newblock {\em Finance Stoch.}, 5(2):259--272.

\bibitem[\protect\astroncite{Cvitani\'{c} et~al.}{2017}]{csw17}
Cvitani\'{c}, J., Schachermayer, W., and Wang, H. (2017).
\newblock Erratum to: {U}tility maximization in incomplete markets with random
  endowment [ {MR}1841719].
\newblock {\em Finance Stoch.}, 21(3):867--872.

\bibitem[\protect\astroncite{Delbaen and Schachermayer}{1994}]{ds94}
Delbaen, F. and Schachermayer, W. (1994).
\newblock A general version of the fundamental theorem of asset pricing.
\newblock {\em Math. Ann.}, 300(3):463--520.

\bibitem[\protect\astroncite{Duffie and Zariphopoulou}{1993}]{dz93}
Duffie, D. and Zariphopoulou, T. (1993).
\newblock Optimal investment with undiversifiable income risk.
\newblock {\em Math. Finance}, 3(2):135--148.

\bibitem[\protect\astroncite{Federico et~al.}{2015}]{fgg15}
Federico, S., Gassiat, P., and Gozzi, F. (2015).
\newblock Utility maximization with current utility on the wealth: regularity
  of solutions to the {HJB} equation.
\newblock {\em Finance Stoch.}, 19(2):415--448.

\bibitem[\protect\astroncite{F\"ollmer and Kramkov}{1997}]{fk97}
F\"ollmer, H. and Kramkov, D. (1997).
\newblock Optional decompositions under constraints.
\newblock {\em Probab. Theory Related Fields}, 109(1):1--25.

\bibitem[\protect\astroncite{He and Pag{\`e}s}{1993}]{hp93}
He, H. and Pag{\`e}s, H.~F. (1993).
\newblock Labor income, borrowing constraints, and equilibrium asset prices.
\newblock {\em Econom. Theory}, 3(4):663--696.

\bibitem[\protect\astroncite{Hugonnier and Kramkov}{2004}]{hk04}
Hugonnier, J. and Kramkov, D. (2004).
\newblock Optimal investment with random endowments in incomplete markets.
\newblock {\em Ann. Appl. Probab.}, 14(2):845--864.

\bibitem[\protect\astroncite{Karatzas and Kardaras}{2007}]{kk07}
Karatzas, I. and Kardaras, C. (2007).
\newblock The num\'{e}raire portfolio in semimartingale financial models.
\newblock {\em Finance Stoch.}, 11(4):447--493.

\bibitem[\protect\astroncite{Karatzas et~al.}{1991}]{klsx91}
Karatzas, I., Lehoczky, J.~P., Shreve, S.~E., and Xu, G.-L. (1991).
\newblock Martingale and duality methods for utility maximization in an
  incomplete market.
\newblock {\em SIAM J. Control Optim.}, 29(3):702--730.

\bibitem[\protect\astroncite{Karatzas and Shreve}{1998}]{ks98}
Karatzas, I. and Shreve, S.~E. (1998).
\newblock {\em Methods of mathematical finance}, volume~39 of {\em Applications
  of Mathematics (New York)}.
\newblock Springer-Verlag, New York.

\bibitem[\protect\astroncite{Karatzas and {\v{Z}}itkovi{\'c}}{2003}]{kz03}
Karatzas, I. and {\v{Z}}itkovi{\'c}, G. (2003).
\newblock Optimal consumption from investment and random endowment in
  incomplete semimartingale markets.
\newblock {\em Ann. Probab.}, 31(4):1821--1858.

\bibitem[\protect\astroncite{Kramkov and Schachermayer}{1999}]{ks99}
Kramkov, D. and Schachermayer, W. (1999).
\newblock The asymptotic elasticity of utility functions and optimal investment
  in incomplete markets.
\newblock {\em Ann. Appl. Probab.}, 9(3):904--950.

\bibitem[\protect\astroncite{Kramkov and Schachermayer}{2003}]{ks03}
Kramkov, D. and Schachermayer, W. (2003).
\newblock Necessary and sufficient conditions in the problem of optimal
  investment in incomplete markets.
\newblock {\em Ann. Appl. Probab.}, 13(4):1504--1516.

\bibitem[\protect\astroncite{Monoyios}{2020}]{mmc20}
Monoyios, M. (2020).
\newblock Duality for optimal consumption under no unbounded profit with
  bounded risk.
\newblock arXiv e-print, arXiv:2006.04687.

\bibitem[\protect\astroncite{Mostovyi}{2015}]{most15}
Mostovyi, O. (2015).
\newblock Necessary and sufficient conditions in the problem of optimal
  investment with intermediate consumption.
\newblock {\em Finance Stoch.}, 19(1):135--159.

\bibitem[\protect\astroncite{Mostovyi}{2017}]{most17}
Mostovyi, O. (2017).
\newblock Optimal investment with intermediate consumption and random
  endowment.
\newblock {\em Math. Finance}, 27(1):96--114.

\bibitem[\protect\astroncite{Mostovyi and S\^{\i}rbu}{2020}]{mostsirbu20}
Mostovyi, O. and S\^{\i}rbu, M. (2020).
\newblock Optimal investment and consumption with labor income in incomplete
  markets.
\newblock {\em Ann. Appl. Probab.}, 30(2):747--787.

\bibitem[\protect\astroncite{Pham}{2009}]{pham09}
Pham, H. (2009).
\newblock {\em Continuous-time stochastic control and optimization with
  financial applications}, volume~61 of {\em Stochastic Modelling and Applied
  Probability}.
\newblock Springer-Verlag, Berlin.

\bibitem[\protect\astroncite{Protter}{2005}]{protter}
Protter, P.~E. (2005).
\newblock {\em Stochastic integration and differential equations}, volume~21 of
  {\em Stochastic Modelling and Applied Probability}.
\newblock Springer-Verlag, Berlin.
\newblock Second edition. Version 2.1, Corrected third printing.

\bibitem[\protect\astroncite{Rogers}{2003}]{rogers}
Rogers, L. C.~G. (2003).
\newblock Duality in constrained optimal investment and consumption problems: a
  synthesis.
\newblock In {\em Paris-{P}rinceton {L}ectures on {M}athematical {F}inance,
  2002}, volume 1814 of {\em Lecture Notes in Math.}, pages 95--131. Springer,
  Berlin.

\bibitem[\protect\astroncite{Strasser}{1985}]{strasser85}
Strasser, H. (1985).
\newblock {\em Mathematical theory of statistics}, volume~7 of {\em De Gruyter
  Studies in Mathematics}.
\newblock Walter de Gruyter \& Co., Berlin.
\newblock Statistical experiments and asymptotic decision theory.

\bibitem[\protect\astroncite{Vellekoop and Davis}{2009}]{dv09}
Vellekoop, M. and Davis, M. H.~A. (2009).
\newblock An optimal investment problem with randomly terminating income.
\newblock In {\em Proceedings of the Joint 48th IEEE Conference on Decision and
  Control and 28th Chinese Control Conference}, Shanghai, P.R. China.

\bibitem[\protect\astroncite{\v{Z}itkovi\'{c}}{2002}]{z02}
\v{Z}itkovi\'{c}, G. (2002).
\newblock A filtered version of the bipolar theorem of {B}rannath and
  {S}chachermayer.
\newblock {\em J. Theoret. Probab.}, 15(1):41--61.

\end{thebibliography}
\bibliographystyle{apa}
}

\end{document}